\long\def\longdelete#1{}
\newcommand{\Xpred}{\hat{\mathbf{x}}}
\newcommand{\hatT}{\hat{T}}
\newcommand{\hatt}{\hat{t}}
\newcommand{\X}{\mathbf{x}}
\newcommand{\xhat}{\hat{x}}
\newcommand{\defeq}{\mathrel{\mathop:}=}
\newcommand{\costalg}{Z^{ALG}}
\newcommand{\costopt}{Z^{OPT}}
\newcommand{\errortime}{\varepsilon_{time}}
\newcommand{\errorpos}{\varepsilon_{pos}}
\newcommand{\errorlast}{\varepsilon_{last}}
\newcommand{\PAH}{\textsc{PAH}}
\newcommand{\algNI}{\textsc{LAR-NID}}
\newcommand{\algI}{\textsc{LAR-ID}}
\newcommand{\algT}{\textsc{LAR-Trust}}
\newcommand{\algL}{\textsc{LAR-Last}}
\newcommand{\redesign}{\textsc{Redesign}}
\newcommand{\replan}{\textsc{Replan}}
\newcommand{\algdarNI}{\textsc{LADAR-NID}}
\newcommand{\algdarI}{\textsc{LADAR-ID}}
\newcommand{\algdarT}{\textsc{LADAR-Trust}}
\newcommand{\algdarL}{\textsc{LADAR-Last}}
\newcommand{\algsmartstart}{\textsc{SmartStart}}
\newenvironment{psketch}{%
\proof}{\endproof}
\algrenewcommand\algorithmicrequire{\textbf{Input:}}
\algrenewcommand\algorithmicensure{\textbf{Output:}}
\title{Online TSP with Predictions}
\author{Hsiao-Yu Hu}{Institute of Information Science, Academia Sinica, Taiwan\\Department of Industrial Engineering and Engineering Management,
National Tsing Hua University, Hsinchu 30013, Taiwan}{hsiaoyu.hu@gapp.nthu.edu.tw}{}{}
\author{Hao-Ting Wei}{Department of Industrial Engineering and Operations Research, Columbia University, United States}{hw2738@columbia.edu}{}{}
\author{Meng-Hsi Li}{Department of Industrial Engineering and Engineering Management,
National Tsing Hua University, Hsinchu 30013, Taiwan}{ted19980924@gapp.nthu.edu.tw}{}{}
\author{Kai-Min Chung}{Institute of Information Science, Academia Sinica, Taiwan}{kmchung@iis.sinica.edu.tw}{}{}
\author{Chung-Shou Liao}{Department of Industrial Engineering and Engineering Management,
National Tsing Hua University, Hsinchu 30013, Taiwan}{csliao@ie.nthu.edu.tw}{}{}
\authorrunning{H.\,Y. Hu, H.\,T. Wei, M.\,H. Li, K.\,M. Chung, and C.\,S. Liao} 
\keywords{traveling salesman problem, online algorithms, competitive ratio}
\begin{document}

\nolinenumbers

\maketitle


\longdelete{
With the increasing reliance on online delivery platforms, efficiently designing a route to deal with the requests in an online fashion becomes a primary issue. In this study,  
}

\begin{abstract}
We initiate the study of online routing problems with predictions, inspired by recent exciting results in the area of learning-augmented algorithms.
A learning-augmented online algorithm
which incorporates predictions in a black-box manner to outperform existing algorithms if the predictions are accurate while otherwise maintaining theoretical guarantees 
is a popular framework for overcoming pessimistic worst-case competitive analysis. 

In this study, 
we particularly begin investigating
the classical online traveling salesman problem (OLTSP), where future requests are augmented with predictions. 
Unlike the prediction models in other previous studies, each actual request in the OLTSP, associated with its arrival time and position, may not coincide with the predicted ones, which, as imagined, leads to
a troublesome situation. 
Our main result is to study different prediction models 
and design algorithms to improve the best-known results in the different settings. 
Moreover, we generalize the proposed 
results to the online dial-a-ride problem.

\end{abstract}

\newpage

\section{Introduction}
In many applications, people make decisions without knowing the future, and two approaches are widely used to address this issue: machine learning and online algorithms.
While machine learning methods usually lack theoretical guarantees, online algorithms are often evaluated based on the ratio of the optimal offline cost to the cost achieved by online algorithms over 
worst-case
instances (i.e., the competitive ratio).

In recent years, a rapidly growing field of research known as learning augmentation has attempted to 
merge the above two approaches: 
incorporating machine-learned predictions into online algorithms 
from a theoretical point of view. 
In this line of work, such an algorithm is given some type of prediction 
to the input, but the prediction is not always accurate. The goal of these studies is to design novel algorithms that have the following three properties: (1) achieve good performance when the 
given 
prediction is perfectly accurate, which is called {\it consistency}, (2) maintain worst-case bounds when the prediction is terribly wrong, which is called {\it robustness}, and (3) the performance should not degrade significantly when the prediction is slightly inaccurate, which is called {\it smoothness}. As the competitive ratio of a learning-augmented algorithm depends on errors, we define it as a function 
$c$
of error $\varepsilon$ using 
a given predictor 
such that $c(\varepsilon)$ represents the worst-case competitive ratio given 
that the error is at most $\varepsilon$. Note that the definition of $\varepsilon$ is subject to the problem setting (as shown in Section 2).
Optimization problems that have been studied under this framework include online paging \cite{jiang2020online, lykouris2021competitive, rohatgi2020nearoptimal, wei2020better}, ski-rental \cite{anand2020customizing, gollapudi2019online, kumar2018improving, wei2020optimal}, scheduling \cite{bamas2020learning, im2021nonclairvoyant, lattanzi2020online, kumar2018improving, wei2020optimal}, matching \cite{AGKK20}, bin packing \cite{angelopoulos2019online}, queueing \cite{mitzenmacher2021queues}, secretary \cite{AGKK20, dutting2021secretaries}, covering \cite{bamas2020primal}, bidding \cite{medina2017revenue}, knapsack \cite{im2021online}, facility location \cite{FGGP21}, and graph \cite{azar2022online} problems.

However, 
in addition to 
the above online optimization problems, no existing studies extended the learning-augmented framework to 
online routing problems, accompanied by many real-world applications. 
For example, due to COVID-19, 
people have become more reliant on online food 
delivery 
platforms such as Seamless 
and 
Uber Eats. As a result, optimizing the delivery route 
is 
an important issue for the platform to win over such a huge market. 
Many platforms have used machine learning techniques to improve their 
services, 
but most studies did not provide worst-case guarantees. In this study, we look at how the learning-augmented framework can be applied to 
online routing problems.
Note that 
previous research in online learning also studied the combination of online algorithms and machine learning for routing problems; 
readers may refer to the survey~\cite{blum1998online} for more details. 
However, our study aims at designing a route with predictions which learned from historical 
information rather than 
choosing the best possible 
strategy 
adaptively 
based on some given routes. 
In particular, 
we study the online TSP problem proposed by Ausiello et.al.~\cite{ausiello2001algorithms} 
and then extend to the 
online dial-a-ride problem.

Let the offline optimal cost and the cost achieved by an online algorithm 
be denoted by $\costopt$ and $\costalg$, respectively. 
To measure 
the performance of an learning-augmented online algorithm, we follow the definitions proposed in \cite{lykouris2021competitive,kumar2018improving}. 

\begin{definition}[Consistency]\label{def:con}
An algorithm is said to be $\alpha$-consistent if $c(0)=\alpha$.
\end{definition}
\begin{definition}[Robustness]\label{def:rob}
An algorithm is said to be $\beta$-robust if $c(\varepsilon) \leq \beta$ for 
any 
$\varepsilon$.
\end{definition}
\begin{definition}[Smoothness]\label{def:smo}
Given an $\alpha$-consistent algorithm, 
it is said to be $f(\varepsilon)$-smooth if 
$\costalg \leq \alpha \cdot \costopt + f(\varepsilon)$ holds on any input for some 
continuous
function $f$ such that $f(0) = 0$.
\end{definition}

\noindent{\textbf{The OLTSP.}}
In this study, we first consider one of the classical online combinatorial optimization problems, the online traveling salesman problem (OLTSP), in which the input is a sequence of requests that arrive 
in an online fashion. 
A salesman, the server, is out to visit every request after it arrives and eventually return to the origin 
such 
that the completion time is minimized. 
Ausiello et.al.~\cite{ausiello2001algorithms} showed that the problem has a lower bound of 2 and presented 
an optimal 2-competitive algorithm, called \textsc{Plan-At-Home (PAH)}. 
Meanwhile, 
the currently best polynomial-time algorithm is approximately $2.65$-competitive~\cite{ascheuer2000online}. 

\medskip
\noindent{\textbf{The OLDARP.}}
We also consider the (uncapacitated) online dial-a-ride problem (OLDARP), which is a generalization of the OLTSP and has the same lower bound of 2. 
The key
difference between the two problems is that a server needs to transport each request from its 
pickup position
to its
delivery position. 
An optimal 
$2$-competitive algorithm 
was
proposed in~\cite{ascheuer2000online}.

\medskip
\noindent{\textbf{The OLTSP with Predictions.}}
The first 
question for
developing learning-augmented algorithms is: 
``what kind of predictions 
are required
for the problem ?'' 
We refer to the discussion in \cite{lattanzi2020online} to answer this question: a good prediction should be learnable; that is, it should be possible to learn the prediction from historical data with an appropriate sample complexity. 
For the OLTSP, 
it is natural to forecast its future requests.
To show the learnability, one can set up a learning task and   
exploit some 
different features 
of historical data, e.g., position of the requests, 
arrival time of the requests, and the number of total requests,
to learn a function; for instance, to minimize the mean square error (mse).
In this study, we 
thus 
investigate three models with different types of predictions. 
Notice that 
the significant difference between these predictions
is whether the 
input
scale 
is known 
for learning.
In the first two models, 
intuitively, we predict the whole sequence. 
Formally, each request in the prediction has its 
own predicted arrival time and position. 
We split into two cases: \emph{the sequence prediction without identity} is a predicted sequence with an arbitrary size 
over requests, and \emph{the sequence prediction with identity} is a predicted sequence that contains the exact same number of requests as the actual input. These two sequence predictions represent different learning models; the former one views the sequence of requests as a group, while the latter one examines each of the requests individually.
Last, we consider 
a special prediction model, called 
\emph{the prediction of the last arrival time}, where the server receives the least amount of information 
in 
the three models. The reason of choosing 
such a 
prediction is that the arrival time of the last request actually provides a 
lower bound for the optimal route and that this information is enough to help the server in a polynomial setting. 
In the following sections, we show how to design learning-augmented online algorithms by 
appropriately incorporating 
the predictions into
online algorithms. 
As expected, predicting a sequence of requests provides more information about future events, 
certainly 
leading to a relatively higher cost than predicting a single request. 
That is, 
the choice of predictions can be read as a trade-off between the performance of online algorithms and the sample complexity.

\medskip
\noindent{\textbf{Our Contributions.}}
The key
contribution is to develop three models for the OLTSP in each of which we present a learning-augmented algorithm. 
The results are as follows:
\begin{itemize}
    \item [1.] Consider the sequence prediction without identity. For this arbitrary sequence case, we design an algorithm that is 
    $(1+\lambda)$-consistent and $(3+2/ \lambda)$-robust, where $\lambda \in (0,1]$ is a parameter describing the confidence level in the prediction.
    \item [2.] Consider a restricted model in which the sequence prediction has the same size as the set of actual requests. We propose a $\min \{3, 1 + (2\errortime + 4 \errorpos) / \costopt\}$-competitive algorithm, where $\errortime = max_{i \in [n]}|\hat{t}_i - t_i|$ is the maximum time difference between the predicted and actual requests and $\errorpos = \sum_{i}d(\hat{p}_i, p_i)$ is the accumulated distance between the predicted and actual positions. 
    \item [3.] Consider the prediction of the last arrival time. We design a $\min \{4, 2.5 + |\errorlast| / \costopt \}$-competitive polynomial-time algorithm, where $\errorlast = \hatt_n - t_n$ 
    denotes 
    the difference between the predicted and actual arrival time of the last request.
\end{itemize}
\smallskip
We also study the lower bounds for the OLTSP with predictions:
\smallskip
\begin{itemize}
    \item [1.] Consider the sequence prediction without identity. Any $1$-consistent algorithm has no robustness. We show there is a trade-off between consistency and robustness.
    \item [2.] Consider the sequence prediction with identity. Any $1$-consistent algorithm has a robustness that is at least 2.
    \item [3.] Consider the prediction of the last arrival time. We show that the consistency cannot be better than 2; 
    that is, 
    partial information is not enough for breaking the original lower bound of the OLTSP.
\end{itemize}
In addition, we extend our models to the 
online
dial-a-ride problem.

\section{Preliminaries}\label{premlim}
In this section, we first give 
the 
formal definition of the OLTSP, and 
revisit the \PAH~algorithm, a key ingredient for 
designing our algorithms. Finally, we define error measurement for the 
three
models.

\subsection{The OLTSP}
Recall the metric OLTSP in which the input is a sequence of $n$ requests $\X = (x_1, \ldots, x_n)$
in a metric space $M$. 
Each request $x_i \in \X$ can be represented 
by $x_i = (t_i, p_i)$, where $t_i$ denotes the arrival time
and $p_i$ denotes the position of the request. In addition, we assume the requests are in 
non-decreasing order of arrival time; that is, $t_i \leq t_j$ for any $i < j$, and $t_n$ is the arrival time of the last request in $\X$. A server is required to start at the origin $o$ at time $0$, visit each request $x_i$ in $\X$ at position $p_i$ no earlier than its arrival time $t_i$, and at last come back to the origin $o$, which is also called home. Assuming the server moves with unit speed, the goal is to find a route $T_{\X}$ such that the completion time, denoted by $|T_{\X}|$, is minimized.

\longdelete{
Regarding 
the previous results of the metric OLTSP without predictions, Ausiello et al.~\cite{ausiello2001algorithms} first studied this problem; they showed the lower bound of 2 and proposed an optimal algorithm, \PAH. Note that 
the \PAH~algorithm can yield a competitive ratio of 3 in polynomial time 
using Christofides’ heuristic~\cite{worboys_1985}. 
Ausiello et al.~\cite{ausiello2004algorithms} improved the polynomial-time result to $\frac{7+\sqrt{17}}{4}\approx 2.78$;  
on the other hand, 
Ascheuer et al.~\cite{ascheuer2000online} obtained the currently best 
ratio of 
$\frac{7+\sqrt{13}}{4}\approx 2.65$ by leaving the server idle from time to time smartly.}

\medskip 
\noindent{\textbf{The \PAH~Algorithm~\cite{ausiello2001algorithms}.}}
Before introducing the proposed online algorithms with predictions, we first revisit 
the 
online algorithm, \textsc{Plan-At-Home (PAH)} for the OLTSP in a metric space, reported in~\cite{ausiello2001algorithms}, and see how we can attain 
the robustness of our algorithms. The greedy \PAH~algorithm achieves a competitive ratio of $2$ when assuming the server can access an optimal solution 
to visit a set of released requests. Although the optimal route cannot be computed in polynomial time unless $P=NP$, 
the \PAH~algorithm is $3$-competitive by using Christofides' heuristic~\cite{worboys_1985} to approximate the optimal route. 
The details of the \PAH~algorithm can be referred to the Appendix~\ref{PAH:code}.

\begin{theorem}[\cite{ausiello2001algorithms}, Theorem 4.2]\label{thm:PAH2}
The \PAH~algorithm is $2$-competitive for the OLTSP in a metric space.
\end{theorem}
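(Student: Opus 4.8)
The plan is to show $\costalg \le 2\,\costopt$ for $\PAH$, where $\costopt$ is the optimal offline completion time, by combining two robust lower bounds on $\costopt$ with a description of how the server behaves once the last request is known. The first bound I would record is $\costopt \ge t_n$: the final request $x_n$ cannot be served before time $t_n$, so no feasible route completes earlier. The second is $\costopt \ge L^{\ast}$, where $L^{\ast}$ is the length of a shortest closed walk that starts at the origin $o$, visits every requested position, and returns; any feasible route must traverse at least this distance. I would also keep the sharper family $\costopt \ge t_i + d(o,p_i)$ for every $i$ in reserve, since it is what lets late arrival times ``pay for'' detours. Finally, I would note the shortcutting fact that in a metric space the optimal tour through any subset of the requests has length at most $L^{\ast}$, so \emph{every} route $\PAH$ plans has length at most $L^{\ast} \le \costopt$.

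Next I would pin down $\PAH$'s trajectory from time $t_n$ onward. Since no request arrives after $t_n$, the server simply finishes the route it is currently executing, returns to $o$, and then runs one optimal tour over all requests still unserved, after which it halts. Hence $\costalg = t^{\ast} + |R|$, where $t^{\ast}$ is the moment the server last departs $o$ on its final route $R$, and $|R| \le \costopt$ by the shortcutting fact. It then suffices to control $t^{\ast}$, and I would branch on where the server sits at time $t_n$. If it is at the origin at time $t_n$, it departs on $R$ immediately, so $t^{\ast} = t_n \le \costopt$ and $\costalg \le t_n + |R| \le 2\,\costopt$; this case is immediate.

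The main obstacle is the case in which the server is en route at time $t_n$ and, upon returning to $o$, still faces requests that its current route did not cover, so that it must make one further departure. Here $t^{\ast}$ equals the start time of the current route plus that route's full length, and one can exhibit instances in which $t^{\ast} > \costopt$, so the tempting split ``$t^{\ast} \le \costopt$ and $|R| \le \costopt$'' does not close; indeed bounding the two summands separately only yields a factor of $3$. The resolution I would aim for is a global accounting in which the price of completing the current route and then doubling back for the requests that diverted the server is charged against those requests' arrival times: a late diversion forces $t_n$ (hence $\costopt$) to be large, while a geographically distant diversion forces $L^{\ast}$ (hence $\costopt$) to be large, and these are exactly the quantities that make $t^{\ast}$ large. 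Concretely, I would use the triangle inequality at the server's position on its current tour together with the per-request bound $\costopt \ge t_i + d(o,p_i)$ to establish a trade-off: whenever $t^{\ast}$ is close to $\costopt$ the final route $R$ must be short, and whenever $R$ is long its far requests already inflate $\costopt$, so that $t^{\ast} + |R| \le 2\,\costopt$ throughout. Turning this trade-off into a single clean inequality, rather than the lossy term-by-term estimate, is the step I expect to be the crux, since it is precisely where the argument must be genuinely global in order to reach the tight constant $2$ rather than $3$.
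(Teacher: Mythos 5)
Your lower bounds are the right ones, but the proposal stops short of a proof, and the dynamics you analyze are not \PAH's. You never invoke the rule that defines \PAH: upon an arrival the server compares $d(p_i,o)$ with $d(p(t),o)$ and \emph{aborts} its current route, heading straight home, whenever the new request is farther from the origin than the server currently is. Your description of the behaviour after $t_n$ (``finishes the route it is currently executing, returns to $o$, then runs one optimal tour over all requests still unserved'') is the never-interrupt, \redesign/Ignore-style behaviour. With those dynamics your assembled bounds only yield $t^{\ast}\le t_s+|T|\le 2\,\costopt$ and $|R|\le\costopt$, i.e.\ the factor $3$ you yourself noticed, and the ``global trade-off'' you hope for has no handle: nothing in your setup ties the positions of the requests that diverted the server to the server's own position at their arrival times. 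That tie is exactly what the comparison rule supplies, and the ``single clean inequality'' you defer as the crux is the entire content of the theorem, so this is a genuine gap rather than a stylistic shortcut.

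For comparison (the paper cites this theorem from Ausiello et al.\ rather than reproving it), the actual argument conditions on the state at time $t_n$. (i) If the server is at $o$, then $\costalg\le t_n+|T_{U_{t_n}}|\le 2\,\costopt$. (ii) If it is en route and $d(p_n,o)>d(p(t_n),o)$, it aborts and returns home, so $\costalg\le t_n+d(p(t_n),o)+|T_{U_{t_n}}|< t_n+d(p_n,o)+\costopt\le 2\,\costopt$, using precisely the per-request bound $\costopt\ge t_n+d(o,p_n)$ you kept in reserve; note the abort rule makes $d(p(t_n),o)$ chargeable to $d(p_n,o)$. (iii) If it is en route and $d(p_n,o)\le d(p(t_n),o)$, let $T$ be the current route, begun at $o$ at time $t_s$, and let $Q$ be the requests ignored during $T$. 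Unit speed gives $d(o,p_q)\le d(o,p(t_q))\le t_q-t_s$ for every $q\in Q$ --- this is where the far/close test enters. Now pick $q$ to be the request of $Q$ that the offline optimum serves \emph{first}; after time $t_q$ the optimum still traverses a path of some length $\ell$ from $p_q$ through the rest of $Q$ back to $o$, so $\costopt\ge t_q+\ell$, while \PAH's final tour satisfies $|T_Q|\le d(o,p_q)+\ell\le (t_q-t_s)+(\costopt-t_q)=\costopt-t_s$. Hence $\costalg\le t_s+|T|+|T_Q|\le t_s+\costopt+(\costopt-t_s)=2\,\costopt$, with $|T|\le\costopt$ by your shortcutting observation (a similar computation covers the server already homebound at $t_n$). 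The two devices missing from your proposal --- the abort-for-far-requests rule paid for by $\costopt\ge t_n+d(o,p_n)$, and the choice of the ignored request served first by the optimum, whose lateness $t_q$ cancels against the head start $t_s$ --- are exactly what turn your factor-$3$ accounting into the tight factor $2$.
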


\begin{theorem}[\cite{ausiello2001algorithms}, Theorem 5.3]\label{thm:PAH3}
The \PAH~algorithm is a $3$-competitive polynomial-time algorithm for the OLTSP in a metric space when using Christofides' heuristic.
\end{theorem}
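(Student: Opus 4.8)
The plan is to re-run the analysis behind \Cref{thm:PAH2} with a single change: wherever \PAH\ would compute and then follow a \emph{shortest} closed route from the origin $o$ through its currently unserved requests, the polynomial-time server instead follows the route produced by Christofides' heuristic~\cite{worboys_1985}. Since Christofides runs in polynomial time and, on any metric instance, returns a closed tour of length at most $\tfrac32$ times the optimum, polynomiality is immediate; the whole content of the theorem is then to show that this multiplicative $\tfrac32$ loss inflates the competitive ratio from $2$ only up to $3$.

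First I would record the two lower bounds that the argument rests on. Let $L^*$ denote the length of a shortest closed tour through $o$ and all request positions $p_1,\dots,p_n$. Because the offline server traverses such a tour at unit speed, $\costopt \ge L^*$, and because the last request appears only at time $t_n$, $\costopt \ge t_n$. I would also use the monotonicity fact that a shortest tour over any subset of the requests has length at most $L^* \le \costopt$; this is what lets me charge every route \PAH\ plans against $\costopt$, and it survives the substitution, since Christofides' output on a subset is at most $\tfrac32$ times that subset's optimal tour length, hence at most $\tfrac32\costopt$.

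I would decompose the completion time as $\costalg = s + \ell$, where $\ell$ is the length of the \emph{final} route (the one planned the last time the server sits at $o$ with unserved requests) and $s$ is the time at which that route departs. The final route is a tour over a subset of the requests, so in the polynomial version $\ell \le \tfrac32 L^* \le \tfrac32\costopt$. For the departure time I would argue that $s$ is built from the lengths of the earlier planned routes together with idle time forced by the release times; each earlier route length inflates by at most $\tfrac32$ under the substitution while the release-time contribution is unchanged and dominated, so that the new departure time still obeys $s \le \tfrac32\costopt$. Combining the two pieces yields $\costalg \le \tfrac32\costopt + \tfrac32\costopt = 3\costopt$.

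The step I expect to be the main obstacle is precisely this bound $s \le \tfrac32\costopt$ under the slower dynamics. Because Christofides tours are longer, the server reaches each successive origin-visit later than in the exact run, and the very sequence of routes it ends up planning can differ from that of the optimal-tour server; I therefore cannot merely rescale the quantities from \Cref{thm:PAH2} but must reprove its charging argument directly for the Christofides-driven server, verifying that the total travel accumulated before the final route departs is still at most $\tfrac32\costopt$. I would pay particular attention to the case in which the last request arrives while the server is in the middle of a route, since that is where the two tour-length contributions stack and the factor $3 = 2\cdot\tfrac32$ is realized; the cases dominated by the release time $t_n$ reach only $\tfrac52\costopt$ and are subsumed.
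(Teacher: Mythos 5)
First, a bookkeeping point: the paper does not prove this statement at all — it is imported verbatim from Ausiello et al.\ (\cite{ausiello2001algorithms}, Theorem~5.3), so your attempt must be measured against the cited source's argument rather than anything in this text. That argument is a case analysis at the arrival time $t_n$ of the last request (server at the origin; last request farther from $o$ than the server, triggering an abort; last request closer, hence ignored), in which each planned tour is charged against $\costopt$ via exactly the shortcutting bound you state, and the Christofides substitution inflates each tour-length \emph{term inside the cases} by $\tfrac32$. Your outer shell is fine — polynomiality, the bounds $\costopt \ge t_n$ and $\costopt \ge L^*$, the subset-tour monotonicity, and $\ell \le \tfrac32\costopt$ for the final route — and you are right that one cannot simply rescale the run of \Cref{thm:PAH2}, since the Christofides-driven server plans different routes and even makes different abort/ignore decisions (the test $d(p_i,o) > d(p(t),o)$ depends on the trajectory).

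The genuine gap is the lemma you defer to, $s \le \tfrac32\costopt$ for the departure time of the final route: it is false, so the clean split $\costalg = s + \ell$ with two independent $\tfrac32\costopt$ bounds cannot be repaired. Sketch of a refuting run: a single request $a_0$ at distance $D$ arrives at time $0$; the server tours it during $[0,2D]$. At time $D$, while the server sits at distance $D$, a batch of requests at distance exactly $D$ from $o$ (optimal tour length $4D$) is released — each satisfies $d(p_i,o) \le d(p(t),o)$, so \PAH\ ignores them. The penultimate tour starts at $2D$; if Christofides loses a factor $\rho$ on this batch it ends at $s = 2D + 4\rho D$. A final request of negligible radius arrives at time $2D+\epsilon$, when the server has just left home (so it, too, is ignored). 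One checks $\costopt = 4D$ is attainable (reach one batch point exactly at its release, sweep the batch, absorb $a_0$ and the last request en route), so $s/\costopt = \tfrac12 + \rho$, which exceeds $\tfrac32$ for any $\rho > 1$ and approaches $2$ as $\rho \to \tfrac32$. The theorem survives on such runs only because the final tour is then forced to be short — the last request had to be near $o$ at its arrival, and every ignored request is constrained by the server's position when it appeared — i.e., there is a joint trade-off between $s$ and $\ell$ that your decomposition discards. This is precisely why the original proof analyzes the three cases at time $t_n$ (yielding roughly $2.5\costopt$, $2.5\costopt$, and $3\costopt$, with the stacking of two Christofides tours only in the third, release-time-supported case) instead of bounding the departure time uniformly; your proposal names that step as the obstacle but leaves exactly the heart of the theorem unproven, and as stated the intended bridge cannot hold.
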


The overview of the \PAH~algorithm has the following two operations: (1) find a route only when the server is at the origin $o$, and (2) act differently in response to 
the requests that are relatively close to the origin and those that are relatively far.
At moment $t$ during the execution of the algorithm, 
let $p(t)$ denote the position of the server 
and 
$U_t$ denote the set comprising 
every 
unserved request that has been present 
so far. 
Precisely, 
if the server is at the origin, i.e., $p(t)=o$, it starts to follow an optimal (or approximate) route $T_{U_t}$, which visits all released unserved requests in $U_t$
and returns to the origin $o$. 
Otherwise,
when the server is on a route
$T_{U_{t'}}$, where $t' < t$ and $t'$ denotes the 
moment when 
the route was designed, 
and a new request $x_i = (t_i, p_i)$ arrives, the server moves depending on the relationship between $d(p_i, o)$ and $d(p(t), o)$, where 
$d(a, b)$ denotes the shortest distance between points $a$ and $b$. If $d(p_i, o) > d(p(t), o)$, 
request $x_i$ is considered far from the origin, and the server terminates the route $T_{U_{t'}}$ and goes back to the origin $o$ directly. Otherwise, 
request $x_i$ is considered close to $o$, and 
thus the server ignores $x_i$ until it is back 
home 
and continues with 
the 
current
route $T_{U_{t'}}$. 

\longdelete{
\begin{algorithm}
\caption{\PAH
}\label{alg:PAH}
    \begin{algorithmic}[1] 
    \Require The current time $t$ and the set of current released unserved requests $U_t$
    \State \textbf{If} the server is at the origin (i.e., $p(t) = o$) \textbf{then}
        \State \quad Start to follow an optimal (or approximate) route $T_{U_t}$ 
        passing through each 
        request in 
        \Statex \quad $U_t$.
    \State \textbf{ElseIf} the server is 
    currently moving along a route $T_{U_{t'}}$, for some $t' < t$ \textbf{then}
        \State \quad \textbf{If} a request $x_i = (t_i, p_i)$ arrives \textbf{then}
            \State \quad \quad \textbf{If} $d(p_i, o) > d(p(t), o)$ \textbf{then}
                \State \quad \quad \quad Go back to the origin $o$.
            \State \quad \quad \textbf{Else}
                \State \quad \quad \quad
                Move ahead
                on the current route $T_{U_{t'}}$.
            \State \quad \quad \textbf{EndIf}
        \State \quad \textbf{EndIf}
    \State \textbf{EndIf}
    \end{algorithmic}
\end{algorithm}
}

In addition, we can derive Corollary~\ref{cor:PAHwait} from the proof of \PAH~in~\cite{ausiello2001algorithms}.
\begin{corollary}\label{cor:PAHwait}
Assume an algorithm asks the server to wait at the origin $o$ for time $t$ before following the \PAH~algorithm where $t \leq t_n$. The algorithm is $2$-competitive if the server has access to the offline optimal route to serve a set of released requests. Also, it is $3$-competitive using Christofides' algorithm.
\end{corollary}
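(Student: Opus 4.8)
The plan is to revisit the analysis behind Theorems~\ref{thm:PAH2} and~\ref{thm:PAH3} and verify that inserting an initial idle period of length $t \le t_n$ at the origin breaks neither competitive bound. The starting observation is the lower bound $\costopt \ge t_n$: since the last request is released at time $t_n$ and the server must still visit it and return home, the optimal completion time is at least $t_n$, so $t \le t_n \le \costopt$. I would use this repeatedly to absorb the wait.

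Following the \PAH~proof, I would write the completion time of the modified algorithm as $\costalg = s + |T_{U_s}|$, where $s$ is the time of the last departure from the origin (the start of the final route) and $T_{U_s}$ is the route the server then follows to completion over the remaining released requests $U_s$. The second term is handled exactly as in the original proof and is untouched by the wait: $T_{U_s}$ is an optimal (resp.\ Christofides) tour over a subset of all requests, so $|T_{U_s}| \le \costopt$ (resp.\ $|T_{U_s}| \le 1.5\,\costopt$), because ignoring release times can only decrease the optimal tour length. The entire burden therefore falls on bounding the final departure time $s$.

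For $s$ I would reproduce the two cases of the \PAH~analysis. If the final route is triggered by a far request $x_j=(t_j,p_j)$ that forces the server home, the bound $s = t_j + d(p(t_j),o) < t_j + d(p_j,o) \le \costopt$ is purely geometric: it uses only the abort condition $d(p_j,o) > d(p(t_j),o)$ and the lower bound $\costopt \ge t_j + d(p_j,o)$, neither of which is affected by an earlier wait at the origin. The remaining case, where the final route is instead triggered by the server returning home and finding deferred close requests, is the crux, since this is where the wait reshapes the schedule of routes preceding the final one. Here I would argue that waiting until $t$ merely batches the requests released during $[0,t]$ into a single planned route rather than several, which by the triangle inequality cannot increase the total length of the pre-final routes; hence the final departure occurs no later than the bound $s \le \costopt$ (resp.\ $s \le 1.5\,\costopt$) furnished by the original analysis, with the extra idle time $t$ dominated by $t_n \le \costopt$ rather than added on top of it.

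Combining the two terms yields $\costalg \le s + |T_{U_s}| \le 2\,\costopt$ with exact routes and $\le 3\,\costopt$ with Christofides' heuristic. The main obstacle I anticipate is precisely the batched (close-request) case: one must check carefully that inserting the wait at the origin does not produce a longer or later sequence of pre-final routes than the original \PAH~run, rather than merely invoking $t \le \costopt$ additively, which would only give a weaker factor-$3$ (resp.\ factor-$4$) bound. Making this batching-versus-interleaving comparison precise, and confirming it for both the exact and the Christofides variants, is where the real work lies.
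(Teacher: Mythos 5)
Your overall frame (absorb the idle period via $t \le t_n \le \costopt$ and re-use the \PAH\ case analysis) is the right instinct, and your far-request case is fine. But the crux case --- which you correctly flag --- is handled by an argument that does not work, on two counts. First, you attribute to the original analysis a bound it does not contain: in the deferred-request case, the proof of Theorem~\ref{thm:PAH2} in~\cite{ausiello2001algorithms} does not establish that the final departure time satisfies $s \le \costopt$ (resp.\ $1.5\,\costopt$). It bounds the completion time $t' + |T| + |T_Q|$ \emph{jointly}, anchoring at the arrival time $t_f$ of a deferred request and exploiting the ignore condition $d(p_f, o) \le d(p(t_f), o)$ together with the geometry of the route $T$ (the already-traversed prefix has length at least $d(p(t_f), o)$, so the cost of the final detour over the ignored set is paid for by slack inside $|T|$). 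The departure time $t' + |T|$ of the final route can by itself be as large as $\costopt$, and nothing in the original proof caps it separately; so the decomposition ``bound $s$ and $|T_{U_s}|$ each by $\costopt$'' cannot be recovered from that proof. Second, your mechanism for transferring the bound --- that batching the requests released during $[0,t]$ ``cannot increase the total length of the pre-final routes, hence the final departure occurs no later'' --- is an unjustified coupling between two runs of the algorithm: \PAH's trajectory is history-dependent (each abort decision compares $d(p_i,o)$ with the server's \emph{current} position), so the waiting and non-waiting runs can diverge completely after time $t$, with different aborts and different planned routes, and there is no monotonicity that lets you dominate one run by the other. As you yourself note, this comparison is ``where the real work lies'' --- but it is work in the wrong direction.

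The derivation the paper intends (it gives no explicit proof, deferring to the proof of \PAH) is simpler and avoids any comparison between runs: re-run the original case analysis at time $t_n$ directly on the modified algorithm. Since $t \le t_n$, the wait has ended by the time the last request arrives, so at $t_n$ the server is either at the origin --- completion at most $t_n + |T_{U_{t_n}}| \le 2\,\costopt$, or $2.5\,\costopt$ with Christofides --- or moving along a route planned at the origin at some moment $t' \ge t$; the abort case and the deferred case then go through word for word, because every inequality they use --- $t_n \le \costopt$, $\costopt \ge t_n + d(p_n,o)$, $|T_{U_{t'}}| \le \costopt$ (resp.\ $1.5\,\costopt$) for any route planned at the origin over released requests, and the ignore-condition geometry anchored at arrival times exceeding $t'$ --- is insensitive to what the server did before $t'$. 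In short, the wait is absorbed not by showing the modified run dominates the original run, but by observing that the original analysis never inspects the schedule prior to the planning time of the route active at $t_n$; no batching-versus-interleaving argument is needed, and making one precise would be harder than the corollary itself.
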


\subsection{Prediction Models and Error Measure}
In this section, we present three different models of prediction.
First, we consider 
predicting a sequence of requests $\Xpred = (\xhat_1, \ldots, \xhat_m)$, where $m$ denotes the number of predicted requests in the sequence. As the server is unaware of the size of $\X$ in the beginning, we assume that the learning model forecasts the number of requests before predicting the arrival time and positions of all requests. Formally, each request in the prediction is defined by $\xhat_i = (\hat t_i, \hat p_i)$, where $\hat t_i$ and $\hat p_i$ are its predicted arrival time and position. As we need to compare two request sequences with different sizes, $n$ and $m$, 
apparently 
a proper definition of prediction errors does not exist. Consequently, we only care about whether any error exists. That is, we simply distinguish whether $\Xpred = \X$ or not. In this model, we 
can 
design an 
algorithm that is consist and robust but not smooth. 

Next, we consider a restricted version 
in which 
the number of requests $n$ is given, 
implying that the prediction has the same size as the actual request sequence, i.e., $|\Xpred|=|\X|=n$. We 
thus 
assume that there is a one-to-one 
correspondence 
between the requests in the two sequences $\Xpred$ and $\X$. That is, each predicted request $\xhat_i$ is 
exactly 
paired 
with 
the actual request $x_i$. Since the prediction gives us both the arrival time and position, we define errors with respect to the two parameters. We first define the time error, denoted by $\errortime$, as the maximum difference between the arrival time of a predicted request and its corresponding actual request.
$$\errortime \defeq max_{i \in [n]}|\hat{t}_i - t_i|$$
Then, we define the position error, denoted by $\errorpos$, as the sum of distances between the positions of the predicted and actual requests.
$$\errorpos \defeq \sum^{n}_{i=1}d(\hat{p}_i, p_i)$$
In this restricted model, 
we extend the above algorithm to 
a new one which can 
improve the result and satisfy the smoothness requirement.

Last, we consider the prediction 
of 
the arrival time of the last request, 
denoted 
by $\hat t_{last}$. We define the error of the prediction, denoted by $\errorlast$, 
to be 
the difference between the predicted and the actual arrival time of the last request: 
$$\errorlast \defeq \hatt_n - t_n.$$
The motivation of 
making such a 
prediction is that the arrival time of the last request provides a guess of the lower bound for the optimal route. 
Later 
we show how to combine this prediction with the \PAH~algorithm and design a learning-augmented algorithm
with consistency, robustness and smoothness. Furthermore, we also show the limit of this prediction, 
i.e., lower bound results.

Note that $\Xpred$ 
reveals 
much more information compared to $\hatt_n$ so that 
the first two prediction models 
would be expected to 
help achieve better 
performance 
than the last model, 
assuming 
the prediction is perfect. 
However, 
when the prediction is not 
good, 
we 
must carefully 
control both $\errortime$ and $\errorpos$ to avoid paying too much 
extra cost.

Here we 
remark 
that very recently Azar et.al.~\cite{azar2022online} proposed a framework for designing learning-augmented online 
algorithms for some graph problems such as Steiner tree/forest, facility location, etc. 
They also 
presented 
a novel definition of prediction errors. However, 
in contrast, 
online routing has to
carefully cope with the predictions of 
arrival time of future requests. Therefore, the notion 
reported 
in~\cite{azar2022online} may not be 
directly applied to 
this study.

\section{Predict a Sequence of Requests}\label{pred:seq}

Given a predicted sequence of requests
$\Xpred$, 
we first discuss the intuition behind the 
two models: 
\emph{prediction without identity} and \emph{prediction with identity}. 
From the perspective of machine learning, 
the former model forecasts by viewing the sequence as a whole, while the latter one makes predictions based on the features of each request individually. Note that both models acquire the entire sequence prediction $\Xpred$ in the beginning. 
Next,
we present two algorithms: \algNI~and \algI, respectively, where  
the former one has
consistency and robustness, and 
the latter one can even additionally
gain smoothness 
by knowing the number of requests.

We remark that, analogous to the discussion in~\cite{ausiello2001algorithms}, 
one can 
disregard the computational complexity of online algorithms and 
assume that 
the server has access to the offline optimal route for a sequence of 
released requests. 
This is 
similar 
to the assumption in~\cite{ausiello2001algorithms} that 
the server is given the optimal solution to visit requests without time constraints. 

\subsection{Sequence Prediction without Identity}\label{sec:3-1}
In this model, the server 
makes 
a prediction $\Xpred$ 
comprising 
$m$ predicted requests.
We present the~\algNI~algorithm, which is $(1.5+\lambda)$-consistent and $(3+2/\lambda)$-robust for $\lambda \in (0, 1]$. 
A na\"{\i}ve idea is that the server finds an 
optimal route $\hat T$ for the prediction $\Xpred$ at time $0$ and 
just 
follows the route $\hat T$ 
%
directly.
It is clearly a 1-consistent algorithm, which, however, 
may 
result in 
an arbitrarily bad robustness.

\begin{restatable}{rThm}{thmlbwo}\label{thm:lb1}
Given a sequence of predicted requests without identity, 
any 1-consistent algorithm has robustness 
of at least $1/\delta$
for any $\delta \in (0,1)$.
\end{restatable}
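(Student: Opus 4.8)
The plan is to exhibit, for every $\delta \in (0,1)$, a single fixed prediction together with an actual request sequence on the real line for which any $1$-consistent algorithm must pay at least $1/\delta$ times the offline optimum. The guiding principle is that $1$-consistency is an extremely rigid requirement: whenever the realized input coincides with $\Xpred$, the algorithm must reproduce an \emph{optimal} route exactly, so if the optimal route for $\Xpred$ is itself ``tight'' (it admits a unique schedule with no idle time), the algorithm has no freedom whatsoever and must commit to that schedule until the first instant at which the true input can be seen to deviate from $\Xpred$. The adversary's job is then to place that deviation late, after the server has already been dragged far from the origin.

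Concretely, I would set $P = 1/\delta$ and take the prediction $\Xpred = \{(1,1),(P,P)\}$: a cheap request at position $1$ arriving at time $1$, and a distant request at position $P$ arriving at time $P$. The reason for choosing the far request's position equal to its arrival time is that the offline optimum for $\Xpred$ has length $2P$ and requires the server to be at position $P$ \emph{exactly} at time $P$; since the origin lies at distance $P$, this is attainable only by travelling rightward at full speed throughout $[0,P]$. Hence the unique optimal schedule for $\Xpred$ places the server at position $t$ at every time $t\in[0,P]$, and any $1$-consistent algorithm must follow precisely this trajectory on the perfect input.

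The key step is the indistinguishability argument. I would let the true input be $\X = \{(1,1)\}$, which agrees with $\Xpred$ on everything released strictly before time $P$ (both reveal the request $(1,1)$ and then nothing until time $P$). Because an algorithm's behaviour up to time $t$ is a function only of the prediction and of the requests released by time $t$, the server must follow the same forced trajectory as in the perfect case until time $P$, so by continuity it sits at position $P$ at time $P$. Only then does it learn that the predicted far request never arrives, and it still has to return home, whence $\costalg \ge 2P$. On the other hand the offline optimum for $\X = \{(1,1)\}$ merely runs to position $1$ and back, so $\costopt \le 2$. The competitive ratio on this instance is therefore at least $2P/2 = P = 1/\delta$, and since $\delta\in(0,1)$ is arbitrary the robustness is unbounded.

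The main obstacle, and the reason the construction must be designed carefully, is the opposing pull between two needs: to force the server to overcommit, the deviation from $\Xpred$ must be revealed \emph{late}, yet to keep $\costopt$ small the realized requests must be cheap, and cheap requests are normally revealed \emph{early}. A naive attempt that reveals a late request near the origin fails, because a late arrival time is itself a lower bound on $\costopt$ and caps the ratio at a constant. The construction circumvents this by splitting the two roles: an early, cheap request that matches the prediction keeps $\costopt$ bounded, while the costly overcommitment is caused entirely by a predicted request that is simply \emph{absent} from the realized input and whose absence cannot be detected before its late predicted arrival time. The step I would treat most carefully is the verification that $1$-consistency forces the full-speed trajectory rather than merely some route of optimal \emph{cost}; this is exactly where the tightness of the predicted optimum (far position equal to arrival time) is used.
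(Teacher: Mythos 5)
Your proof is correct and follows essentially the same construction as the paper: a two-request prediction whose tight optimal route (position equal to arrival time) forces any $1$-consistent algorithm along a unique full-speed trajectory, while the actual input silently drops the far predicted request, which cannot be detected until its predicted arrival time. The only difference is scaling --- the paper places the near request at $\delta$ and the far one at $1$, whereas you place them at $1$ and $1/\delta$ --- which yields the identical ratio of $1/\delta$.
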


The brief idea of the~\algNI~algorithm is that the server finds an 
optimal route $\hat T$ for the prediction $\Xpred$ at time $0$ and follows the route 
based on a modified framework of \PAH, which can guarantee a lower bound for 
$\costopt$.
Precisely,
a server operated by \algNI~is given a route $\hat T$ which serves 
the requests in the prediction $\Xpred$ 
with 
the confidence level to the prediction, denoted by $\lambda$, in the beginning. To obtain 
certain 
robustness, 
we set a condition to see whether the current moment $t$ is 
earlier than 
$\lambda |\hat T|$ or not. 
If 
$t < \lambda |\hat T|$, we 
follow
the \PAH~algorithm, unless the route $T_{U_t}$ is too long (i.e., $t + |T_{U_t}| >  \lambda |\hat T|$), where 
$T_{U_t}$ denotes an optimal route that starts serving the set of released unserved requests $U_t$ at time $t$; that is, we adjust the route $T_{U_t}$ by asking the server to 
return home and arrive at the origin $o$ exactly at time $\lambda |\hat T|$.
By 
adding such a gadget, 
we can ensure 
good robustness. 
Otherwise, 
if $ t \ge \lambda |\hat T|$, 
the server 
gets 
to follow the predicted route $\hat T$ 
once 
there are unserved requests (i.e., $U_t \neq \emptyset$). 
Finally, we use the \PAH~algorithm to serve the remaining requests in $\X \setminus \Xpred$ . 
Note that we tend to set the parameter $\lambda$ to a small value if we believe the quality of the prediction is good.

%
In particular, 
next, we show that~\algNI~can still be robust even when the prediction error is not properly defined.
First, we prove it is feasible to add the gadget. 



\begin{restatable}{rLem}{lemtback}\label{lem:tback}
Given that the server follows the~\algNI~algorithm, 
when time $t$ satisfies the condition: $t < \lambda |\hat T|$ and $t + |T_{U_t}| > \lambda |\hat T|$, there exist a 
moment $t_{back}$, $t < t_{back} < t + |T_{U_t}|$, such that $t_{back} + d(p(t_{back}), o) = \lambda |\hat T|$.
\end{restatable}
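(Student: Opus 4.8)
The plan is to apply the Intermediate Value Theorem to the function that records, for each moment along the planned route, the time at which the server would arrive back at the origin if it turned around immediately. Concretely, I would fix the route $T_{U_t}$ that the server plans at time $t$ (recall that under the \PAH{} framework a route is designed only when the server is at $o$, so $T_{U_t}$ starts from the origin), parametrize its unit-speed traversal by time $s \in [t,\, t+|T_{U_t}|]$, and let $p(s)$ denote the server's position at time $s$ along this route. Define $g(s) \defeq s + d(p(s), o)$, which is exactly the earliest moment the server can be back home having departed from position $p(s)$ at time $s$.

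First I would pin down the two boundary values. Since $T_{U_t}$ begins at the origin at time $t$, we have $p(t)=o$, hence $d(p(t),o)=0$ and $g(t)=t$. Likewise $T_{U_t}$ ends at the origin, so $p(t+|T_{U_t}|)=o$ and $g(t+|T_{U_t}|) = t+|T_{U_t}|$. The two hypotheses of the lemma, $t < \lambda|\hat{T}|$ and $t+|T_{U_t}| > \lambda|\hat{T}|$, then say precisely that $g(t) < \lambda|\hat{T}| < g(t+|T_{U_t}|)$, so the target value $\lambda|\hat{T}|$ is strictly sandwiched between the two endpoint values of $g$.

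The key step is continuity of $g$. Because the server moves with unit speed along $T_{U_t}$, the position map $s \mapsto p(s)$ is continuous (indeed $1$-Lipschitz with respect to $d$); composing with the distance-to-origin function, which is itself $1$-Lipschitz by the triangle inequality, shows $s \mapsto d(p(s),o)$ is continuous, and adding the continuous term $s$ keeps $g$ continuous on $[t,\, t+|T_{U_t}|]$. Applying the Intermediate Value Theorem to $g$ on this interval yields a point $t_{back}$ with $g(t_{back}) = \lambda|\hat{T}|$, i.e.\ $t_{back} + d(p(t_{back}),o) = \lambda|\hat{T}|$; the strict endpoint inequalities force $t_{back}$ into the open interval $(t,\, t+|T_{U_t}|)$, exactly as claimed.

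I expect the only subtlety --- rather than a genuine obstacle --- to be arguing continuity cleanly in a general metric space $M$, where the ``position'' along the route must be interpreted as a fixed continuous unit-speed traversal of $T_{U_t}$. Once that parametrization is set, the remainder is a direct \emph{IVT} argument, and the fact that both endpoints of $g$ coincide with the physical times $t$ and $t+|T_{U_t}|$ (because the route both starts and ends at $o$) is what makes the hypotheses translate immediately into the required sandwiching.
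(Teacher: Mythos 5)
Your proposal is correct and takes essentially the same approach as the paper: the paper's proof also defines the return-time function $f(s) = s + d(p(s), o)$ on $[t,\, t+|T_{U_t}|]$, observes that its endpoint values are $t$ and $t+|T_{U_t}|$ (since the route starts and ends at the origin), and applies the Intermediate Value Theorem to locate $t_{back}$ in the open interval. Your explicit justification of continuity via the $1$-Lipschitz parametrization and the triangle inequality is a slightly more careful version of the continuity claim the paper simply asserts.
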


\begin{algorithm}
    \caption{\textsc{Learning-Augmented Routing Without Identity (LAR-NID)}}
    \label{alg:LASwo}
    \begin{algorithmic}[1] 
    \Require The current time $t$, a sequence prediction $\Xpred$, the confidence level $\lambda \in (0, 1]$, and the set of current released unserved requests $U_t$.
    \State 
    First, 
    compute an optimal route $\hat T$ to serve 
    the requests in $\Xpred$ and return to the origin $o$;
    \State \textbf{While} $t < \lambda |\hat T|$ \textbf{do}
        \State \quad \textbf{If} the server is at the origin $o$ (i.e., $p(t) = o$.) \textbf{then}
            \State \quad \quad Compute an optimal route $T_{U_t}$ to serve all the unserved requests in $U_t$ and return
            \Statex \quad \quad to the origin $o$;
            \State \quad \quad \textbf{If} $t + |T_{U_t}| >  \lambda |\hat T|$ \textbf{then}
            \Comment{Add a gadget}
                \State \quad \quad \quad Find
                the moment 
                $t_{back}$ such that $t_{back} + d(p(t_{back}), o) =  \lambda |\hat T|$;
                \State \quad \quad \quad Redesign a route $T'_{U_t}$ by asking the the server to go back to the origin $o$ at time
                \Statex \quad \quad \quad $t_{back}$ along the shortest path;
                \State \quad \quad \quad Start to follow the route $T'_{U_t}$.
            \State \quad \quad \textbf{Else}
                \State \quad \quad \quad Start to follow the route $T_{U_t}$.
            \State \quad \quad \textbf{EndIf}
        \State \quad \textbf{ElseIf} the server is currently 
        moving along 
        a route $T_{U_{t'}}$, for some $t'< t$ \textbf{then}
            \State \quad \quad \textbf{If} a new request $x_i = (t_i, p_i)$ arrives \textbf{then} 
            \Comment{Similar to PAH}               
            \State \quad \quad \quad \textbf{If} $d(p_i, o) > d(p(t), o)$ \textbf{then}
                    \State \quad \quad \quad \quad Go back to the origin $o$.
                \State \quad \quad \quad \textbf{Else}
                    \State \quad \quad \quad \quad 
                    Move ahead 
                    on the current route $T_{U_{t'}}$.
                \State \quad \quad \quad \textbf{EndIf}
            \State \quad \quad \textbf{EndIf}
        \State \quad \textbf{EndIf}
    \State \textbf{EndWhile}
    
    \State \textbf{While} $t \geq \lambda |\hat T|$ \textbf{then}
        \State \quad Wait until $U_t \neq \emptyset$;
        \State \quad Follow the route $\hat T$ until the server is back to the origin $o$;
        \State \quad Follow \PAH~($t,U_t$).
        \Comment{Serve the remaining requests}
    \State \textbf{EndWhile}
    \end{algorithmic}
\end{algorithm}


\begin{restatable}{rLem}{lemconsistency}\label{lem:consistency}
The~\algNI~algorithm is $(1.5 + \lambda)$-consistent, where $\lambda \in (0,1]$.
\end{restatable}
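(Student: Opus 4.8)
The plan is to prove the bound only in the consistent regime, i.e.\ when the prediction is perfect, so that $\Xpred=\X$ (recall that in this model $\varepsilon=0$ just means $\Xpred=\X$). First I would observe that then the precomputed route $\hatT$ is an optimal offline route for the true instance, so $|\hatT|=\costopt$; I write $Z\defeq\costopt$ and let $\ell\le Z$ be the total travelled length of $\hatT$ and $\delta_i$ the distance along $\hatT$ at which request $i$ is visited. The whole argument then reduces to controlling two quantities: the time spent during the guard phase $t<\lambda|\hatT|$, and the completion time of following $\hatT$ afterwards.

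For the guard phase I would invoke the gadget guaranteed by Lemma~\ref{lem:tback} to ensure that the server is back at the origin no later than time $\lambda|\hatT|=\lambda Z$; hence everything done before $\lambda Z$ contributes at most $\lambda Z$ to the completion time, and the server re-enters the second phase standing at the origin. For the second phase I would establish the workhorse inequality: if the server leaves the origin at time $s$ and traverses the fixed route $\hatT$, waiting at a position only when it arrives before the corresponding request, then its completion time is
\[
 f(s)=\max\bigl(s+\ell,\ \max_i (t_i+\ell-\delta_i)\bigr)\le \max(s+\ell,\;Z),
\]
where the cascading waits telescope and $\max_i(t_i+\ell-\delta_i)\le f(0)=|\hatT|=Z$ because $\hatT$ is optimal. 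I would then split on the state at time $\lambda Z$. If $U_{\lambda Z}\neq\emptyset$ (or all requests have already arrived), the server starts $\hatT$ at $s=\lambda Z$ and the inequality gives $\costalg\le\max(\lambda Z+\ell,Z)\le(1+\lambda)Z$. The only delicate case is $U_{\lambda Z}=\emptyset$: the server idles until the first request arriving after $\lambda Z$, say at time $s'$, so $\costalg=f(s')\le\max(s'+\ell,Z)$, and the nontrivial subcase is $s'+\ell>Z$. Here I would partition the requests into those arriving by $\lambda Z$ (Early) and after it (Late); since $U_{\lambda Z}=\emptyset$, the Phase-1 trajectory is a closed origin-to-origin walk of length at most $\lambda Z$ serving all Early requests, so their optimal subtour length $\ell_E\le\lambda Z$, and concatenating subtours at the origin gives $\ell\le\ell_E+\ell_L\le\lambda Z+\ell_L$.

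The crux — and the step I expect to be the main obstacle — is a structural lower bound $\costopt\ge Z_L\ge s'+\ell_L/2$, where $Z_L$ and $\ell_L$ are the optimal completion time and optimal subtour length of the Late instance alone. For the second inequality I would argue from an optimal Late schedule: nothing can be served before $s'$, so the server's position $q$ at time $s'$ obeys $d(q,o)\le s'$ and, after trimming any excursion past the farthest request, also $d(q,o)\le\ell_L/2$; the residual walk from $q$ through all Late requests back to the origin then has length at least $\ell_L-d(q,o)\ge\ell_L/2$ yet occupies time at most $Z_L-s'$, giving $Z_L\ge s'+\ell_L/2$. Combined with $\ell_L\le Z_L$ this yields $s'+\ell_L\le\tfrac32 Z_L\le\tfrac32\costopt$, so $s'+\ell\le(s'+\ell_L)+\lambda Z\le(1.5+\lambda)Z$. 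Taking the maximum over all cases then proves $\costalg\le(1.5+\lambda)\costopt$. The remaining points that need care — verifying the telescoping of the waits in $f(s)$ and that the server is genuinely at the origin at time $\lambda Z$ — should follow routinely from the gadget and the \PAH~structure.
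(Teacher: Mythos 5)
Your proof follows the same skeleton as the paper's: the gadget puts the server at the origin by time $\lambda|\hat T|$, you split on whether $U_{\lambda|\hat T|}$ is empty, and the nonempty case gives $(1+\lambda)\costopt$ exactly as in the paper. Your start-shift formula $f(s)=\max\bigl(s+\ell,\max_i(t_i+\ell-\delta_i)\bigr)$ with $\max_i(t_i+\ell-\delta_i)\le f(0)=|\hat T|=\costopt$ is a correct rigorization of what the paper only asserts implicitly (that restarting $\hat T$ later costs at most $|\hat T|$ more). Your Late-instance bound $s'+\ell_L\le\tfrac32 Z_L$ is also sound, and is in effect a rigorous version of the paper's informal Case~2 argument, where the waiting interval $t_j-\lambda|\hat T|$ is split into OPT's moving and idle time, the worst case $t_{idle}=0$ is assumed, and $t_{move}\le\tfrac12\costopt$ is obtained from the triangle inequality. (Incidentally, your ``trimming'' step $d(q,o)\le\ell_L/2$ is dubious in a general metric space, where geodesics may stray beyond the farthest request; but it is unnecessary: $d(q,o)\le s'$ and $Z_L\ge s'+d(q,o)$ already give $d(q,o)\le Z_L/2$, hence $s'+\ell_L\le Z_L+d(q,o)\le\tfrac32 Z_L$.)

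The genuine gap is the step you passed over quickly, not the one you flagged as the crux: the inequality $\ell\le\ell_E+\ell_L$. Here $\ell$ is the travel length of the \emph{particular} route $\hat T$ the algorithm traverses in its second phase (note it re-traverses the already-served Early positions), and $\hat T$ is chosen to minimize \emph{completion time}, not travel length. Concatenating the optimal Early and Late subtours at the origin only bounds the minimum tour length through all positions; a completion-time-optimal visit order can be strictly longer than that minimum, because it may pay extra distance to avoid waiting (e.g., in the plane with one far request arriving early and two near requests arriving late, the completion-optimal order is length-suboptimal). Worse, absent a normalization of $\hat T$ to point-to-point travel with waits only at request positions, an ``optimal'' $\hat T$ may wander with $\ell$ as large as $\costopt$ (a route that kills waiting time by driving achieves the same completion time), in which case $f(s')$ really is about $s'+\costopt\approx 2\costopt$ and your chain collapses under adversarial tie-breaking among optimal routes. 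So the delicate subcase is not closed as written: you would need either to justify a length bound on the specific $\hat T$ being followed --- which does not follow from its completion-time optimality --- or to argue as the paper does, never bounding $\ell$ at all and instead charging the overrun $t_j-\lambda|\hat T|$ directly against OPT's movement during the idle window.
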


\begin{restatable}{rLem}{lemrobustness}\label{lem:robustness}
The~\algNI~algorithm is $(3 + 2 / \lambda)$-robust, where $\lambda \in (0,1]$.
\end{restatable}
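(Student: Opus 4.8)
The plan is to show $\costalg \le (3 + 2/\lambda)\costopt$ for an arbitrary, adversarially chosen prediction $\Xpred$, relying on only three facts about $\costopt$: (i) $\costopt \ge t_n$; (ii) for any subset $U$ of the released requests, an optimal route serving $U$ and returning home has length at most $\costopt$ (shortcut the full offline tour), so in particular $|T_{U_t}| \le \costopt$; and (iii) $\costopt$ is monotone under restricting the instance to the requests released up to any fixed time. By Lemma~\ref{lem:tback} the gadget is always feasible, so the server stands at the origin at time $\tau \defeq \lambda|\hat T|$, the instant phase~1 ends.

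First I would split on whether the algorithm finishes during the first while-loop. If it does, then the gadget never fired (a firing ends phase~1 with outstanding work and hands control to phase~2), so on this instance \algNI~coincides with \PAH, and Theorem~\ref{thm:PAH2} gives $\costalg \le 2\costopt \le (3+2/\lambda)\costopt$. The substantial case is when the server reaches $\tau$ at the origin with work remaining and enters the second while-loop.

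The key step is to bound $\tau$ — equivalently $|\hat T|$ — by $\costopt$, the only available handle on the otherwise unconstrained prediction. I would read this off the gadget's triggering condition. Either the gadget never fires and the server simply idles at the origin until $\tau$ because every released request is already served, in which case the remaining work lies in the future and $\tau \le t_n \le \costopt$; or the gadget fires at an origin-time $t < \tau$ with $t + |T_{U_t}| > \tau$, and I claim $t + |T_{U_t}| \le 2\costopt$. For the latter I would pass to the sub-instance $I_t$ of requests released up to time $t$: the run of \PAH~on $I_t$ is identical to our execution up to time $t$ (its decisions depend only on the requests seen so far), and because no request of $I_t$ arrives after $t$, the server follows $T_{U_t}$ uninterrupted to completion, so \PAH's completion time on $I_t$ is exactly $t + |T_{U_t}|$. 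Theorem~\ref{thm:PAH2} together with monotonicity (iii) then yields $t + |T_{U_t}| \le 2\costopt(I_t) \le 2\costopt$. Hence $\tau < 2\costopt$, and therefore $|\hat T| = \tau/\lambda \le 2\costopt/\lambda$, in every sub-case.

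Finally I would bound the second phase. After reaching the origin at $\tau$, the server waits $w$ units until a request is pending, traverses $\hat T$ at cost $|\hat T|$, and then runs \PAH~from the origin starting at time $s \defeq \tau + w + |\hat T|$. If $s \le t_n$, the whole execution is no worse than the schedule that idles at the origin until $s$ and then runs \PAH, which Corollary~\ref{cor:PAHwait} certifies as $2$-competitive, giving $\costalg \le 2\costopt$. If $s > t_n$, all requests have arrived, so the final \PAH~merely follows an optimal route of length at most $\costopt$ and finishes by $s + \costopt$; since $\tau + w \le \max(\tau, t_n) \le 2\costopt$ and $|\hat T| \le 2\costopt/\lambda$, we get $\costalg \le s + \costopt \le 2\costopt + |\hat T| + \costopt \le 3\costopt + 2\costopt/\lambda = (3 + 2/\lambda)\costopt$, proving the claim. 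The main obstacle I anticipate is the middle step: making the sub-instance reduction airtight — that the clipped \PAH's trajectory up to the trigger genuinely equals \PAH~on $I_t$, that $t + |T_{U_t}|$ is exactly its completion time there, and that this single inequality $\tau \le 2\costopt$ is what simultaneously absorbs the $\tau$ term and the $|\hat T|$ term so the constants collapse to $3 + 2/\lambda$ rather than something larger; the domination used in the $s \le t_n$ sub-case will also need a short monotonicity argument for \PAH.
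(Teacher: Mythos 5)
Your proposal is correct and takes essentially the same route as the paper's proof: it dismisses the case where the server finishes before $\lambda|\hat T|$ via Theorem~\ref{thm:PAH2}, establishes the key inequality $\lambda|\hat T| \leq 2\costopt$ (which the paper asserts in one line as ``otherwise it contradicts Theorem~\ref{thm:PAH2}'' and which your sub-instance reduction spells out rigorously), and then bounds the second phase by $\max(\lambda|\hat T|, t_n) + |\hat T| + \costopt \leq (3 + 2/\lambda)\costopt$, invoking Corollary~\ref{cor:PAHwait} when the last request arrives after the predicted route ends. Your split on $s \leq t_n$ versus $s > t_n$ simply merges the paper's four sub-cases 2-1 through 2-4 into two, and the PAH-monotonicity caveat you flag in the $s \leq t_n$ branch is equally implicit in the paper's own appeal to Corollary~\ref{cor:PAHwait}.
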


\begin{restatable}{rThm}{thmLASwo}\label{thm:LASwo}
The~\algNI~algorithm is $(1.5 + \lambda)$-consistent and $(3 + 2 / \lambda)$-robust
but 
not smooth, where $\lambda \in (0,1]$ is the confidence level.
\end{restatable}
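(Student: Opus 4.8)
The plan is to assemble the theorem from its three ingredients separately. The consistency bound $1.5+\lambda$ and the robustness bound $3+2/\lambda$ are exactly the conclusions of \Cref{lem:consistency} and \Cref{lem:robustness}, so for those two claims I would simply invoke the lemmas and note that they hold simultaneously for the same algorithm and the same parameter $\lambda\in(0,1]$; no further work is needed there. The entire burden of the theorem is therefore the final assertion that \algNI~is \emph{not} smooth, and I would spend the proof establishing that.

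To show non-smoothness I negate \Cref{def:smo} with $\alpha=1.5+\lambda$: I must exhibit, for every continuous $f$ with $f(0)=0$, an input violating $\costalg\le(1.5+\lambda)\costopt+f(\varepsilon)$. The conceptual starting point is that in the without-identity model the prediction error carries only one bit of information, namely whether $\Xpred=\X$ or $\Xpred\ne\X$, rather than a continuously varying real quantity. Consequently any candidate $f$ can assign only a single finite additive allowance $c$ to the whole class of imperfect inputs (those with $\Xpred\ne\X$), while $f(0)=0$ governs the perfect case. Thus smoothness would force a uniform additive bound $\costalg\le(1.5+\lambda)\costopt+c$ across all imperfect instances, and it suffices to show that the additive gap $\costalg-(1.5+\lambda)\costopt$ is unbounded over that class.

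To produce the unbounded gap I would first construct a single imperfect instance $I$ on the line on which \algNI~is strictly worse than $(1.5+\lambda)$-competitive. Intuitively such an instance exists because on imperfect inputs the server may follow the wrong predicted route $\hat T$ once $t\ge\lambda|\hat T|$ and additionally pay for the back-home gadget, incurring a ratio up toward the robustness factor $3+2/\lambda$, which strictly exceeds $1.5+\lambda$ for every $\lambda\in(0,1]$. Writing $\costalg(I)=\rho\,\costopt(I)$ with $\rho>1.5+\lambda$, I would then scale the whole metric, both distances and arrival times and both $\X$ and $\Xpred$, by a factor $s>0$ to obtain $I_s$. Scaling multiplies $\costalg$ and $\costopt$ by $s$ and preserves the predicate $\Xpred\ne\X$, so $\costalg(I_s)-(1.5+\lambda)\costopt(I_s)=(\rho-1.5-\lambda)\,s\,\costopt(I)\to\infty$ as $s\to\infty$. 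This contradicts the uniform allowance $c$, so no continuous $f$ works and \algNI~is not smooth.

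I expect the main obstacle to be pinning down the witness instance $I$ and verifying $\rho>1.5+\lambda$, since this requires tracing the two-phase behavior of \algNI: the modified-\PAH~phase with the back-home gadget while $t<\lambda|\hat T|$, the commitment to $\hat T$ once $t\ge\lambda|\hat T|$, and the clean-up \PAH~phase on $\X\setminus\Xpred$. I would design $I$ so that $\hat T$ is long but essentially useless for $\X$ (for instance, the predicted requests lie far from the actual ones), so that the time the server sinks into $\hat T$ dominates $\costopt$. A secondary point to check is that the scaling genuinely preserves both $\Xpred\ne\X$ and the ratio $\rho$, which is immediate because unit-speed routing, the gadget threshold $\lambda|\hat T|$, and every comparison of $d(p_i,o)$ against $d(p(t),o)$ are homogeneous of degree one under the scaling.
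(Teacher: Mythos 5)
Your proposal is correct, and on the two quantitative claims it coincides exactly with the paper: the paper's entire proof of Theorem~\ref{thm:LASwo} is the one-line combination of Lemma~\ref{lem:consistency} and Lemma~\ref{lem:robustness}, just as you do. Where you genuinely diverge is the ``not smooth'' claim: the paper never argues it at all, treating it as immediate from the modeling remark in Section~\ref{premlim} that no proper error measure exists in the without-identity setting (only the binary predicate $\Xpred=\X$ versus $\Xpred\neq\X$), so Definition~\ref{def:smo} cannot be satisfied nontrivially. Your negation-plus-scaling argument turns that assertion into an actual proof: since the error carries one bit, any candidate $f$ grants a single uniform additive allowance on imperfect inputs, and rescaling one witness instance with ratio $\rho>1.5+\lambda$ (all quantities in \algNI~--- the threshold $\lambda|\hat T|$, the gadget time $t_{back}$, the \PAH~comparisons --- are homogeneous of degree one, as you note) makes the additive gap unbounded. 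The one piece you leave unfinished, and correctly flag, is the witness itself; it does exist along exactly the lines you describe. For instance, on the real line let the prediction contain a single request at $+D$, so $|\hat T|=2D$, while the single actual request sits at $-D$ and arrives at time $\epsilon$: since $\epsilon+2D>\lambda|\hat T|$, the gadget forces the server back to the origin at time $2\lambda D$ without reaching $-D$, it then traverses the useless route $\hat T$ to $+D$ and back, and only afterwards serves $-D$ via \PAH, giving $\costalg=2\lambda D+4D$ against $\costopt=2D$, i.e.\ ratio $2+\lambda>1.5+\lambda$ for every $\lambda\in(0,1]$. This family already has additive gap $\costalg-(1.5+\lambda)\costopt=D=0.5\,\costopt\to\infty$ as $D$ grows, so your separate scaling step can even be skipped. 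In short, your route buys a rigorous proof of non-smoothness where the paper offers only an unproven assertion, at the modest cost of supplying this explicit construction.
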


\subsection{Sequence Prediction with Identity}\label{sec:3-2}
In contrast to the previous model, we consider having access to the number
of requests, i.e., $n$. 
Given a 
sequence prediction, $\Xpred$ with the size of $n$, 
we first show 
the limitation of this stronger prediction. 

\begin{restatable}{rThm}{thmlbw}\label{thm:lb2}
Given a sequence of predicted requests with identity, 
any $1$-consistent algorithm has robustness at least $2$. 
\end{restatable}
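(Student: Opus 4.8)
The plan is to turn the \emph{rigidity} of $1$-consistency into an adversarial weapon: being optimal on the perfectly predicted instance forces the server's trajectory to be essentially determined, and then a perturbation that the online algorithm cannot detect in time provokes a large detour. Concretely, I would fix one prediction $\Xpred$ and construct two actual inputs that are indistinguishable to any online algorithm until the instant a request is revealed. One input realizes the prediction exactly, so $1$-consistency pins down the server; the other places the lone request on the opposite side of the origin, forcing a round trip across the whole metric.

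For the construction I would work on the real line with origin $o=0$ and take $n=1$. Let the prediction be $\Xpred=(\hat x_1)$ with $\hat x_1=(\hat t_1,\hat p_1)=(1,1)$. Define the perfect instance by $x_1=(1,1)$, so that $\X=\Xpred$ with $\errortime=0$ and $\errorpos=0$; define the perturbed instance by $x_1=(1,-1)$, so that $\errortime=0$ and $\errorpos=d(1,-1)=2>0$. Both inputs share the same prediction and have size $n=1$, hence both are legal inputs in the with-identity model, and the second one carries a strictly positive error.

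The argument then proceeds in three steps. First, on the perfect instance I would show that any $1$-consistent deterministic algorithm must have its server at position $1$ at time $1$: here $\costopt=2$ (go to $1$, arrive at time $1$, return by time $2$), and since $\costalg\ge\costopt$ always while $1$-consistency forces $\costalg=\costopt=2$, the server must first reach position $1$ at some time $\tau\ge 1$ and then return home, costing at least $\tau+1$; equality with $2$ forces $\tau=1$. Second, I would invoke indistinguishability: on $[0,1)$ both instances present the identical prediction $\Xpred$ and no released request, so the unit-speed (hence continuous) server follows the same trajectory, and by continuity its position at time $1$ is identical in the two instances, namely $1$. Third, I would evaluate the perturbed instance: at time $1$ the server sits at position $1$ while the request appears at $-1$, so it must travel distance $d(1,-1)=2$ to serve it (finishing that leg at time $3$) and then distance $1$ back to $o$, giving $\costalg\ge 4$, whereas $\costopt=2$. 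Thus the ratio on the perturbed instance is at least $4/2=2$, so $c(\varepsilon)\ge 2$ for the error $\errorpos=2$, and robustness is at least $2$.

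The main obstacle is making the ``forced position'' step fully rigorous rather than merely intuitive: I must rule out \emph{every} trajectory (including ones that overshoot position $1$ or oscillate) by the completion-time inequality above, and I must treat the instant $t=1$ with care, arguing that the position at time $1$ is determined by the behavior on $[0,1)$ (the request is revealed only at time $1$) so that continuity transfers it unchanged to the perturbed instance. A secondary point is scope: the sketch above is for deterministic algorithms, but it extends to randomized ones, since $\costalg\ge\costopt$ holds pointwise and $1$-consistency in expectation, $\mathbb{E}[\costalg]=\costopt=2$ on the perfect instance, forces $\costalg=2$ almost surely, hence the server is at position $1$ at time $1$ almost surely; the same coupling to the perturbed instance then yields $\costalg=4$ almost surely and again the factor $2$.
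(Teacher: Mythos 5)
Your proof is correct and follows essentially the same approach as the paper's: working on the real line, you use $1$-consistency on the perfectly predicted instance to pin the server to position $1$ at time $1$, then exploit indistinguishability on $[0,1)$ to reveal a differently placed request at time $1$ and force a ratio of $2$. The differences are cosmetic—you use a single request with the perturbation at $-1$ (yielding $4/2$) where the paper uses two requests with the second perturbed to the origin (yielding $2/1$)—and your extra care about the continuity of the trajectory at $t=1$ and the randomized case only makes the same argument more rigorous.
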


We first consider a 
naive
algorithm, \algT, and
show that it is consistent and smooth but not robust where the details can be founded in the Appendix~\ref{naivealgo}. 
However, we observe that the arrival time of the last request $t_n$ gives a lower bound of the optimal solution, $\costopt$.
As a result, we modify the 
\algT~algorithm and propose the~\algI~algorithm, which is $1$-consistent, $3$-robust and $(2 \errortime + 4 \errorpos)$-smooth.

\begin{algorithm}
    \caption{\textsc{Learning-Augmented Routing Trust (LAR-Trust)}}\label{alg:naive}
    \begin{algorithmic}[1] 
    \Require The current time $t$, the number of requests $n$, a sequence prediction $\Xpred$, and the set of current released unserved requests $U_t$.
    \State 
    Compute an optimal route $\hat T = (\hat x_{(1)}, \ldots, \hat x_{(n)})$ to serve the requests in $\Xpred$ and return to the origin $o$, where $\hat x_{(i)} = (\hat t_{(i)}, \hat p_{(i)})$ 
    denotes
    the $i^{th}$ predicted request 
    in $\hat T$;
    \State Start to follow the route $\hat T$;
    \State \textbf{For} any $i =1, \ldots, n$ \textbf{do}
    
        \State \quad \textbf{If} $t = t_{(i)}$ 
        \textbf{then}
            \State \quad \quad Update the route $\hat T$ by adding the request $x_{(i)}$ after the 
            predicted request $\hat x_{(i)}$;
        \State \quad \textbf{EndIf}
        \State \quad \textbf{If} $p(t) = \hat p_{(i)}$ and $t < t_{(i)}$ \textbf{then}
            \State \quad \quad Wait at position $\hat p_{(i)}$ until time $t_{(i)}$.
            \Comment{Wait until the request arrives}
        \State \quad \textbf{EndIf}
    
    \State \textbf{EndFor}
    \end{algorithmic}
\end{algorithm}

\begin{restatable}{rThm}{thmnaive}\label{thm:naive}
The competitive ratio of the~\algT~algorithm
is $1 + 2 \errortime + 4 \errorpos$.
Thus, the algorithm is $1$-consistent and $(2 \errortime + 4 \errorpos)$-smooth but not robust.
\end{restatable}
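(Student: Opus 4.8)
The plan is to bound the algorithm's completion time $\costalg$ by first comparing it to the offline optimum of the \emph{predicted} instance, $\costopt(\Xpred)$, and then comparing $\costopt(\Xpred)$ to the true optimum $\costopt(\X)=\costopt$; each comparison will cost an additive $\errortime + 2\errorpos$, so chaining them yields $\costalg \le \costopt + 2\errortime + 4\errorpos$, from which both $1$-consistency and $(2\errortime+4\errorpos)$-smoothness follow immediately (non-robustness is witnessed by a predicted request placed arbitrarily far from the origin while the true request sits next to it, so the additive term dwarfs $\costopt$).

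The first ingredient is an exact description of $\costalg$. Since $\algT$ rigidly follows the spatial order of the optimal predicted route $\hat T=(\hat x_{(1)},\dots,\hat x_{(n)})$, inserting a detour to the true position $p_{(i)}$ right after each predicted position $\hat p_{(i)}$ and idling at $\hat p_{(i)}$ until the true release time $t_{(i)}$, the server traverses the fixed trajectory $o\to\hat p_{(1)}\to p_{(1)}\to\cdots\to\hat p_{(n)}\to p_{(n)}\to o$ with a release-time wait at each $\hat p_{(i)}$. I would unroll the resulting recurrence (standard for routes with release times) to obtain
\[
\costalg=\max_{0\le j\le n}\bigl(t_{(j)}+L_j\bigr),
\]
where $t_{(0)}:=0$, $L_j$ is the residual trajectory length from $\hat p_{(j)}$ back to $o$, and $L_0$ is the full length.

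For Step A I would bound each term $t_{(j)}+L_j$ by $\costopt(\Xpred)+\errortime+2\errorpos$. Two triangle-inequality estimates do the work: replacing the direct hop $\hat p_{(i)}\to\hat p_{(i+1)}$ by the detour $\hat p_{(i)}\to p_{(i)}\to\hat p_{(i+1)}$ costs at most $2\,d(\hat p_{(i)},p_{(i)})$, so $L_j\le \hat L_j+2\errorpos$ where $\hat L_j$ is the residual length of $\hat T$ itself; and the release times satisfy $t_{(j)}\le \hat t_{(j)}+\errortime$. Finally, because the optimal predicted route $\hat T$ serves $\hat x_{(j)}$ no earlier than $\hat t_{(j)}$ and then still has $\hat L_j$ distance to cover, the predicted optimum obeys $\costopt(\Xpred)\ge \hat t_{(j)}+\hat L_j$. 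Combining the three gives $t_{(j)}+L_j\le \costopt(\Xpred)+\errortime+2\errorpos$ for every $j$ (the $j=0$ term uses only $L_0\le \ell(\hat T)+2\errorpos\le\costopt(\Xpred)+2\errorpos$), hence $\costalg\le\costopt(\Xpred)+\errortime+2\errorpos$.

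Step B is the symmetric transfer estimate $\costopt(\Xpred)\le\costopt+\errortime+2\errorpos$: starting from an optimal route for $\X$, I would substitute each visited true position $p_i$ by its prediction $\hat p_i$ in place, which by the same triangle inequality inflates the tour length by at most $2\errorpos$, while the release-time shift $\hat t_i\le t_i+\errortime$ inflates the completion time by at most a further $\errortime$ (again read off from the max-formula applied to the modified route). Chaining Steps A and B yields $\costalg\le\costopt+2\errortime+4\errorpos$; setting the errors to $0$ gives $\costalg\le\costopt$ (consistency), the additive form is exactly the smoothness statement, and the far-away-prediction instance shows the additive term cannot be replaced by any absolute constant, i.e.\ no robustness. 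I expect the main obstacle to be the bookkeeping that keeps the two error types \emph{separately} additive rather than multiplicatively entangled: the key is to route the analysis through $\costopt(\Xpred)$ and to use the max-formula so that, term by term, a single detour contributes only its own $2\,d(\hat p_{(i)},p_{(i)})$ and a single wait contributes only $\errortime$, which avoids the naive double-counting (bounding $\hat t_{(j)}$ and $\hat L_j$ against $\costopt$ independently) that would otherwise produce a spurious factor of $2$ on $\costopt$.
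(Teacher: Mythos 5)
Your proposal is correct and follows essentially the same route as the paper's proof: both pass through the optimum of the predicted instance via two transfer lemmas, $\costalg \le Z'_{\Xpred} + \errortime + 2\errorpos$ and $Z'_{\Xpred} \le \costopt + \errortime + 2\errorpos$, each obtained by absorbing release-time shifts into an $\errortime$ delay and charging each spatial detour/substitution $2\,d(\hat p_i, p_i)$ via the triangle inequality, then chaining to get $\costalg \le \costopt + 2\errortime + 4\errorpos$. Your max-over-prefixes formula $\costalg=\max_j\bigl(t_{(j)}+L_j\bigr)$ is simply a more rigorous rendering of the paper's informal ``wait $\errortime$ at the origin, then follow the route with detours'' argument, and your explicit far-away-prediction instance supplies the non-robustness witness that the paper asserts without detail.
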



The intuition of the~\algI~algorithm is to keep the impact of errors under control. Specifically, the server finds an optimal route $\hat T$ to serve 
the requests in $\Xpred$ in the beginning. Before time $t_n$, the server follows the route $\hat T$ and adjusts it when necessary. To explain how we make adjustments, we describe the route $\hat T$ as a sequence of requests that are in order of priority. That is, $\hat T \defeq (\xhat_{(1)}, \ldots, \xhat_{(n)})$ where $\xhat_{(i)} = (\hat t_{(i)}, \hat p_{(i)})$ 
denotes 
the $i^{th}$ request 
to be served in $\hat T$. If a request $x_i$ arrives, the server modifies the route $\hat T$ by inserting the request $x_{(i)}$ into the sequence $\hat T$ 
after the request $\xhat_{(i)}$.
By updating the route, the server can visit each request $x_i$ with the adjusted version of the route $\hat T$. When the last request $x_n$ arrives (i.e., $t = t_n$), we compute the length of our two possible routes: (1) the remaining distance of $\hat T$ after updates,
denoted 
by
$r_1$, and (2) the distance to go back to the origin and follow the final route $T_{U_{t_n}}$ to visit the requests in $U_{t_n}$,
denoted 
by 
$r_2$. Then, the server chooses the shorter one to visit the remaining requests in $\X$.
For (1), we can rewrite it as the na\"{\i}ve algorithm by updating the route sequentially, where the details can be
founded in the Appendix~\ref{naivealgo}.
We first show that the cost of the~\algT~algorithm
is associated with the error 
defined in Section~\ref{premlim}. Note that the server does not change the order to serve the requests in $\X$ unless the server gives up the route $\hat T$ at time $t_n$.


\begin{algorithm}[!ht]
    \caption{\textsc{Learning-Augmented Routing With Identity (LAR-ID)}}\label{alg:LASw}
    \begin{algorithmic}[1] 
    \Require The current time $t$, the number of requests $n$, a sequence prediction $\Xpred$, and the set of current released unserved requests $U_t$.
    \State $F = 0$;
    \Comment{Initialize $F=0$ to indicate that we trust the prediction} 
    \State 
    First, compute an optimal route $\hat T = (\xhat_{(1)}, \ldots, \xhat_{(n)})$ to serve the requests in $\Xpred$ and return to the origin $o$, where $\xhat_{(i)} = (\hat t_{(i)}, \hat p_{(i)})$ is the $i^{th}$ request to serve in $\hat T$;
    \State Start to follow the route $\hat T$;
    
    \State \textbf{While} $F = 0$ \textbf{do}
    \Comment{Trust the prediction}
        \State \quad \textbf{If} $t = t_{(i)}$ 
        \textbf{then}
            \State \quad \quad $\hat T = (\xhat_{(1)}, \ldots, \xhat_{(i)}, x_{(i)}, \xhat_{(i+1)},  \ldots, \xhat_{(n)})$, for $i = 1, \ldots, n$;
            \Comment{Update the route}
            \State \quad \quad \textbf{If} $t = t_n$ \textbf{then}
            \Comment{Find the shorter route}
                \State \quad \quad \quad $r_1 \leftarrow$ the remaining distance of following $\hat T$;
                \State \quad \quad \quad Compute a route $T_{U_{t_n}}$ to start and finish at the origin $o$ and serve the requests in 
                \Statex \quad \quad \quad $U_{t_n}$;
                \State \quad \quad \quad $r_2 \leftarrow d(p(t), o) + |T_{U_{t_n}}|$;
                \State \quad \quad \quad \textbf{If} $r_1 > r_2$ \textbf{then}
                    \State \quad \quad \quad \quad Go back to the origin $o$;
                    \Comment{Give up the predicted route}
                    \State \quad \quad \quad \quad $F = 1$.
                \State \quad \quad \quad \textbf{Else}
                    \State \quad \quad \quad \quad 
                    Move ahead on the current route $\hat T$.
                \State \quad \quad \quad \textbf{EndIf}
            \State \quad \quad \textbf{Else}
                \State \quad \quad \quad
                Move ahead on the current route $\hat T$.
            \State \quad \quad \textbf{EndIf}
        \State \quad \textbf{EndIf}
        
        \State \quad \textbf{If} $p(t) = \hat p_{(i)}$ and $t < t_{(i)}$ \textbf{then}
            \State \quad \quad Wait at position $\hat p_{(i)}$ until time $t_{(i)}$.
            \Comment{Wait until the request arrives}
        \State \quad \textbf{EndIf}
    \State \textbf{EndWhile}
    \State \textbf{While} $F = 1$ \textbf{do}
    \Comment{Do not trust the prediction}
        \State \quad Start to follow the route $T_{U_{t_n}}$ to serve the requests in $U_{t_n}$.
    \State \textbf{EndWhile}
    \end{algorithmic}
\end{algorithm}

\begin{restatable}{rThm}{thmLASw}\label{thm:LASw}
The competitive ratio of the~\algI~algorithm is $\min \{3, 1 + (2 \errortime + 4 \errorpos) / \costopt\}$. Thus, the algorithm is $1$-consistent, $3$-robust and $(2 \errortime + 4 \errorpos)$-smooth.
\end{restatable}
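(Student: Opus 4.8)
The plan is to reduce the analysis of \algI{} to two already-understood regimes and to exploit that, at the decisive moment $t_n$, the algorithm commits to the shorter of two candidate completions. First I would record the structural observation that, by construction, \algI{} behaves \emph{identically} to \algT{} up to the arrival of the last request: in both, the server follows the optimal predicted route $\hat T$, inserts each actual request $x_{(i)}$ immediately after its predicted counterpart $\xhat_{(i)}$ when $t = t_{(i)}$, and waits at $\hat p_{(i)}$ whenever it arrives there before $t_{(i)}$. The two algorithms diverge only at $t = t_n$, where \algI{} compares $r_1$, the remaining length of the (updated) route $\hat T$, against $r_2 = d(p(t_n),o) + |T_{U_{t_n}}|$, the cost of abandoning $\hat T$, returning home, and serving the unserved requests $U_{t_n}$ optimally. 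Since no waiting occurs after $t_n$ (all requests have arrived), the completion time of \algI{} is exactly $t_n + \min\{r_1, r_2\}$, so it suffices to bound $t_n + r_1$ and $t_n + r_2$ separately.

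For the smoothness branch I would use $\costalg \le t_n + r_1$. The value $t_n + r_1$ is precisely the completion time the server would incur by following $\hat T$ to the end, which is exactly the completion time of \algT{}. Invoking Theorem~\ref{thm:naive}, this is at most $\costopt + 2\errortime + 4\errorpos$, giving $\costalg/\costopt \le 1 + (2\errortime + 4\errorpos)/\costopt$.

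For the robustness branch I would use $\costalg \le t_n + r_2 = t_n + d(p(t_n),o) + |T_{U_{t_n}}|$ and show each summand is at most $\costopt$. The bound $t_n \le \costopt$ holds because the last request is served no earlier than $t_n$. The bound $d(p(t_n),o) \le t_n \le \costopt$ holds because the server moves at unit speed from the origin for total elapsed time $t_n$, so its displacement cannot exceed the distance it has travelled, which is at most $t_n$. Finally, $|T_{U_{t_n}}| \le \costopt$ because the trajectory of the offline optimum is a closed walk through all of $\X \supseteq U_{t_n}$ of length at most $\costopt$, and the minimum tour on the subset $U_{t_n}$ cannot exceed it. Summing yields $\costalg \le 3\costopt$. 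Combining the two branches gives the ratio $\min\{3, 1 + (2\errortime + 4\errorpos)/\costopt\}$; setting the error to $0$ gives $1$-consistency, the uniform bound $3$ gives $3$-robustness, and the additive form $\costalg \le \costopt + 2\errortime + 4\errorpos$ gives $(2\errortime + 4\errorpos)$-smoothness.

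As for the main obstacle, I would flag two points. First, the reduction to \algT{} must be made rigorous: one has to verify that the sequence of positions, insertions, and waiting events produced by \algI{} before $t_n$ coincides with that of \algT{}, so that Theorem~\ref{thm:naive} applies verbatim to $t_n + r_1$; a mismatch in the service order would break the reuse. Second, the easiest inequality to mishandle is $|T_{U_{t_n}}| \le \costopt$, which depends on the (standard, but worth stating explicitly) fact that a minimum tour on a subset of points is dominated by the length of the offline optimal trajectory, using the metric/triangle-inequality structure together with the unit-speed assumption. The remaining estimates $t_n \le \costopt$ and $d(p(t_n),o) \le t_n$ are routine.
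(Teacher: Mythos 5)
Your proposal is correct and follows essentially the same argument as the paper: the same case split at time $t_n$ on $r_1$ versus $r_2$, invoking Theorem~\ref{thm:naive} for the trust branch and the three bounds $t_n \leq \costopt$, $d(p(t_n),o) \leq t_n$, and $|T_{U_{t_n}}| \leq \costopt$ for the give-up branch. Your explicit flagging of the reduction to \algT{} before $t_n$ and of the subset-tour inequality makes rigorous two steps the paper leaves implicit, but the underlying proof is identical.
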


\begin{restatable}{rCor}{corLASw}\label{cor:LASw}
The~\algI~algorithm is a $\min \{3.5, 2.5 + (3.5 \errortime + 7 \errorpos) / \costopt\}$-competitive polynomial-time algorithm using Christofides' 
heuristic. 
\end{restatable}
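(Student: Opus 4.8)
The plan is to treat Corollary~\ref{cor:LASw} as the polynomial-time shadow of Theorem~\ref{thm:LASw}: I would re-run the same two-case analysis of \algI, but wherever the exact algorithm computes an \emph{optimal} route I substitute the route produced by Christofides' heuristic, whose length is at most $1.5$ times that of the optimal TSP tour on the same point set. Two such computations occur, namely the predicted route $\hat T$ followed while $F=0$ and the fallback route $T_{U_{t_n}}$ followed once $F=1$, and the whole proof amounts to tracking how this single factor $1.5$ enters each of the two branches before taking the minimum, exactly as in Theorem~\ref{thm:LASw}.

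I would first handle the robust branch, which is the cleaner of the two. When \algI~sets $F=1$ at time $t_n$, the server sits at $p(t_n)$, returns to the origin, and follows $T_{U_{t_n}}$, so its completion time is at most $t_n + d(p(t_n),o) + |T_{U_{t_n}}|$. Using $t_n \le \costopt$, the unit-speed bound $d(p(t_n),o) \le t_n \le \costopt$, and the Christofides guarantee $|T_{U_{t_n}}| \le 1.5\,L_{off}(U_{t_n}) \le 1.5\,\costopt$ (every request in $U_{t_n}$ is already released, so this is a genuine TSP instance), the three terms sum to $3.5\,\costopt$. This reproduces the $3$-robustness of Theorem~\ref{thm:LASw} inflated to $3.5$, and it is exactly the first argument of the claimed minimum.

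The smooth branch is where I expect the real work. Following the structure of Theorem~\ref{thm:LASw}, I would bound the cost of trusting the Christofides predicted tour $\hat T$ together with the online insertions of $x_{(i)}$ after $\hat x_{(i)}$ and the waits at each $\hat p_{(i)}$. The key difference from the exact case is that Christofides only returns a spatial TSP tour on the positions of $\Xpred$, which is \emph{not} release-time aware; consequently the completion time must absorb an extra additive $t_n \le \costopt$ (wait for the last request, then finish the remaining tour), which is precisely what turns the $1$-consistency of Theorem~\ref{thm:LASw} into the $2.5$-consistency here. For the error-dependent terms I would relate the tour length to $\costopt$ through the perturbation estimate $\mathrm{TSP}(\Xpred) \le \mathrm{TSP}(\X) + 2\errorpos \le \costopt + 2\errorpos$, so that the tour contributes roughly $1.5(\costopt + 2\errorpos)$, and then add the insertion detours (bounded by the position error) and the extra waiting caused by mispredicted arrival times (bounded by the time error). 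Collecting the scaled contributions is what yields the smoothness function $3.5\,\errortime + 7\,\errorpos$, and the branch as a whole gives $2.5 + (3.5\,\errortime + 7\,\errorpos)/\costopt$.

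Finally, since \algI~commits to whichever of $r_1,r_2$ is shorter at time $t_n$, its cost is at most the minimum of the two branch bounds, giving $\min\{3.5,\ 2.5 + (3.5\,\errortime + 7\,\errorpos)/\costopt\}$; polynomial running time is immediate because Christofides' heuristic runs in polynomial time and every other step of \algI~is elementary. The main obstacle is the bookkeeping in the smooth branch: unlike the robust branch, where the crude bounds $t_n \le \costopt$ and $d(p(t_n),o)\le\costopt$ suffice, here I must keep the time-error contribution separate from the $t_n$ term (a naive ``wait until $t_n$, then traverse the whole tour'' estimate would discard the $\errortime$ dependence entirely) and verify that the factor $1.5$ multiplies exactly the route-length quantities and not the release-time slack, so that the coefficients $2.5$, $3.5$, and $7$ come out as claimed rather than something larger.
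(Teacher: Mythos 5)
Your proposal is correct and proves the corollary, but your smooth branch takes a genuinely different route from the paper's. The robust branch is identical: both bound $\costalg \leq t_n + d(p(t_n),o) + |T_{U_{t_n}}| \leq \costopt + \costopt + 1.5\,\costopt = 3.5\,\costopt$. For the trusted branch, the paper runs a \emph{multiplicative} chain (its Lemma~\ref{lem:polynaive}): first $Z^*_{\Xpred} \leq \costopt + \errortime + 2\errorpos$ by the same route-coupling as in Theorem~\ref{thm:naive}; then $Z'_{\Xpred} \leq 2.5\, Z^*_{\Xpred}$, using the observation that waiting at the origin until the last (predicted) arrival and then following Christofides gives a release-time-aware $2.5$-approximation; and finally $\costalg \leq Z'_{\Xpred} + \errortime + 2\errorpos$. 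The coefficients in the statement are exactly the artifacts of this chain: $3.5 = 2.5\cdot 1 + 1$ and $7 = 2.5\cdot 2 + 2$, because the factor $2.5$ multiplies the error terms inside $Z^*_{\Xpred}$. You instead apply the $1.5$ factor only to the \emph{spatial} tour length via the perturbation $\mathrm{TSP}(\Xpred) \leq \mathrm{TSP}(\X) + 2\errorpos \leq \costopt + 2\errorpos$ and keep the release-time slack purely additive ($t_n \leq \costopt$), adding detours of $2\errorpos$; carried out, this gives a bound of roughly $2.5\,\costopt + \errortime + 5\errorpos$, which is \emph{stronger} than the paper's and therefore implies the claim. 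Two small corrections to your own commentary, though: your assertion that collecting your contributions ``yields $3.5\,\errortime + 7\errorpos$'' is not accurate — your additive accounting produces smaller coefficients, and those exact constants only emerge from the paper's multiplicative decomposition — and your worry about preserving the $\errortime$ dependence is backwards. Since the corollary asserts an upper bound, the ``naive'' estimate you dismiss is perfectly legitimate: with waits only at stops until (actual) releases, the completion time satisfies $\costalg \leq \max\bigl(\text{total travel},\ \max_j (t_j + D_j)\bigr) \leq t_n + \text{total travel} \leq \costopt + 1.5(\costopt + 2\errorpos) + 2\errorpos$, with no $\errortime$ term at all; discarding $\errortime$ loses nothing and even avoids having to justify your one hand-wavy step, namely that the cascading waits at predicted positions are jointly bounded by a single $\errortime$ (which otherwise needs the paper's coupling argument of delaying the start by $\errortime$).
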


\section{Predict the Last Arrival Time}\label{sec:4} 
One can 
observe that Corollary~\ref{cor:PAHwait} provides 
a special 
insight about 
predictions. 
That is, when the last request arrives, 
if the server is at the origin,  
we can apply an offline algorithm 
for the 
TSP to 
serve all 
the remaining unserved requests without dealing with the release time of the requests. Based on the 
insight, 
it is fair to consider 
another 
simpler model by predicting the arrival time of the last requests $\hat{t}_n$ only, 
rather than 
predicting a whole sequence of future requests.  
Though 
we first show 
the restricted power of the model 
due to the limited information of predictions. Next, 
compared to \PAH, 
we introduce a 
pure online algorithm, \redesign, 
and present 
a polynomial-time learning-augmented algorithm 
for the restricted prediction model.

Intuitively, 
if the arrival time of the last request is given, 
the server can just wait until the last request 
arrives before starting a route. 
That is, 
the completion time is $\costalg \leq t_n + |T_{\X}| \leq 2 \costopt$.
Precisely, 
even when the prediction is correct, 
the lower bound of 2, i.e., 
Theorem 3.2 in~\cite{ausiello2001algorithms} still holds and so does the 
corollary.

\begin{theorem}[\cite{ausiello2001algorithms}, Theorem 3.2]\label{thm:PAHlb}
There is no $c$-competitive algorithm for the OLTSP with $c < 2$.
\end{theorem}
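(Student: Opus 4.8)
The statement is a lower bound, so the plan is to use the standard adversary (lower-bound) method: fix an arbitrary deterministic online algorithm and construct, adaptively against its trajectory, a request sequence in a suitable metric for which $\costalg \ge (2-o(1))\,\costopt$. The key asymmetry the adversary exploits is that it may choose each request's position and release time \emph{after} observing where the server has travelled, whereas the offline optimum is allowed to anticipate the entire sequence.

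First I would isolate the mechanism with a single request on the real line. At time $1$ I observe the server's position $p(1)$; by symmetry assume $p(1)\ge 0$, and release one request at $-1$ with release time $1$. The offline optimum can leave the origin at time $0$, reach $-1$ exactly at its release time, and return, so $\costopt = 2$; meanwhile the online server, sitting at $p(1)\ge 0$, must still travel to $-1$ and home, giving $\costalg \ge 3$. This already exhibits the core tension --- the server cannot be in two places, so the adversary hides the work where the server is not --- but it only yields a ratio of $3/2$.

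To reach the tight factor $2$ I would amplify this idea in a metric in which the cost of a tour depends on the \emph{order} in which requests are visited (a single line segment does not suffice, since on a line the two visiting orders of any pair of requests cost the same). The plan is to release a short prefix of requests that lure the server to commit to one branch, then reveal, just in time, further requests positioned so that the optimal tour sweeps all of them in one pass while the server is stranded and must retrace almost the entire tour. Each decisive request is released at a time equal to its distance from the origin, so that the offline optimum is never forced to idle and its completion time stays at the pure geometric tour length. Letting the instance (or a doubling family of such instances) grow then drives $\costalg/\costopt \to 2$.

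The main obstacle, and the reason naive attempts collapse to a ratio near $1$, is that every request used to mislead the server is also genuine work that the optimum must serve: if both the lures and the payload can be covered by a single cheap sweep for the online server \emph{and} for $\opt$, no gap opens. The crux is therefore a simultaneous two-sided control --- choosing positions and release times so that (i) $\opt$ serves everything on one non-idling tour whose length equals its geometric lower bound, while (ii) a potential/position argument certifies that, whatever the server did, it is displaced from the decisive requests by a distance comparable to $\opt$'s whole tour. Making (i) and (ii) hold at once is where the instance must be engineered with care.
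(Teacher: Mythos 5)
You have correctly reproduced the easy warm-up adversary (the single request at $-1$ released at time $1$, giving $\costalg \ge 3$ against $\costopt = 2$), and you have correctly diagnosed why naive amplification fails: every lure is genuine work for the optimum, so paced ping-pong constructions tend to collapse. But after this diagnosis the proposal stops short of a proof. The entire content of the theorem is the explicit instance (or family of instances) forcing $\costalg \ge (2-o(1))\costopt$, and you never construct it: properties (i) and (ii) are postulated as desiderata, the ``potential/position argument'' certifying displacement is not given, and you concede that making (i) and (ii) hold simultaneously ``is where the instance must be engineered with care.'' That engineering is precisely the missing step, not a routine detail. Indeed, your own pacing rule --- releasing each decisive request at a time equal to its distance from the origin so that $\opt$ never idles --- is exactly the regime in which the gap closes: if each new request costs the online server one extra round trip but also delays $\opt$'s feasible schedule by the same amount (as happens, e.g., in a star-graph ping-pong where both sides pay $2$ per tip), the ratio tends to $1$, not $2$. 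Escaping this requires a carefully chosen metric structure (your instinct that the line cannot work is sound: on the line the closed OLTSP admits a competitive ratio strictly below $2$, so the construction must exploit a richer space), and the proposal does not supply that structure.

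For comparison: the paper under review does not prove this statement at all --- it imports it verbatim as Theorem 3.2 of Ausiello et al.~\cite{ausiello2001algorithms}, where the bound of $2$ is established by a concrete adversarial construction, not merely a construction schema. So the honest assessment is that your write-up is a correct proof \emph{plan} with a sound $3/2$ warm-up and an accurate identification of the main obstacle, but as a proof of the stated lower bound it has a genuine gap: the decisive construction and its two-sided analysis (the exact lower bound on the online server's remaining work at the last release time, matched against an upper bound on $\opt$'s completion time) are absent, and without them the claimed limit $\costalg/\costopt \to 2$ is unsubstantiated.
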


\begin{restatable}{rCor}{corlblast}\label{thm:>2}
Given the arrival time of the last request $\hat t_n$, 
there is no $c$-competitive algorithm with $c < 2$.
\end{restatable}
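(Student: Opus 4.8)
The plan is to reduce to the classical lower bound of Theorem~\ref{thm:PAHlb}, arguing that the hard instance used there stays hard even when the correct prediction $\hat t_n = t_n$ is handed to the algorithm for free. The point is that in that instance the last arrival time does not depend on how the online server behaves, so the prediction is merely a constant the algorithm could have assumed a priori; it conveys no information that helps the server evade the adversary. Note that this is precisely the consistency regime ($\errorlast = 0$), so establishing it gives the claimed consistency barrier of $2$, complementing the matching guarantee $\costalg \le t_n + |T_{\X}| \le 2\,\costopt$ already observed above, in which the prediction is "used" only to decide how long to wait.

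Concretely, I would first extract from the proof of Theorem~\ref{thm:PAHlb} the family of adversarial request sequences witnessing that no algorithm beats $2$, and isolate the structural property that the adversary's adaptive decisions (which position to assign a request, based on the server's current location) occur at release times fixed in advance; in particular the arrival time $t_n$ of the last request is a single value common to every branch of the adversary's strategy. Granting the algorithm $\hat t_n = t_n$ is then equivalent to granting it a known constant: any algorithm $\mathsf{ALG}(\hat t_n)$ using the prediction can, on these instances, be simulated by a no-prediction algorithm that hard-codes this constant. Hence $\mathsf{ALG}(\hat t_n)$ cannot outperform Theorem~\ref{thm:PAHlb} on the worst instance, so $\costalg \ge (2-o(1))\,\costopt$ there, and no $c<2$ is achievable even with the prediction.

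The hard part will be verifying the branch-independence of $t_n$ for the chosen instance. If the off-the-shelf construction behind Theorem~\ref{thm:PAHlb} has a last arrival time that varies with the adversary's (hence the server's) choices, I would modify it so that all branches terminate with a common final release, for instance by padding every branch with a dummy request at the origin $o$ released at a fixed time equal to the maximum last-arrival time across branches. Such a dummy is served at no cost whenever the server is at $o$, so it leaves both $\costalg$ and $\costopt$ essentially untouched; the delicate point is to choose the padding so that it does not inflate $\costopt$ relative to $\costalg$ and thereby dilute the ratio below $2$. Checking that the padded family still drives $\costalg/\costopt \to 2$ while exposing a truthful, branch-independent $\hat t_n$ is the crux of the argument.
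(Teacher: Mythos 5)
Your proposal takes essentially the same route as the paper, whose entire argument is the one-line observation that the adversarial construction behind Theorem~\ref{thm:PAHlb} remains in force under the perfect prediction $\hat t_n = t_n$ (i.e., in the $\errorlast = 0$ regime), so no $c$-competitive algorithm with $c<2$ exists even with the prediction. Your extra scrutiny of whether $t_n$ is common to all branches of the adversary's strategy, together with the origin-padding fix, addresses a subtlety that the paper's terse proof implicitly assumes away, but it is a refinement of the same reduction rather than a different approach.
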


In order to design a learning-augmented online algorithm with robustness, we
present 
a simpler online algorithm, \redesign, 
than \PAH, 
and it can
achieve the same competitive ratio of $3$ in polynomial time. 
Compared to \PAH, 
the~\redesign~algorithm always goes back to the origin and redesign an optimal (or approximate) route for $U_t$, 
whenever a new request $x_i$ arrives. 
The following two lemmas show that the~\redesign~algorithm is $3$-competitive. 

\begin{algorithm}
    \caption{\redesign 
    }\label{alg:replan}
    
    \begin{algorithmic}[1] 
    \Require The current time $t$ and the set of current released unserved requests $U_t$
    \State \textbf{While} $U_t \neq \emptyset$ \textbf{do}
        \State \quad \textbf{If} the server is at the origin $o$ \textbf{then}
            \State \quad \quad Start to follow a route $T_{U_t}$ to serve the requests in $U_t$ and return to the origin $o$.
        \State \quad \textbf{ElseIf} the server is 
        currently moving along
        a route $T_{U_{t'}}$, for some $t' < t$ \textbf{then}
            \State \quad \quad \textbf{If} a new request $x_i = (t_i, p_i)$ arrives \textbf{then}
                \State \quad \quad \quad Go back to the origin $o$.
            \State \quad \quad \textbf{EndIf}
        \State \quad \textbf{EndIf}
    \State \textbf{EndWhile}
    \end{algorithmic}
\end{algorithm}

\begin{restatable}{rLem}{lemhalf}\label{lem:1/2}
Given 
that $p(t)$ is the current position of a server 
operated by the~\redesign~algorithm, we have $d(p(t), o) \leq \frac{1}{2} \costopt$, for any $t$.
\end{restatable}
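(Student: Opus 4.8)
The plan is to exploit the fact that a server running \redesign is, at every moment, either traversing a closed route $T_{U_{t'}}$ that was planned the last time it sat at the origin, or walking straight back toward the origin after an interruption. In both situations I want to bound its distance to $o$ by half the length of the relevant planned route, and then bound that route length by $\costopt$.

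First I would fix an arbitrary time $t$ and let $t' \le t$ be the last moment the server departed the origin along a route $T_{U_{t'}}$. Consider the portion of $T_{U_{t'}}$ already traversed: let $\ell_1$ be its length (the distance walked from $o$ to $p(t)$ along the route) and let $\ell_2 = |T_{U_{t'}}| - \ell_1$ be the length of the remaining portion back to $o$. Since the traversed portion is a path in the metric space from $o$ to $p(t)$, the triangle inequality gives $d(p(t),o) \le \ell_1$; since the remaining portion is a path from $p(t)$ back to $o$, it likewise gives $d(p(t),o) \le \ell_2$. Adding and halving yields $d(p(t),o) \le \frac{1}{2}(\ell_1+\ell_2) = \frac{1}{2}\,|T_{U_{t'}}|$. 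This \emph{both-halves} estimate is the crux, and it uses only that $T_{U_{t'}}$ is a closed walk through $o$, so it is valid in an arbitrary metric space and for any point of the route, not merely request vertices.

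Next I would bound $|T_{U_{t'}}| \le \costopt$. Because $U_{t'}$ consists of already-released requests, its positions form a subset of all request positions of $\X$; by shortcutting (triangle inequality) the shortest closed tour through $o$ serving $U_{t'}$ is no longer than the shortest such tour serving all of $\X$, and the latter is at most the total distance travelled by the offline optimum, hence at most $\costopt$. Here I use that $T_{U_{t'}}$ is an \emph{optimal} route for $U_{t'}$. Combining with the previous step gives $d(p(t),o) \le \frac{1}{2}\costopt$ whenever the server is moving along a planned route.

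It remains to handle the interruption phase, which I expect to be the only delicate point. When a new request arrives at some position $q = p(t_{int})$ on $T_{U_{t'}}$, the bound $d(q,o)\le \frac{1}{2}\costopt$ already holds there by the argument above; thereafter the server walks directly toward $o$, so $d(p(t),o)$ is monotonically non-increasing until it reaches the origin and a fresh route is planned. Hence the bound is preserved throughout the return, and since every instant falls into one of these two regimes, $d(p(t),o)\le\frac{1}{2}\costopt$ for all $t$. The main obstacle is really just making the both-halves inequality rigorous for an arbitrary interior point of the route and confirming that the factor $\frac{1}{2}$ relies on the route being optimal (an approximate route would only inflate $|T_{U_{t'}}|$ and thus the constant); once these are pinned down the remainder is bookkeeping.
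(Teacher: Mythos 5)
Your \emph{both-halves} argument ($d(p(t),o)\le\min\{\ell_1,\ell_2\}\le\frac{1}{2}|T_{U_{t'}}|$) is correct, and combined with the subset/shortcutting bound it does prove the lemma \emph{when the followed route $T_{U_{t'}}$ is optimal}. But that is precisely where your proof diverges from what the lemma must deliver: in this paper the \redesign~algorithm is instantiated in polynomial time, with $T_{U_{t'}}$ computed by Christofides' heuristic (the paper's own proof of this lemma says so explicitly, and Lemma~\ref{lem:replan} and Theorem~\ref{thm:LAM} both invoke $d(p(t),o)\le\frac{1}{2}\costopt$ for a server following such approximate routes). With a $1.5$-approximate route your chain only gives $d(p(t),o)\le\frac{1}{2}|T_{U_{t'}}|\le\frac{3}{4}\costopt$. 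You flag this dependency yourself at the end but wave it off as ``only inflating the constant'' --- that inflation is fatal here, since it would weaken the lemma's statement and propagate into the downstream bounds (e.g.\ \redesign~would become $3.25$-competitive rather than $3$-competitive, and the constants in Theorem~\ref{thm:LAM} would degrade as well).

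The paper's proof avoids this by never measuring the server against the route it actually follows. Instead, it observes that at any moment the server is traversing a shortest path between two points $a$ and $b$, each of which is a request position or the origin, and hence both lie on the \emph{optimal} tour $T^*$ for the same request set. Shortcutting $T^*$ to $\{o,a,b\}$ gives $|T^*|\ge d(o,a)+d(a,b)+d(b,o)$, and since $d(p(t),a)+d(p(t),b)=d(a,b)$, one gets $d(p(t),o)\le\min\{d(p(t),a)+d(a,o),\,d(p(t),b)+d(b,o)\}\le\frac{1}{2}|T^*|\le\frac{1}{2}\costopt$, independently of how long the followed (approximate) route is. Your handling of the return-to-origin phase (monotone decrease of $d(p(t),o)$) is fine and matches the paper's case analysis; the missing idea is the comparison against $T^*$ through the current segment's endpoints rather than against $|T_{U_{t'}}|$ itself.
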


\begin{restatable}{rLem}{lemreplan}\label{lem:replan}
The~\redesign~algorithm is
a $3$-competitive polynomial-time algorithm for the OLTSP in a metric space using Christofides' 
heuristic. 
\end{restatable}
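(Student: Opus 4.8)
The plan is to analyze the behaviour of \redesign~from the moment the last request arrives onward, since no interruptions can occur past that point. First I would record two standard lower bounds on $\costopt$. The last arrival time satisfies $t_n \le \costopt$, because the optimal route cannot serve $x_n$ before time $t_n$ and must still return home afterward. Moreover, the length of any offline tour visiting all request positions together with the origin is at most $\costopt$, since the completion time of the offline optimum is at least its total travelled distance. Combining the latter with the observation that the optimal TSP tour through a subset of points shortcuts (by the triangle inequality) to at most the optimal tour through all points, the optimal tour through $U_t \cup \{o\}$ is at most $\costopt$ for every $t$; hence Christofides' heuristic produces a route $T_{U_t}$ with $|T_{U_t}| \le \tfrac{3}{2}\costopt$.

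Next I would pin down the server's trajectory from time $t_n$ onward. At the instant $x_n$ arrives the server occupies some position $p(t_n)$, and Lemma~\ref{lem:1/2} gives $d(p(t_n), o) \le \tfrac{1}{2}\costopt$. If the server is not already at the origin, the arrival of $x_n$ triggers a return along the shortest path; crucially, no further request arrives after $t_n$, so this return is uninterrupted and the server reaches $o$ no later than $t_n + d(p(t_n), o) \le t_n + \tfrac{1}{2}\costopt$. From the origin the algorithm computes a single Christofides route for the now-fixed set of unserved requests and follows it to completion without interruption.

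Putting the pieces together, the completion time is at most $\bigl(t_n + \tfrac{1}{2}\costopt\bigr) + \tfrac{3}{2}\costopt = t_n + 2\costopt$, and using $t_n \le \costopt$ this yields $\costalg \le 3\costopt$. Polynomiality is immediate, as the only non-trivial computation is invoking Christofides' heuristic a polynomial number of times.

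I would expect the main obstacle to be the bookkeeping that justifies treating the phase after $t_n$ as a single clean return-then-serve episode: one must check every configuration the server can be in when $x_n$ arrives (already heading home, moving outward on a route, or idling at the origin) and confirm that in each case the return to the origin is uninterrupted and that the position bound of Lemma~\ref{lem:1/2} indeed applies at time $t_n$. The remaining inequalities, namely $t_n \le \costopt$ and the Christofides bound obtained via subset shortcutting, are routine once the two lower bounds on $\costopt$ are in place.
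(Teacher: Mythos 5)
Your proposal is correct and takes essentially the same route as the paper: both bound the completion time by $t_n + d(p(t_n), o) + |T_{U_{t_n}}|$ and then apply the three bounds $t_n \leq \costopt$, $|T_{U_{t_n}}| \leq 1.5\,\costopt$ (Christofides with subset shortcutting), and $d(p(t_n), o) \leq \frac{1}{2}\costopt$ from Lemma~\ref{lem:1/2} to obtain $\costalg \leq 3\,\costopt$. The extra bookkeeping you flag about the uninterrupted return after $t_n$ is handled implicitly in the paper's one-line claim that the algorithm finishes by returning home at $t_n$ and following the final route.
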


\subsection{The Algorithm}\label{sec:4-1}
In this section, we present the~\algL~algorithm and show that it is $\min \{4, 2.5 + |\errorlast| / \costopt \}$-competitive. The idea of the algorithm is to ensure the server 
to arrive 
at the origin $o$ at time $\hatt_n$ and follow the~\redesign~algorithm,  
for the rest of the 
execution. 

It is motivated by the desire to 
enable 
the server 
to approach
the origin at time $t_n$, when the last request arrives.
An intuitive strategy is to wait at the origin $o$ until the predicted last arrival time $\hat t_n$ and then follow a route $T_{U_{\hat t_n}}$ that serves whatever is in hands. It seems to beat the previous result easily when the prediction is perfectly accurate, i.e., $\hat t_n = t_n$. However, waiting at the origin is potentially costly when the quality of predictions is undisclosed. If the expected time $\hat t_n$ is too late, the server might postpone serving requests for a long time. 
In consequence, the competitive ratio might extend to infinity 
so that 
the algorithm has no robustness. 

We propose a learning-augmented algorithm, 
called~\algL.
Basically, 
we incorporate the predicted time $\hat t_n$ into the~\redesign~algorithm by forcing the server to return 
to the origin at time $\hatt_n$ in order to hedge against the possible loss.
At any time $t$, if the server is at the origin $o$, it finds an approximate route $T_{U_t}$ to visit all released unserved requests in $U_t$ by using Christofides' 
heuristic. 
Similar to~\algNI, we add a gadget here: if the route is too long (i.e., $t < \hat t_n < t+|T_{U_t}|$), the server should find the moment $t_{back}$ to stop the route $T_{U_t}$ and start moving towards the origin 
so that the server can arrive 
at the origin 
exactly at time $\hat t_n$. 
On the other hand, if the server is 
moving along 
a route $T_{U_{t'}}$ and a new request $x_i$ arrives,
the server goes back to the origin $o$ immediately at time $t_i$, as~\redesign~performs.

\begin{algorithm}
    \caption{\textsc{Learning-Augmented Routing With Last Arrival
    Time (LAR-LAST)}}\label{alg:LAM}
    \begin{algorithmic}[1] 
    \Require The current time $t$, the predicted last arrival time $\hat t_n$, and the set of current released unserved requests $U_t$.
    \State \textbf{While} $U_t \neq \emptyset$ \textbf{do}
    \State \quad \textbf{If} the server is at the origin $o$ (i.e., $p(t) = o$) \textbf{then}
        \State \quad \quad Compute an approximate route $T_{U_{t}}$ to serve all the requests in $U_t$ and return to the  
        \Statex \quad \quad origin $o$;
        \State \quad \quad \textbf{If} $t < \hat t_n$ and $t + |T_{U_{t}}| > \hat t_n$ \textbf{then}
        \Comment{Add a gadget}
            \State \quad \quad \quad Find 
            the moment $t_{back}$ such that $t_{back} + d(p(t_{back}), o) = \hat t_n$;
            \State \quad \quad \quad Redesign a route $T'_{U_{t}}$ by asking the the server to go back to the origin $o$ at time
            \Statex \quad \quad \quad $t_{back}$ along the shortest path;
            \State \quad \quad \quad Start to follow the route $T'_{U_{t}}$.
        \State \quad \quad \textbf{Else}
            \State \quad \quad \quad Start to follow the route $T_{U_{t}}$.
        \State \quad \quad \textbf{EndIf}
    \State \quad \textbf{ElseIf} the server is currently 
    moving along a route $T_{U_{t'}}$, for some $t' < t$ \textbf{then}
        \State \quad \quad \textbf{If} a new request $x_i = (t_i, p_i)$ arrives \textbf{then}
        \Comment{Similar to \redesign}
            \State \quad \quad \quad Go back to the origin $o$.
        \State \quad \quad \textbf{EndIf}
    \State \quad \textbf{EndIf}
    \State \textbf{EndWhile}
    \end{algorithmic}
\end{algorithm}

Next, we prove the main result of the~\algL~algorithm. Note that the existence of $t_{back}$ can be 
similarly 
derived from Lemma~\ref{lem:tback}.

\begin{restatable}{rThm}{thmLAM}\label{thm:LAM}
The~\algL~algorithm is a $\min \{ 4, 2.5 + |\errorlast| / \costopt \}$-competitive polynomial-time algorithm,
where $\errorlast = \hat t_n - t_n$. Therefore, the~\algL~algorithm is $2.5$-consistent, $4$-robust and $|\errorlast|$-smooth.
\end{restatable}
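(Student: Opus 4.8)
The plan is to prove two separate upper bounds, $\costalg \le 4\,\costopt$ (robustness) and $\costalg \le 2.5\,\costopt + |\errorlast|$ (smoothness); the competitive ratio is then the minimum of the two divided by $\costopt$, and $2.5$-consistency is the special case $\errorlast = 0$ of the smoothness bound. Three ingredients feed every estimate: (i) $\costopt \ge t_n$, together with the fact that the optimal TSP tour on any set of released requests has length at most $\costopt$, so every route the algorithm builds with Christofides' heuristic has length at most $1.5\,\costopt$ (cf. Theorem~\ref{thm:PAH3} and Lemma~\ref{lem:replan}); (ii) the analogue of Lemma~\ref{lem:1/2} for \algL, namely $d(p(t),o)\le \tfrac12\costopt$ at all times, which holds because the gadget only ever moves the server toward $o$; and (iii) Lemma~\ref{lem:tback}, which guarantees the turn-around moment $t_{back}$ exists, so whenever a planned route would straddle $\hatt_n$ the server is forced back to $o$ exactly at time $\hatt_n$. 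The unifying observation is that, once the last request has arrived at $t_n$, no further interruptions occur, so the server eventually reaches $o$ at some time $\tau$ and then completes one uninterrupted route $T_{U_\tau}$. Thus $\costalg = \tau + |T_{U_\tau}| \le \tau + 1.5\,\costopt$, and the whole argument reduces to controlling $\tau$ and the residual route length.

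I would then track the first moment $\sigma$ at which the server is at $o$ with $\sigma \ge t_n$; by ingredient (ii), $\sigma \le t_n + \tfrac12\costopt$. If at $\sigma$ the recomputed route $T_{U_\sigma}$ does not straddle $\hatt_n$ (either $\sigma \ge \hatt_n$, or the route fits before $\hatt_n$ even when $\hatt_n$ is large), then no further gadget fires, $\tau = \sigma$, and $\costalg \le \sigma + 1.5\,\costopt \le t_n + 2\,\costopt \le 3\,\costopt$. If instead $T_{U_\sigma}$ straddles $\hatt_n$, the gadget forces the server home at $\hatt_n$, so $\tau = \hatt_n$ and $\costalg \le \hatt_n + 1.5\,\costopt$; here the straddle condition itself gives $\hatt_n < \sigma + |T_{U_\sigma}| \le t_n + 2\,\costopt$.

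For smoothness I would case on the sign of $\errorlast$. When $\errorlast \ge 0$, the straddle subcase directly yields $\costalg \le \hatt_n + 1.5\,\costopt = t_n + \errorlast + 1.5\,\costopt \le 2.5\,\costopt + |\errorlast|$, and the boundary $\hatt_n = t_n$ recovers the $2.5$-consistency bound. When $\errorlast < 0$, the gadget is inactive after $\hatt_n$ and the execution is pure \redesign; since the server was at $o$ around time $\hatt_n$ and moves at unit speed, its distance from $o$ at $t_n$ is at most $\min\{t_n - \hatt_n,\ \tfrac12\costopt\} = \min\{|\errorlast|,\tfrac12\costopt\}$. Hence $\costalg \le t_n + \min\{|\errorlast|,\tfrac12\costopt\} + 1.5\,\costopt$, which gives the smoothness bound $2.5\,\costopt + |\errorlast|$ through the $|\errorlast|$ branch and robustness $\le 3\,\costopt$ through the $\tfrac12\costopt$ branch. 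The same ``recently at home'' idea also caps $\tau$ whenever the server is merely heading home at the critical time.

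The hard part, for both the $4$-robustness and the positive-error smoothness, is the subcase in which the server is far from $o$ (up to $\tfrac12\costopt$) near $t_n$ and is then forced through a gadget detour before its final route. Here the two naive estimates combine to only $\costalg \le \hatt_n + 1.5\,\costopt < (t_n + 2\,\costopt) + 1.5\,\costopt \le 4.5\,\costopt$, which is half a unit short of the claimed bound; the extra $\tfrac12\costopt$ is precisely the cost of returning to $o$ after $t_n$. The resolution, and the step I expect to be delicate, is that the very excursion responsible for this return already serves the costly, far-away requests, so the residual tour on $U_\tau$ must be correspondingly shorter than the crude $1.5\,\costopt$. Making this rigorous requires relating the server's outward progress to the turn point $q=p(t_{back})$ via $d(q,o) = \hatt_n - t_{back} \le t_{back} - \sigma$, and charging the prefix of requests covered before turning against the optimal tour bounding $|T_{U_\tau}|$, so that the return cost and the residual length trade off to keep the total at $4\,\costopt$ (respectively $2.5\,\costopt + |\errorlast|$). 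Reconciling the Christofides route the server physically follows with the optimal tour used to bound the residual cost is the crux of the whole proof.
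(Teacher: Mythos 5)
Your high-level architecture coincides with the paper's: prove $\costalg \le 4\costopt$ and $\costalg \le 2.5\costopt + |\errorlast|$ separately, using the same three ingredients ($t_n \le \costopt$ and $|T_{U_t}| \le 1.5\costopt$; the extension of Lemma~\ref{lem:1/2} to \algL; the existence of $t_{back}$ from Lemma~\ref{lem:tback}). Your treatment of the easy cases is essentially the paper's: your negative-error analysis via ``the server was at $o$ at time $\hatt_n$, so $d(p(t_n),o) \le \min\{t_n - \hatt_n, \tfrac12\costopt\}$'' is exactly the paper's cases 1-1 and 1-2, your non-straddle case matches 2-1, and your bound $\costalg \le \hatt_n + |T_{U_{\hatt_n}}| \le t_n + \errorlast + 1.5\costopt$ is the paper's case 2-4.

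However, there is a genuine gap, and you flag it yourself: for $4$-robustness in the straddle case you only reach $4.5\costopt$, and the ``charging'' argument you propose to recover the missing $\tfrac12\costopt$ (relating the outward excursion to a shortened residual tour) is never carried out --- and it is not how the bound is actually obtained. The paper closes this case with pure bookkeeping, splitting on $t_L$, the last time before $\hatt_n$ at which a route is planned. If $t_n \le t_L$, all requests are known at $t_L$, so completing the unadjusted route $T^L$ would be an ordinary \redesign~execution costing at most $3\costopt$ (Lemma~\ref{lem:replan}), and the gadget detour adds at most $2d(p(t_{back}),o) \le \costopt$ by Lemma~\ref{lem:1/2}, giving $4\costopt$. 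If $t_L < t_n$, the paper uses the \emph{equality} $t_L + |T'^L| = \hatt_n$ (the adjusted route lands at $o$ exactly at $\hatt_n$, by the choice of $t_{back}$) together with $|T'^L| \le |T^L| \le 1.5\costopt$ and $t_L < t_n \le \costopt$, so $\costalg \le \hatt_n + |T_{U_{\hatt_n}}| = t_L + |T'^L| + |T_{U_{\hatt_n}}| \le \costopt + 1.5\costopt + 1.5\costopt = 4\costopt$. Your looseness comes precisely from discarding this equality in favor of the one-sided straddle inequality $\hatt_n < \sigma + |T_{U_\sigma}|$, which pays the return-to-origin cost $\tfrac12\costopt$ a second time. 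Note also that your anticipated crux --- reconciling the fresh Christofides route with a resumption of the interrupted tour --- would indeed be problematic if attempted ($T_{U_{\hatt_n}}$ is recomputed, not a resumption, so the excursion does not directly shorten it; the paper's own case 2-2 quietly leans on this comparison), but the $t_L < t_n$ subcase shows it can be bypassed entirely. As written, your proposal does not establish the $4$-robustness claim.
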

\section{Extension to The Dial-a-Ride Problem}
As we found the OLDARP has some properties similar to the OLTSP, we extend our models and algorithms to the dial-a-ride problem with unlimited capacity. 
We only mention the necessary changes for adapting the proposed algorithms to this problem as well as the results we obtain, and the details are presented in the Appendix~\ref{LADARP}. The main difference is that the learning-augmented algorithms for the OLDARP are based on the~\redesign~algorithm,
which yields a competitive ratio of $2.5$ when ignoring the computational issue. Note that although the \algsmartstart~algorithm~\cite{ascheuer2000online} has a better competitive ratio than the \redesign~strategy,
it is also more complicated and thus more difficult to be integrated with predictions.

First, we 
develop the~\algdarNI~algorithm by 
substituting the \redesign~strategy for the \PAH~algorithm in~\algNI. 
Note that if the server 
gets to go back to the origin when carrying some requests, it brings all of them to the origin. This algorithm is $(1.5+ \lambda)$-consistent and $(3.5 + 2.5 / \lambda)$-robust but not smooth.
Next, we consider the~\algdarI~algorithm. 
Note that a route for the OLDARP can be described as a series of pickup and delivery positions instead of requests. 
Accordingly, when a request becomes known, the server needs to update two positions.
Then, the algorithm is 
$1$-consistent, $3$-robust, and $(2 \errortime + 4 \errorpos)$-smooth.
Last, we use the~\algdarL~algorithm for the prediction of last arrival time, which replaces Christofides' heuristic with the optimal solutions
for the offline DARP without release time. This change makes the algorithm
$2$-consistent, $3.5$-robust, and $|\errorlast|$-smooth.
\section{Conclusion}
In this study, we have investigated 
two well-known online routing problems 
based on the learning-augmented framework. 
The proposed prediction models and results raise some interesting 
questions to further explore. First, it 
would be of interest to improve the current 
competitive results;  
especially coping with waiting strategies is usually helpful to 
the OLTSP and the OLDARP. 
Note that the currently best results for 
these two problems
exploited waiting strategies, which leaves us an open problem to 
improve our learning-augmented online algorithms by incorporating a waiting strategy.
It would be also worthwhile to design learning-augmented online algorithms for other variants of online routing problems 
with
limited capacity or 
multiple servers. 
\newpage
\nocite{*}
\bibliography{references}
\bibliographystyle{plain}

\newpage

\appendix

\section{Detail of the \PAH~Algorithm}\label{PAH:code}
Here we present the details of \PAH.
\begin{algorithm}
\caption{\PAH
}\label{alg:PAH}
    \begin{algorithmic}[1] 
    \Require The current time $t$ and the set of current released unserved requests $U_t$
    \State \textbf{While} $U_t \neq \emptyset$ \textbf{do}
    \State \quad \textbf{If} the server is at the origin (i.e., $p(t) = o$) \textbf{then}
        \State \quad \quad Start to follow an optimal (or approximate) route $T_{U_t}$ 
        passing through each 
        request 
        \Statex \quad \quad in $U_t$.
    \State \quad \textbf{ElseIf} the server is 
    currently moving along a route $T_{U_{t'}}$, for some $t' < t$ \textbf{then}
        \State \quad \quad \textbf{If} a request $x_i = (t_i, p_i)$ arrives \textbf{then}
            \State \quad \quad \quad \textbf{If} $d(p_i, o) > d(p(t), o)$ \textbf{then}
                \State \quad \quad \quad \quad Go back to the origin $o$.
            \State \quad \quad \quad \textbf{Else}
                \State \quad \quad \quad \quad
                Move ahead
                on the current route $T_{U_{t'}}$.
            \State \quad \quad \quad \textbf{EndIf}
        \State \quad \quad \textbf{EndIf}
    \State \quad \textbf{EndIf}
    \State \textbf{EndWhile}
    \end{algorithmic}
\end{algorithm}

\section{Missing Proofs and Details in Section~\ref{pred:seq}}


\subsection{Missing Proofs in Section~\ref{sec:3-1}}
\thmlbwo*
\begin{proof}
Consider the 
OLTSP on the real line, a special metric space. 
Let the prediction be $\Xpred = \{x_1 = (\delta, \delta), x_2 = (1, 1)\}$, where $\delta \in (0, 1)$.
First, 
suppose 
the actual input is $\X = \Xpred$. In this case, the
offline
optimal solution is to start moving at time $0$, visit $x_1$ at time $\delta$ and $x_2$ at time $1$, and return to the origin at time $2$. To obtain the consistency of $1$, the server must follow the exact same route as the optimal one.
On the other hand, 
if the 
actual input is $\X' = \{x'_1 = (\delta, \delta)\}$, 
i.e. only one request. 
Since the number of requests is unknown, the algorithm recognizes the error at time $1$ and cannot 
go back home until 
time $2$. However, the offline optimal 
solution 
finishes at time $2 \delta$. Thus, the competitive ratio is $\costalg / \costopt \geq 2 / (2 \delta) = 1 / \delta$.
\end{proof}

\lemtback*
\begin{proof}
Define the function $f: [t, t + |T_{U_t}|] \rightarrow [t, t + |T_{U_t}|]$ 
to be 
the moment 
when the server arrives at the origin again; 
that is, 
the server stops following route $T_{U_t}$ and goes back to the origin 
at a particular moment $t_{back}$, i.e., $f(t_{back}) = t_{back} + d(p(t_{back}), o)$. 
Then, 
we let the function $g(x) \defeq f(x)-\lambda |\hat T|$, 
and it has two properties: $g(t) = t - \lambda |\hat T| < 0$ and $g(t + |T_{U_t}|)= t + |T_{U_t}| -\lambda |\hat T| > 0$. Since $g$ is continuous, there must be at least one point $t_{back}\in (t, t + |T_{U_t}|)$ such that $g(t_{back})= 0$ and 
thus $t_{back} + d(p(t_{back}), o) = \lambda |\hat T|$. Without loss of generality, 
we choose the last one 
satisfying $g(t_{back})= 0$.
\end{proof}

\lemconsistency*

\begin{proof}
The server 
gets 
to follow the 
modified framework 
of the \PAH~algorithm at time $0$ and returns to the origin at 
exactly 
time $\lambda |\hat T|$. Then, at time $\lambda |\hat T|$, the server 
follows the predicted route $\hat T$
if  $U_{\lambda |\hat T|} \neq \emptyset$; 
otherwise, it waits until the next request arrives. 
Assume the prediction is perfect, i.e., $\Xpred = \X$ and $|\hat T| = \costopt$.
Since $\hat T$ is an optimal route, the server completes after following $\hat T$.
We consider the following two cases:

\begin{itemize}
    \item {1. $U_{\lambda |\hat T|} \neq \emptyset$.} In this case, the server gets to follow route $\hat T$ at time $\lambda |\hat T|$, and the completion time of the algorithm is $\costalg = \lambda |\hat T| + |\hat T| \leq (1 + \lambda) \costopt$. 
    \item {2. $U_{\lambda |\hat T|} = \emptyset$.} In this case, after time $\lambda |\hat T|$, the server waits for the next request to 
    move. 
    Let $x_j = (t_j, p_j)$ denote this request, which implies 
    $t_j = \min_t \{t > \lambda |\hat T| : U_t \neq \emptyset \}$. The completion time of the algorithm is $\costalg = \lambda |\hat T| + (t_j - \lambda |\hat T|) + |\hat T|$. Note that $t_j - \lambda |\hat T|$ is the time length
    that the server waits at the origin. 
    We split the duration $t_j - \lambda |\hat T|$ into two time intervals according to the operations of the offline optimal server, 
    i.e., OPT: 
    $t_{move}$ and $t_{idle}$. 
    The former denotes how much time 
    the OPT
    is moving, and 
    the latter
    denotes the time length 
    the 
    OPT
    just waits at the origin. 
    That is, 
    $t_{move} + t_{idle} = t_j - \lambda |\hat T|$. Note that 
    \algNI~is 
    not visiting any request during $t_{idle}$, 
    either. Because 
    the case $t_{idle} \neq 0$ favors the algorithm, 
    we simply 
    consider
    the worst case $t_{idle} = 0$ and thus $t_{move} = t_j - \lambda |\hat T|$. Intuitively, the 
    OPT
    gains the advantage by moving toward the next request 
    during $t_{move}$. 
    Moreover, 
    the time/distance 
    the OPT
    saves by moving earlier 
    is at most 
    $\frac{1}{2} \costopt$ 
    due to 
    the triangle inequality. 
    Since the server moves at unit speed, we have $t_{move} \leq \frac{1}{2} \costopt$. To sum up, the completion time is $\costalg \leq \lambda |\hat T| + t_{move} + |\hat T| \leq (1.5 + \lambda) \costopt$.
\end{itemize}
\end{proof}

\lemrobustness*
\begin{proof}
Note that $t_i \leq Z^{OPT}$ for any $i$ and $|T_{U_{t}}| \leq Z^{OPT}$ for any $t$.
To show the robustness, we consider the following cases: 

\begin{itemize}
    \item {1. Assume the server finishes before $\lambda |\hat T|$.} In this case, the server acts exactly like the \PAH~algorithm and does not follow route $\hat T$. 
    Thus, by Theorem~\ref{thm:PAH2}, the server obtains the bound $\costalg \leq 2 \costopt$.
    \item {2. The server cannot finish before $\lambda |\hat T|$.} In this case, there are some unserved requests at time $\lambda |\hat T|$ or some request 
    arriving after time $\lambda |\hat T|$. In both cases, the algorithm first follows $|\hat T|$ and then switches to the \PAH~algorithm if there 
    remain some unserved requests. Also, since the algorithm cannot serve all of the requests before time $\lambda |\hat T|$, it implies that $\costopt \ge \lambda |\hat T|/2$; otherwise, 
    it contradicts Theorem~\ref{thm:PAH2}.
    To analyze the performance of 
    \algNI~in this case, we 
    divide
    the case into the following four subcases: 
        \begin{itemize}
            \item{2-1. $U_{\lambda |\hat T|} \neq \emptyset$ and $t_n > (1 + \lambda)|\hat T|$.} 
            Since the~\algNI~algorithm follows \PAH~after time $(1 + \lambda)|\hat T|$, we have $\costalg \leq 2 \costopt$ by Corollary~\ref{cor:PAHwait}.
            \item{2-2. $U_{\lambda |\hat T|} \neq \emptyset$ and $t_n \leq (1 + \lambda)|\hat T|$.} 
            By using $\lambda |\hat T| / 2 \leq Z^{OPT}$,
            the completion time of the algorithm can be 
            derived 
            as follows: 
            \begin{align*}
                \costalg & \leq \lambda |\hat T| + |\hat T| + |T_{U_{(1+\lambda)|\hat T|}}|
                \\& \leq 2 \costopt + (2 / \lambda) \costopt + \costopt
                \\& \leq (3 + 2 / \lambda) \costopt
            \end{align*}
            \item{2-3. $U_{\lambda |\hat T|} = \emptyset$  and $t_n > t_j + |\hat T|$.} This case is similar to the case 2-1. 
            The~\algNI~algorithm follows \PAH~after time $t_j + |\hat T|$. Thus, we have $\costalg \leq 2 \costopt$ by Corollary~\ref{cor:PAHwait}.
            \item{2-4. $U_{\lambda |\hat T|} = \emptyset$  and $t_n \leq t_j + |\hat T|$.} 
            Some request arrives when the server is on the route $\hat T$, which implies that $\lambda |\hat T| < t_n \leq Z^{OPT}$. Also, the server starts its 
            last route at time $t_j + |\hat T|$, when it is back 
            home. 
            The completion time of~\algNI~can be 
            obtained 
            as follows: 
            \begin{align*}
                \costalg & \leq t_j + |\hat T| + |T_{U_{t_j + |\hat T|}}|
                \\& \leq \costopt + (1 / \lambda) \costopt + \costopt
                \\& \leq (2 + 1 / \lambda) \costopt
            \end{align*}
        \end{itemize}
\end{itemize}
The proof is complete. 
\end{proof}


\thmLASwo*
\subsection{Missing Proofs and Details in Section~\ref{sec:3-2}}
\thmlbw*
\begin{proof}
Consider the 
OLTSP on the real line, 
similar to 
the proof of Theorem~\ref{thm:lb1}, and assume the prediction is $\Xpred = \{ x_1 = (0.5, 0.5), x_2 = (1, 1)\}$. On one hand, 
suppose 
the prediction is perfect, i.e., $\X = \Xpred$. The 
OPT gets to 
serve $x_2$ at time $1$ and returns to the origin at time $2$. 
In order to reach the consistency of $1$, the server must follow this route. On the other hand, we 
assume 
the actual input is $\X' = \{x'_1 = (0.5, 0.5), x'_2 = (1, 0)\}$. As the server can recognize the 
prediction error once upon the arrival of 
the second request $x_2$, 
it is at position $1$ at time $1$ and cannot 
return 
to the origin 
until 
time $2$. However, the 
completion time 
of the OPT 
is $\costopt = 1$. Thus, the competitive ratio is $\costalg / \costopt = 2$.
\end{proof}

\noindent{\textbf{The~\algT~Algorithm}}\label{naivealgo}
We first recall the~\algT~algorithm that the server follows a route for the prediction and only adjust the predicted route when an error is discovered. Formally, the server gets an optimal route $\hat T$ to visit the requests in $\Xpred$ in the beginning and starts following $\hat T$ immediately. As the requests arrive, the server modifies the route $\hat T$ in two ways: (1) if the server moves to a request's position $\hat p_i$ but the request $x_i$ has not arrived, the server waits for it to 
move; 
(2) if a request $x_i$ arrives, the server adjusts the route $\hat T$ by inserting the actual request $x_i$ after the corresponding predicted one $\hat x_i$. In this case, the server can visit all requests in $\X$ by following the adjusted route 
of $\hat T$. Note that $\hat T$ decides the order 
when 
the server visits these requests, and the server 
does not change the order even when the 
prediction errors are large. We analyze the performance of
the~\algT~algorithm in the following paragraph.
To analyze the performance of the~\algT~algorithm, we 
show that the error term can be used to bound the cost of two different routes. Then, to compare the cost of two different routes, we first consider a restricted condition where the position of each request of the routes is the same but with a different arrival time.

\thmnaive*

\begin{proof}
Let $T_\X$ be the 
optimal route for the requests in $\X$ and $T_{\Xpred}$ be the 
optimal route to visit the requests in $\Xpred$. As mentioned in Section~\ref{sec:3-2}, a route can be regarded as the order to serve the requests. We consider the total completion time of the following routes: 
(1) let $\costopt$ be the completion time of $T_\X$,
(2) let $Z^*_{\Xpred}$ be the completion time if the server visits the requests in $\Xpred$ by following the order of the route $T_\X$, 
(3) let $Z'_{\Xpred}$ be the completion time of $T_{\Xpred}$, and
(4) let $\costalg$ be the completion time of \algT, which visits the requests in $\X$ by following the order of the route $T_{\Xpred}$.
To 
compare 
$\costalg$ 
with 
$\costopt$, the proof is split into the following two parts: we first show that the completion time $Z'_{\Xpred}$ of the predicted route $\hat T$ can be 
bounded in terms of 
$\costopt$, and then 
we  
bound $\costalg$ 
using 
$Z'_{\Xpred}$. Finally, we obtain the desired result by combining the results of the 
two parts. 

\begin{lemma}\label{clm:1}
$Z^{'}_{\Xpred} \leq Z^*_{\Xpred} \leq \costopt + \errortime + 2 \errorpos$. 
\end{lemma}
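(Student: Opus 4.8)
The plan is to establish the two inequalities separately. The first, $Z'_{\Xpred} \le Z^*_{\Xpred}$, is immediate from optimality: $T_{\Xpred}$ is by definition a completion-time-minimizing route over all orderings of the requests in $\Xpred$, whereas $Z^*_{\Xpred}$ is the completion time of one particular route for $\Xpred$, namely the one that serves the predicted requests in whatever order $T_\X$ serves the actual ones. Since that route is just one feasible candidate, its completion time can only be larger, so $Z'_{\Xpred} \le Z^*_{\Xpred}$.

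The substance is the second inequality $Z^*_{\Xpred} \le \costopt + \errortime + 2\errorpos$, which I would prove by exhibiting a single feasible schedule for $\Xpred$ that follows the order of $T_\X$ and whose completion time is at most the right-hand side. Writing $p_{(1)}, \dots, p_{(n)}$ for the positions in the order $T_\X$ serves them, $\tau_{(i)}$ for the time $T_\X$ serves $x_{(i)}$ (so $\tau_{(i)} \ge t_{(i)}$ and the completion time is $\costopt$), and $\hat p_{(i)}, \hat t_{(i)}$ for the corresponding predicted position and arrival time, I would build the schedule by (i) delaying the whole trajectory of $T_\X$ by $\errortime$ and (ii) replacing each visit to $p_{(i)}$ by a round-trip detour $p_{(i)} \to \hat p_{(i)} \to p_{(i)}$ that serves $\hat x_{(i)}$. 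Each detour adds $2\, d(p_{(i)}, \hat p_{(i)})$ of travel, so the total overhead is $\errortime + 2\sum_i d(p_{(i)}, \hat p_{(i)}) = \errortime + 2\errorpos$ and the completion time of this schedule is exactly $\costopt + \errortime + 2\errorpos$.

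The crux is to verify feasibility, i.e. that this schedule reaches each $\hat p_{(i)}$ no earlier than its predicted release time $\hat t_{(i)}$. Because delaying and detouring only push service times later, the constructed server reaches $\hat p_{(i)}$ at a time at least $\tau_{(i)} + \errortime \ge t_{(i)} + \errortime$; and since $\errortime = \max_j |\hat t_j - t_j| \ge \hat t_{(i)} - t_{(i)}$, this is at least $\hat t_{(i)}$, so no extra waiting is forced and the completion-time bound holds. Finally, the constructed schedule uses detours, whereas $Z^*_{\Xpred}$ travels directly between consecutive waypoints in the same order; since direct travel through a fixed waypoint sequence is shortest and arriving earlier at a waypoint can only help (the server may still have to wait for a release time), a short induction on the per-leg departure times shows the direct route finishes no later than the detour route. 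Combining the two inequalities yields $Z'_{\Xpred} \le Z^*_{\Xpred} \le \costopt + \errortime + 2\errorpos$.

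I expect the main obstacle to be exactly this last feasibility-and-monotonicity bookkeeping: keeping the additive time shift $\errortime$ and the triangle-inequality detour cost $2\errorpos$ cleanly separated, and confirming that the direct-travel route defining $Z^*_{\Xpred}$ cannot do worse than the explicitly constructed detour schedule. The two error terms themselves arise transparently—$\errortime$ from the uniform delay absorbing the worst time mismatch, and $2\errorpos$ from charging each predicted position a round-trip detour via the triangle inequality.
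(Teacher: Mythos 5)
Your proof is correct and takes essentially the same route as the paper's: the paper likewise bounds $Z^*_{\Xpred}$ by the schedule that waits $\errortime$ at the origin, follows $T_\X$, and makes a round-trip detour from $p_i$ to $\hat p_i$ and back at each request (costing $2d(p_i,\hat p_i)$ apiece, hence $2\errorpos$ in total), uses the same feasibility argument that the uniform delay absorbs every predicted release time since $\errortime \geq \hat t_i - t_i$, and obtains $Z'_{\Xpred} \leq Z^*_{\Xpred}$ from the optimality of $T_{\Xpred}$. Your closing monotonicity induction, showing that the direct same-order route defining $Z^*_{\Xpred}$ finishes no later than the constructed detour schedule, fills in a step the paper leaves implicit but is a refinement of the identical argument rather than a different approach.
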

\begin{proof}
Consider the 
route $Z^*_{\Xpred}$: 
(1) the server waits at the origin $o$ for time $\errortime$ before following the route $T_\X$, and (2) after the server arrives 
at $p_i$, it moves to $\hat p_i$ and goes back to $p_i$ to continue the route $T_\X$. 
We argue that the server can visit all of requests in $\Xpred$ and $\X$ then return to the origin at time $\costopt + \errortime + 2 \errorpos$. 
For (1), it guarantees that the server does not have to pay any extra waiting cost compared to the 
offline 
optimal route 
plus the waiting time $\errortime$. For (2), it takes $2 d(p_i, \hat p_i)$ to serve request $\hat x_i$ then go back to $x_i$. Combing (1) and (2)
leads to 
that the server can visit all of requests in $\Xpred$ and $\X$ then return to the origin at time $Z^*_{\Xpred} \leq \costopt + \errortime + 2 \errorpos$. Finally, since $T_{\Xpred}$ is the optimal route for $\Xpred$, we 
thus 
have $Z^{'}_{\Xpred} \leq Z^*_{\Xpred} $.
Note that the result implies that 
compared with $T_\X$, the server might find a longer route $\hat T$ if 
there is some request 
expected to arrive later, 
or if 
the prediction error in positions exists.
\end{proof}

\begin{lemma}\label{clm:2}
$\costalg \leq Z'_{\Xpred} + \errortime + 2 \errorpos$. 
\end{lemma}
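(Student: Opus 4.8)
The plan is to mirror the proof of Lemma~\ref{clm:1}, swapping the roles of the actual sequence $\X$ and the predicted sequence $\Xpred$. Recall that $Z'_{\Xpred} = |\hat T|$ is the completion time of the optimal predicted route $\hat T = (\xhat_{(1)}, \ldots, \xhat_{(n)})$, which respects the predicted arrival times $\hat t_{(i)}$, and that \algT~serves the true requests in the visiting order prescribed by $\hat T$. I would first exhibit an auxiliary route that also serves every $x_{(i)}$ in the order of $\hat T$ and finishes within $Z'_{\Xpred} + \errortime + 2\errorpos$, and then argue that the route \algT~actually traverses is no longer.

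First I would construct the auxiliary route exactly as in Lemma~\ref{clm:1}: (1) wait at the origin for time $\errortime$ before starting to follow $\hat T$, and (2) whenever the server reaches a predicted position $\hat p_{(i)}$, detour to the true position $p_{(i)}$ and return to $\hat p_{(i)}$ before resuming $\hat T$. Step (2) inflates the travelled length by at most $2 d(\hat p_{(i)}, p_{(i)})$ per request, hence by $2\errorpos$ overall. For step (1), since $\hat T$ reaches $\hat p_{(i)}$ no earlier than $\hat t_{(i)}$ and $t_{(i)} \le \hat t_{(i)} + \errortime$, after the head start the server stands at $\hat p_{(i)}$ at a moment $\ge \hat t_{(i)} + \errortime \ge t_{(i)}$; thus $x_{(i)}$ has already been released and no waiting is ever incurred along the way. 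These two facts give an auxiliary completion time of at most $Z'_{\Xpred} + \errortime + 2\errorpos$.

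The hard part is that \algT~neither takes the $\errortime$ head start nor makes the wasteful return to $\hat p_{(i)}$: instead it waits at $\hat p_{(i)}$ until $t_{(i)}$ and then proceeds $\hat p_{(i)} \to p_{(i)} \to \hat p_{(i+1)}$, so I cannot simply quote the auxiliary bound. I would handle this with a critical-path estimate. Writing $\costalg = \max_i \big( t_{(i)} + s_i \big)$, where $s_i$ denotes the pure travel distance of \algT~from $\hat p_{(i)}$ back to the origin, I would show by repeated use of the triangle inequality that $s_i \le \hat s_i + 2\errorpos$, where $\hat s_i$ is the pure suffix travel of $\hat T$ from $\hat p_{(i)}$, and that $t_{(i)} \le \hat t_{(i)} + \errortime \le \tau_i + \errortime$, where $\tau_i \ge \hat t_{(i)}$ is the time $\hat T$ reaches $\hat p_{(i)}$. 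Since $\tau_i + \hat s_i \le |\hat T| = Z'_{\Xpred}$, summing the two estimates yields $t_{(i)} + s_i \le Z'_{\Xpred} + \errortime + 2\errorpos$ for every $i$, and taking the maximum over $i$ establishes the lemma.
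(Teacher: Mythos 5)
Your argument is correct in substance, and it is in fact a more careful version of what the paper does. The paper's proof consists only of your first step: it constructs the same auxiliary schedule (wait $\errortime$ at the origin, follow $\hat T$, detour $\hat p_{(i)} \to p_{(i)} \to \hat p_{(i)}$ at each predicted position), notes that the head start guarantees every actual request has been released when visited and that the detours add at most $2\errorpos$, and then simply asserts that this ``is exactly the operation of the \algT~algorithm.'' That identification is not literal, for exactly the reasons you flag: \algT~takes no global head start, waits at $\hat p_{(i)}$ only until the actual time $t_{(i)}$, and proceeds directly $p_{(i)} \to \hat p_{(i+1)}$ rather than returning to $\hat p_{(i)}$. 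The paper is implicitly invoking a dominance principle (a greedier schedule serving the same requests in the same order finishes no later than the padded feasible one), and your critical-path decomposition is precisely that principle made explicit: the chain $t_{(i)} + s_i \le \hat t_{(i)} + \errortime + \hat s_i + 2\errorpos \le \tau_i + \hat s_i + \errortime + 2\errorpos \le Z'_{\Xpred} + \errortime + 2\errorpos$ is sound, using feasibility of $\hat T$ for the predicted instance ($\tau_i \ge \hat t_{(i)}$) and the triangle inequality once per inserted detour. What your route buys is rigor at the point where the paper hand-waves; what the paper's route buys is brevity and symmetry with Lemma~\ref{clm:1}.

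One small slip: $\costalg = \max_i \bigl(t_{(i)} + s_i\bigr)$ is not an identity. If no request ever forces a wait (every $t_{(i)}$ precedes the server's unimpeded arrival at $\hat p_{(i)}$), the completion time equals the pure travel length $L$ of the adjusted route, which can exceed every $t_{(i)} + s_i$; the correct statement is $\costalg = \max\bigl\{L,\ \max_i \bigl(t_{(i)} + s_i\bigr)\bigr\}$. The extra term is harmless within your own framework: applying your suffix estimate from the origin gives $L \le Z'_{\Xpred} + 2\errorpos$, since the travel length of $\hat T$ is at most its completion time $Z'_{\Xpred}$. With that one-line patch your proof is complete.
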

\begin{proof}
Consider the following route 
$Z'_{\Xpred}$: 
(1) the server waits at the origin $o$ for time $\errortime$ before following the route $\hat T$, and (2) after the server arrives $\hat p_i$, it moves to $p_i$ and goes back to $\hat p_i$ to continue the route $\hat T$. We argue that the server can visit all of requests in $\Xpred$ and $\X$, 
and then returns to the origin at time $Z'_{\Xpred} + \errortime + 2 \errorpos$. For (1), 
it ensures that $x_i$ has arrived 
at $p_i$ when the server is at $\hat p_i$. For (2), it takes $2 d(\hat p_i, p_i)$ to serve request $x_i$ then go back to $\hat x_i$, which is exactly the operation of the \algT~algorithm. Therefore, we can derive that the server can visit all of requests in $\Xpred$ and $\X$,  
and 
then returns to the origin at time $Z^{ALG} \leq Z'_{\Xpred} + \errortime + 2 \errorpos$. 
Notice 
that the server might need to adjust the route $\hat T$ if some request arrives later than expected, or if 
some 
position error exists.

\end{proof}

Finally, combining the results of Lemma~\ref{clm:1} and Lemma~\ref{clm:2} 
can obtain

\begin{align*}
    \costalg \leq \costopt + 2 \errortime + 4 \errorpos.
\end{align*}

\end{proof}

\thmLASw*

\begin{proof}
Initially, the server follows an optimal route $\hat T$ for the requests in $\Xpred$,
which implies it is 1-consistent when the predictions are 
accurate, 
i.e. $\X = \Xpred$. 
The server makes adjustments if needed (i.e.,  $p_i \neq \hat p_i$ for some $i$) before and when the last request arrives. 
Finally, at time $t_n$, it decides 
whether the server 
continue on the current route or not. Therefore, we
discuss the following two cases depending on the relationship of $r_1$ and $r_2$. Note that $t_n \leq \costopt$ and 
$|T_{U_{t}}| \leq Z^{OPT}$ for any $t$.
\begin{itemize}
    \item{1. $r_1 \leq r_2$.} In this case, the server knows that continuing on the route $\hatT$ takes less time than redesigning a new one, which implies the errors are small. Thus, it follows the route $\hat T$ 
    from the beginning to the end. By Theorem~\ref{thm:naive}, the completion time of the algorithm can be bounded by $\costalg \leq \costopt + 2 \errortime + 4 \errorpos$.
    \item{2. $r_1 > r_2$.} In this case, it is better to go back to the origin and design a new route. The server 
    returns to 
    the origin $o$ at time $t_n + d(p(t_n), o)$ and starts to follow a new route $T_{U_{t_n}}$, which visits all unserved requests in 
    $U_{t_n}$. Because there is no 
    more request, 
    the server completes all of the requests at time $\costalg = t_n + d(p(t_n), o) + |T_{U_{t_n}}|$. In addition, since the server moves with unit speed, 
    $d(p(t_n), o)= d(o, p(t_n)) \leq t_n$. Combining 
    the two 
    inequalities 
    leads to 
    the result $\costalg \leq 3 \costopt$.
\end{itemize}
\smallskip
As the server chooses the 
faster 
route among the two possible options, the competitive ratio can be stated as $\min \{3, 1+(2 \errortime + 4 \errorpos) / \costopt\}$.
\end{proof}



\corLASw*
\begin{proof}
We first consider the case 
that 
the server follows the optimal route $\hat T$ 
from the beginning to the end.
Let $T_\X$ be the optimal route for the requests in $\X$ and $T_{\Xpred}$ be the approximate route to visit the requests in $\Xpred$. We consider the total completion time of the following routes: 
(1) let $\costopt$ be the completion time of $T_\X$,
(2) let $Z^*_{\Xpred}$ be the completion time if the server visits the requests in $\Xpred$ by following the order of the route $T_\X$, 
(3) let $Z'_{\Xpred}$ be the completion time of $T_{\Xpred}$, and
(4) let $\costalg$ be the completion time of \algT, which visits the requests in $\X$ by following the order of the route $T_{\Xpred}$.

\begin{lemma}\label{lem:polynaive}
$\costalg \leq 2.5 \costopt + 3.5 \errortime + 7 \errorpos$.
\end{lemma}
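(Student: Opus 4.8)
The plan is to reuse the two-part structure from the proof of Theorem~\ref{thm:naive} and to re-derive only the part that is sensitive to how $T_{\Xpred}$ is computed. Recall that there $\costalg$ was bounded through $\costalg \le Z'_{\Xpred} + \errortime + 2\errorpos$ (Lemma~\ref{clm:2}) together with $Z'_{\Xpred} \le Z^*_{\Xpred} \le \costopt + \errortime + 2\errorpos$ (Lemma~\ref{clm:1}). In the present polynomial-time setting $T_{\Xpred}$ is the Christofides tour for $\Xpred$ rather than an optimal route, so the inequality $Z'_{\Xpred}\le Z^*_{\Xpred}$ fails and must be replaced by one that absorbs the $1.5$ approximation factor. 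By contrast, the bound $\costalg \le Z'_{\Xpred} + \errortime + 2\errorpos$ of Lemma~\ref{clm:2} is built \emph{on top of} the given route $T_{\Xpred}$ (wait $\errortime$, follow $T_{\Xpred}$, detour of total length $2\errorpos$ to each actual position), so it is insensitive to whether $T_{\Xpred}$ is optimal or approximate; I would reuse it verbatim with $Z'_{\Xpred}$ now denoting the completion time of the Christofides route.

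The core step is to bound the Christofides completion time $Z'_{\Xpred}$ by $2.5\,Z^*_{\Xpred}$, which I would do by splitting it into a release-time part and a travel part. Let $L$ be the length of the Christofides tour through the positions of $\Xpred$. Following that tour while waiting for each predicted request finishes no later than idling at the origin until the last predicted arrival and then traversing the whole tour, so $Z'_{\Xpred} \le \max_i \hat{t}_i + L$. For the travel part, Christofides' heuristic guarantees $L \le 1.5\,L^{*}$, where $L^{*}$ is the optimal TSP tour length through the positions of $\Xpred$; since the unit-speed route realizing $Z^*_{\Xpred}$ already traces a closed tour through all of $\Xpred$, its travelled distance (hence $L^{*}$) is at most its completion time, i.e.\ $L^{*} \le Z^*_{\Xpred}$. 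For the release-time part, $Z^*_{\Xpred}$ is itself the completion time of a route serving every predicted request after its arrival, so $\max_i \hat{t}_i \le Z^*_{\Xpred}$. Combining these,
\[ Z'_{\Xpred} \le \max_i \hat{t}_i + L \le Z^*_{\Xpred} + 1.5\,Z^*_{\Xpred} = 2.5\,Z^*_{\Xpred}. \]

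It then remains to feed this into the unchanged inequalities. Lemma~\ref{clm:1} still gives $Z^*_{\Xpred} \le \costopt + \errortime + 2\errorpos$, because that construction only uses the \emph{optimal} route $T_{\X}$ for the actual requests and the detours of total length $2\errorpos$, none of which involves Christofides; hence $Z'_{\Xpred} \le 2.5(\costopt + \errortime + 2\errorpos)$. Plugging into Lemma~\ref{clm:2},
\[ \costalg \le Z'_{\Xpred} + \errortime + 2\errorpos \le 2.5\,\costopt + 3.5\,\errortime + 7\,\errorpos, \]
which is the claimed bound. I expect the main obstacle to be the careful bookkeeping between \emph{tour length} and \emph{completion time} that the Christofides factor forces, since Christofides controls only the former while $Z'_{\Xpred},Z^*_{\Xpred},\costopt$ are all the latter. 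The two bridging inequalities $L^{*} \le Z^*_{\Xpred}$ (distance travelled at unit speed is dominated by completion time) and $Z'_{\Xpred} \le \max_i \hat{t}_i + L$ (idle-then-tour dominates follow-and-wait) are precisely what convert the approximation guarantee into a completion-time guarantee, and pinning down their constants is what turns the non-polynomial coefficient $1$ on $\costopt$ into the polynomial coefficient $2.5 = 1 + 1.5$.
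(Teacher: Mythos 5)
Your proposal is correct and follows essentially the same route as the paper: the paper's proof of Lemma~\ref{lem:polynaive} uses exactly your three-part chain $Z^*_{\Xpred} \leq \costopt + \errortime + 2\errorpos$, $Z'_{\Xpred} \leq 2.5\, Z^*_{\Xpred}$ (via the same idle-until-last-predicted-arrival-then-traverse-the-Christofides-tour argument), and $\costalg \leq Z'_{\Xpred} + \errortime + 2\errorpos$, combined identically. Your version merely makes the middle step more explicit than the paper's terse statement, by separating the tour length $L$, the optimal tour length $L^*$, and the bridging inequalities $L^* \leq Z^*_{\Xpred}$ and $\max_i \hat{t}_i \leq Z^*_{\Xpred}$ — a welcome clarification, not a different argument.
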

\begin{proof}
We split into the following three parts:
\begin{item}
    \item{1. $Z^*_{\Xpred} \leq \costopt + \errortime + 2 \errorpos$.} Consider the route $Z^*_{\Xpred}$ such that the server waits at the origin $o$ for time $\errortime$ before following the route $T_{\X}$ and moves from $p_i$ to $\hat p_i$ to serve requests in $\Xpred$. The completion time of the route is $Z^*_{\Xpred} \leq \costopt + \errortime + 2 \errorpos$. 
    \item{2. $Z'_{\Xpred} \leq 2.5 Z^*_{\Xpred}$.} Given that Christofides' algorithm is $1.5$-approximate, we finds an upper bound for the approximation ratio of the offline TSP with release time; assuming the server waits at the origin $o$ until the last requests arrive, i.e., $t = t_n$, and then follows an approximate route $T$ found by Christofides' algorithm, the server can complete serving all the requests at time $t_n + |T| \leq 2.5 Z^{OPT}$. That is, the completion time of the approximate route $Z'_{\Xpred}$ can be bounded by $2.5$ times the completion time of any other route for the requests in $\Xpred$.
    \item{3. $\costalg \leq Z'_{\Xpred} + \errortime + 2\errorpos$.} Note that the route $\hat T$ can visit all of the requests in $\Xpred$. Consider the route that the server waits at the origin $o$ for $\errortime$ and moves from $\hat p_i$ to $p_i$ to visit request $x_i$. The server can complete serving the requests in $\X$ at time $\costopt + \errortime + 2 \errorpos$.
\end{item}
Combining the above three inequalities, the result is as follows:
    \begin{align*}
        \costalg &\leq Z'_{\Xpred} + \errortime + 2 \errorpos
        \\& \leq 2.5 Z^*_{\Xpred} + \errortime + 2 \errorpos
        \\& \leq 2.5 (\costopt + \errortime + 2 \errorpos) + \errortime + 2 \errorpos
        \\& \leq 2.5 \costopt + 3.5 \errortime + 7 \errorpos
    \end{align*}
\end{proof}
Then, we discuss the two following routes:
\begin{itemize}
    \item{1. $r_1$.} The server follows an optimal route $T_{\hat X}$ to serve the requests in $\X$. By Lemma~\ref{lem:polynaive}, the completion time is bounded by $\costalg \leq 2.5 \costopt + 3.5 \errortime + 7 \errorpos$.
    \item{2. $r_2$.} The server goes back to the origin $o$ and starts its final route at time $t_n$. The completion time is $Z^{ALG} = t_n + d(p(t_n), o) + |T_{U_{t_n}}|$. By $|T_{U_{t_n}}| \leq 1.5 Z^{OPT}$, we have $Z^{ALG} \leq 3.5 Z^{OPT}$.
\end{itemize}
As the server chooses the fastest route, the competitive ratio is $\min \{3.5, 2.5 + (3.5 \errortime + 7 \errorpos) / Z^{OPT}\}$.

\end{proof}


\section{Missing proofs in Section~\ref{sec:4}}
\lemhalf*
\begin{proof}
A simple explanation is that 
(1) at any time $t$ during the execution of the algorithm, the distance between the online server and the origin, i.e. $d(p(t), o)$ is at most the length between the farthest request and the origin, and 
(2) the offline 
optimal route must be longer than twice the distance 
between the farthest request and the origin.
Now, we prove the lemma in a more rigorous way by specifying the state of the server at time $t$.

\begin{itemize}
    \item {1. When the server is at the origin $o$ at time $t$,} trivially, we get $d(p(t), o) = 0 \leq \frac{1}{2} \costopt$.
    \item {2. Otherwise, the server is not at the origin $o$ at time $t$.} 
    In this case, we know that the server is either on a route $T_{U_{t'}}$ found by Christofides' algorithm, where $t' < t$, or already on its way home due to the arrival of some earlier request. 
    Let $T^*$ denote an optimal route to visit the set of requests for which the route $T_{U_{t'}}$ is planned.
    Note that neither route $T_{U_{t'}}$ nor route $T^*$ considers the release time of the requests.
    \begin{itemize}
        \item {2-1. The server is on the route $T_{U_{t'}}$.} This implies that the server is traveling between a pair of points, denoted as $a$ and $b$, where a point must be the position of a request $x_i$ or the origin $o$. Since route $T^*$ contains at least point $a$, point $b$, and the origin $o$, we know $|T^*| \geq d(o, a) + d(a, b) + d(b, o)$. Then, we have $d(p(t), o) \leq \min \{ d(p(t), a) + d(a, o), d(p(t),b) + d(b, o)\} \leq \frac{1}{2} |T^*| \leq \frac{1}{2} \costopt$. 
        \item {2-2. The server is on its way back home.} 
        This implies that the server has terminated a route $T_{U_{t'}}$.
        We know that some request $x_i=(t_i, p_i)$ arrives between the time $t'$
        and time $t$. 
        As we can obtain $d(p(t_i), o) \leq \frac{1}{2} \costopt$ by the above argument, we 
        also have 
        $d(p(t), o) \leq d(p(t_i), o) \leq \frac{1}{2} \costopt$.
    \end{itemize}
\end{itemize}
\end{proof}

\lemreplan*

\begin{proof}
Note that the algorithm 
finishes 
when the server receives the last request $x_n$, goes back to the origin $o$, and follows the last route $T_{U_{t_n}}$ found by Christofides' algorithm.
Namely, the completion time is $\costalg = t_n + d(p(t_n), o) + |T_{U_{t_n}}|$.
By the two natural bounds $t_n \leq \costopt$ and $|T_{U_{t_n}}| \leq 1.5 \costopt$, we know the completion time is bounded by $\costalg \leq 2.5 \costopt + d(p(t_n), o)$. Meanwhile, we have $d(p(t_n), o) \leq 0.5 \costopt$ by Lemma~\ref{lem:1/2}. Combining them, we get $\costalg \leq 3 \costopt$.
\end{proof}


\subsection{Missing Proofs in Section~\ref{sec:4-1}}
\thmLAM*

\begin{proof}
The most important thing is the server's position when the last request $x_n$ arrives, and the ideal situation is that the server is at the origin $o$ at time $t_n$, i.e., $p(t_n) = o$. We distinguish between two main cases depending on the relationship between the predicted and actual last arrival time, $\hat t_n$ and $t_n$. 
Note that we get $t_n \leq \costopt$ and $|T_{U_t}| \leq 1.5 \costopt$ for any $t$ trivially.

\begin{itemize}
    \item {1. $\hat t_n \leq t_n$.} The server is at the origin $o$ at time $\hatt_n$. After then, it receives some request and returns to the origin only when a request arrives or a route is completed. Therefore, the completion time is $\costalg = t_n + d(p(t_n), o) + |T_{U_{t_n}}|$. By the above two inequalities, this leaves us $\costalg \leq 2.5 \costopt + d(p(t_n), o)$ to discuss.
        \begin{itemize}
            \item {1-1.} By Lemma~\ref{lem:1/2}, the distance from the server to the origin is bounded at any moment. We get $d(p(t_n), o) \leq 0.5 \costopt$ and our first bound $\costalg \leq 3 \costopt$. 
            \item {1-2.} Although the server might have left the origin at time $t_n$, it cannot be too far if the error $|\errorlast|$ is small. Seeing that the server moves with unit speed, the distance it can travel between time $\hat t_n$ and $t_n$ is at most $t_n - \hat t_n$. As the server is at the origin $o$ at time $\hat t_n$, we obtain $d(p(t_n), o) \leq t_n - \hat t_n = - \errorlast$. Thus, our second bound for case 1 is $\costalg \leq 2.5 \costopt + (- \errorlast)$, where $\errorlast \leq 0$.
        \end{itemize}
    \item {2. $\hat t_n > t_n$.} 
    Since no request arrives after time $\hat t_n$ and the server could start its last route $T_{U_{\hat t_n}}$ at $\hat t_n$ (if needed), we have $\costalg \leq \hatt_n + |T_{U_{\hat t_n}}|$.
    Let $t_L$ be the last time before $\hat t_n$ that a route is planned and $T^L$ be the original route planned at time $t_L$. Let $T'^L$ be the new route if $T^L$ is adjusted and equal to $T^L$ if not. Note that $t_L + |T'^L| \leq \hat t_n$ due to the choice of $t_{back}$.
        \begin{itemize}
            \item {2-1. $t_L + |T^L| \leq \hat t_n$.} The server finishes serving all requests in $\X$ before time $\hat t_n$ and thus $\costalg \leq \hatt_n$. Note that none of the routes the server has followed is too long and needs to be adjusted. In this case, since the algorithm runs in the same way as the~\redesign~algorithm, we know that $\costalg \leq 3 \costopt$ still holds by Lemma~\ref{lem:replan}.
            \item {2-2. $t_L + |T^L| > \hat t_n$ and $t_n \leq t_L$.} The route $T^L$ is too long, and the algorithm find a time $t_{back}$ and a substitute route $T'^L$ so that the server can be at the origin $o$ at time $\hat t_n$. Since every request in $\X$ has arrived before $t_L$, the server could have finished visiting all requests by following the original route $T^L$. However, the server chooses the adjusted route $T'^L$ instead. The additional cost of moving back to the origin is at most $2 d(p(t_{back}), o)$. Thus, by Lemma~\ref{lem:1/2} and Lemma~\ref{lem:replan}, the completion time is $\costalg \leq 3 \costopt + 2 d(p(t_{back}), o) \leq 4 \costopt$.
            \item {2-3. $t_L + |T^L| > \hat t_n$ and $t_n > t_L$.} Again, the route $T^L$ is too long, and at least one request is unknown at time $t_L$.
            Also, the server cannot visit all requests in $\X$ before $\hat t_n$ and has to start a new route at that moment, which is the last route since all requests have arrived by then.  Noting the route $T'^L$ also satisfies $|T'^L| \leq 1.5 \costopt$, the completion time is 
            \begin{align*}
                \costalg &\leq \hatt_n + |T_{U_{\hat t_n}}|
                \\& = t_L + |T'^L| + |T_{U_{\hat t_n}}|
                \\& \leq t_n + |T'^L| + |T_{U_{\hat t_n}}|
                \\& \leq 4 \costopt
            \end{align*}
            Combining with the results in 2-1 and 2-2, we have our first bound $\costalg \leq 4  \costopt$.
            \item {2-4.} If the error $|\errorlast|$ is not zero, the server might have not reached the origin at time $t_n$. However, if the error is small, it must be close to the origin since it has planned to reach there at time $\hat t_n$. 
            Formally, we have $\costalg \leq t_n + \errorlast + |T_{U_{\hat t_n}}|$ by replacing $\hatt_n$ with $t_n + \errorlast$, which leads to our second bound $\costalg \leq 2.5 \costopt + \errorlast$.
        \end{itemize}
\end{itemize}

Both cases show $\costalg \leq \min \{4 \costopt, 2.5 \costopt + |\errorlast|\}$, which completes the proof.
\end{proof}

\section{The Learning-Augmented Dial-a-Ride Problem}\label{LADARP}
Here we discuss how we extend the three models and algorithms to the OLDARP. 

\subsection{Problem Setting}
With a slight abuse of notation, we use the same notations as the TSP but change their definitions when the context is clear. The input of the OLDARP is a sequence denoted by $\X$ and with a size of $n$, the number of requests in a metric space. Each request in $\X$ is denoted by $x_i = (t_i, a_i, b_i)$ where $t_i$ is the time when the request becomes known, $a_i$ is the pickup position, and $b_i$ is the delivery position. A server starts and ends at the origin $o$, and it has to move each request $x_i$ from $a_i$ to $b_i$. Note that the server cannot collect a request $x_i$ before its arrival time $t_i$ ant let $c$ denote the maximum amount of requests the server can carry at a time. The goal is to minimize the completion time $|T_\X|$ of a route $T_\X$. In this work, we consider the server with unlimited capacity (i.e., $c = \infty$) and the non-preemptive version: once the server picks up a request $x_i$ at position $a_i$, it cannot drop it anywhere else except position $b_i$.
Regarding the previous results, Shmoys et al.~\cite{shmoys1995scheduling} proposed the non-polynomial \textsc{Ignore}~algorithm with
a competitive ratio of $2.5$ by ignoring newly arrived requests if the server is already following a schedule, while the \replan~algorithm achieved the same result by redesigning a route whenever a request becomes known. Ascheuer et al.~\cite{ascheuer2000online} improved the ratio to $2$, which meets the lower bound. 

\subsection{Models and Errors}
We consider the OLDARP using the three types of predictions discussed in the previous sections. First, for the model with sequence prediction without identity $\Xpred$, let $m$ denote the predicted size of the sequence and $\hat x_i = (\hat t_i, \hat a_i, \hat b_i)$ denote each predicted request for $i = 1, \ldots, m$. We do not quantify the error but only consider whether $\Xpred = \X$ or not. 
Next, we consider the sequence prediction with identity, $\Xpred$ with size of $n$. As in the TSP model, The time error is defined as $\errortime \defeq max_{i \in [n]}|\hat{t}_i - t_i|$. However, unlike the OLTSP, the OLDARP involves two positions. Thus, we modify the position error $\errorpos$ and define it as the extra distance the server has to travel to deliver the actual requests, compared with the predicted ones.
$$\errorpos \defeq \sum_{i} [ d(\hat a_i, a_i) + d(\hat b_i, b_i) ]$$
Last, we discuss the prediction of last arrival time $\hatt_n$, and the prediction error is defined by $\errorlast \defeq \hatt_n - t_n$, as in the TSP. 

\subsection{The \algdarT~Algorithm}
We introduce some properties of the \redesign~algorithm and a na\"{\i}ve algorithm, \algdarT, that we need before showing the details. First, recall that the \redesign~algorithm is $2.5$-competitive when ignoring the computational issue, while it is $3$-competitive in polynomial time. As Corollary~\ref{cor:PAHwait} for the \PAH~algorithm, we show that postponing the time to start the \redesign~algorithm does not affect the competitive ratio by Corollary~\ref{cor:replan}. 

\begin{theorem}[\cite{ascheuer2000online}, Theorem 3]\label{thm:replan}
The~\redesign~algorithm is $2.5$-competitive for the OLDARP in a metric space.
\end{theorem}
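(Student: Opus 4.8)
The plan is to mirror the analysis of the~\redesign~algorithm for the OLTSP (Lemma~\ref{lem:1/2} and Lemma~\ref{lem:replan}), adapting each step to the unlimited-capacity OLDARP. First I would observe that the algorithm can only terminate after the last request $x_n$ has arrived: once $x_n$ appears at time $t_n$, the server returns to the origin and then follows a single optimal route serving every still-unserved request, and since no request arrives afterwards this final route is never interrupted. Hence the completion time decomposes as $\costalg = t_n + d(p(t_n), o) + |T_{U_{t_n}}|$, and it suffices to bound the three summands by $\costopt$, $\frac{1}{2}\costopt$, and $\costopt$ respectively.

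The first and third bounds are routine: $t_n \le \costopt$ because the optimal route cannot finish before its last request is released, and $|T_{U_{t_n}}| \le \costopt$ because $T_{U_{t_n}}$ is an optimal route serving only the subset $U_{t_n}$ while ignoring release times, which is a relaxation of the offline instance solved by $\costopt$. The crux is the middle bound, the dial-a-ride analogue of Lemma~\ref{lem:1/2}: for every $t$, $d(p(t),o) \le \frac{1}{2}\costopt$. I would prove it by case analysis on the server's state. If the server sits at the origin the bound is trivial. If it is travelling along a planned route, it lies on a segment between two consecutive stops $a$ and $b$, each of which is a pickup position, a delivery position, or the origin; every such point is visited by the offline optimal tour $T^{*}$, so $|T^{*}| \ge d(o,a)+d(a,b)+d(b,o)$. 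Because $p(t)$ lies on the segment $ab$ we have $d(p(t),a)+d(p(t),b)=d(a,b)$, and the triangle inequality gives $2\, d(p(t),o) \le d(o,a)+d(a,b)+d(b,o) \le |T^{*}| \le \costopt$. The remaining state, in which the server heads straight home after an interruption triggered by a request arriving at time $t_i$, reduces to the previous case by monotonicity: $d(p(t),o) \le d(p(t_i),o) \le \frac{1}{2}\costopt$. Combining the three bounds yields $\costalg \le \costopt + \frac{1}{2}\costopt + \costopt = 2.5\,\costopt$.

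The main obstacle I anticipate is the non-preemptive carrying of load, which has no counterpart in the OLTSP. When a new request forces a return to the origin, the server is in general carrying requests it has picked up but not yet delivered, and it brings them home. I would need to check that this breaks neither distance argument: since the capacity is unlimited, the load never changes which points the server must visit, so the segment-endpoint argument for $d(p(t),o)$ is unaffected; and for the final route I would argue that an already-picked-up request leaves only its delivery leg to be completed, so serving $U_{t_n}$ with some requests pre-loaded at the origin is again a relaxation of the full offline problem and keeps $|T_{U_{t_n}}| \le \costopt$. Making this relaxation argument precise---in particular, showing that partially served requests can only shorten the optimal subroute---is the step that requires the most care.
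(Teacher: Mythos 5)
Your proof is correct and matches the intended argument: the paper itself imports this theorem from \cite{ascheuer2000online} without proof, but wherever it needs the bound (e.g., case 1-1 of Theorem~\ref{thm:DARPLAM}) it relies on exactly your decomposition $\costalg \leq t_n + d(p(t_n), o) + |T_{U_{t_n}}|$ together with the DARP extension of Lemma~\ref{lem:1/2}, mirroring the OLTSP analysis of Lemma~\ref{lem:replan} with the optimal-route bound $|T_{U_{t_n}}| \leq \costopt$ in place of the Christofides factor $1.5$. You were also right to flag the non-preemptive carried load as the only genuinely new step, and your shortcutting argument (on-board requests leave only delivery legs, so the residual route is obtained from the offline optimal tour by skipping their pickups, which the triangle inequality can only shorten) is the standard way to preserve $|T_{U_{t_n}}| \leq \costopt$ under unlimited capacity.
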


\begin{corollary}\label{cor:replan}
Assume the server operated by an algorithm waits at the origin $o$ for time $t$ before following the \redesign~algorithm and $t \leq t_n$. This algorithm is still $2.5$-competitive.
\end{corollary}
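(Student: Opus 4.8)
The plan is to reuse the completion-time decomposition behind Lemma~\ref{lem:replan} (the $2.5$-competitiveness of \redesign\ for the OLDARP, cf.\ Theorem~\ref{thm:replan}) and to argue that the only modification here --- an initial idle phase at the origin --- can never hurt any quantity entering that bound. Write $[0,t]$ for the waiting phase. Since we assume $t \le t_n$, the wait is over strictly before the last request arrives, so from time $t_n$ onward the server behaves exactly as in plain \redesign: it returns to the origin $o$ whenever a new request appears, and because no request arrives after $t_n$, once it is back at $o$ it runs the final offline route $T_{U_{t_n}}$ for the not-yet-completed requests without further interruption. First I would therefore record the completion-time bound $\costalg \le t_n + d(p(t_n),o) + |T_{U_{t_n}}|$, which holds whether the server is mid-route, already heading home, or already at $o$ when $x_n$ arrives (in the last case the term $d(p(t_n),o)$ is simply $0$).

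Next I would establish the OLDARP analog of Lemma~\ref{lem:1/2}, namely $d(p(\tau),o)\le\tfrac12\costopt$ for every $\tau$. During $[0,t]$ the server sits at $o$, so the distance is $0$ and the bound is immediate; after time $t$ the argument of Lemma~\ref{lem:1/2} carries over essentially verbatim, since the server always travels between two points $a,b$ each of which is a pickup position, a delivery position, or the origin, all of which lie on the offline optimal route $T^\ast$, whence $d(p(\tau),o)\le\tfrac12|T^\ast|\le\tfrac12\costopt$ by the triangle inequality. To finish I would bound $|T_{U_{t_n}}|\le\costopt$: every request in $U_{t_n}\subseteq\X$ has already been released by time $t_n$, and an optimal uncapacitated non-preemptive route serving the subset $U_{t_n}$ is obtained from the offline optimum for $\X$ by shortcutting past the pickups and deliveries of the omitted requests, which preserves each kept request's pickup-before-delivery precedence and, by the triangle inequality, does not increase the length. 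Combining these with $t_n\le\costopt$ yields $\costalg \le \costopt + \tfrac12\costopt + \costopt = 2.5\,\costopt$.

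The one thing that needs to be checked carefully --- exactly as in Corollary~\ref{cor:PAHwait} for \PAH --- is a monotonicity observation rather than a new inequality: delaying service keeps the server at the origin throughout the wait and can only \emph{enlarge} $U_{t_n}$ within $\X$, so neither the distance bound $\tfrac12\costopt$ nor the residual-route bound $\costopt$ can degrade. In other words, the wait merely postpones, and never lengthens, the portion of work that the decomposition charges against $\costopt$. I expect this insensitivity-to-waiting step to be the main (and only) obstacle; once it is in place the ratio of $2.5$ follows at once from the three term-wise bounds above.
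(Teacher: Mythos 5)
Your proof is correct and takes essentially the same route the paper relies on: the paper states Corollary~\ref{cor:replan} without an explicit proof, but its implicit justification is precisely the decomposition $\costalg \leq t_n + d(p(t_n), o) + |T_{U_{t_n}}|$ combined with $t_n \leq \costopt$, the OLDARP analog of Lemma~\ref{lem:1/2}, and $|T_{U_{t_n}}| \leq \costopt$ (the same three bounds the paper invokes in the proof of Lemma~\ref{lem:replan} and in case 1-1 of Theorem~\ref{thm:DARPLAM}), together with your observation that waiting until $t \leq t_n$ leaves the server at the origin and only enlarges $U_{t_n}$, which none of the three bounds is sensitive to. One cosmetic slip: the $2.5$-competitiveness of \redesign~for the OLDARP is Theorem~\ref{thm:replan}, whereas Lemma~\ref{lem:replan} is the $3$-competitive polynomial-time OLTSP bound via Christofides, though you use the correct (optimal-route) decomposition throughout.
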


Then, we present 
the~\algdarT~algorithm for the DARP.
Since each request $\hat x_i = (\hat t_i, \hat a_i, \hat b_i)$ involves two positions, we describe a route as a sequence of positions $\hat T = (\hat p_{(1)}, \ldots, \hat p_{(2n)})$, where $\hat p_{(i)} \in \{\hat a_j\}_{j \in [n]} \cup \{\hat b_k\}_{k \in [n]}$ for $i \in [2n]$ and $\hat p_{(i)}$ denotes the $i^{th}$ position the server reaches in the route $\hat T$. Thus, the route $\hat T$ can show when the server pick up and deliver a request $\hat x_i$. In addition, the server adjusts the predicted route $\hat T$ when the actual requests in $\X$ arrive; when a request $x_i = (t_i, a_i, b_i)$ is released, we insert $a_i$ after $\hat a_i$ and $b_i$ after $\hat b_i$.

\begin{algorithm}
    \caption{\textsc{Learning-Augmented Dial-a-Ride Trust (LADAR-Trust)}}
    \label{alg:DARPnaive}
    \begin{algorithmic}[1] 
    \Require The current time $t$, the number of requests $n$, a sequence prediction $\Xpred$, and the set of current released unserved requests $U_t$.
    \State 
    First, compute an optimal route $\hat T = (\hat p_{(1)}, \ldots, \hat p_{(2n)})$ to serve the requests in $\Xpred$ and return to the origin $o$, where $\hat p_{(i)} \in \{\hat a_1, \ldots, \hat a_n, \hat b_1, \ldots, \hat b_n\}$ is the $i^{th}$ position the server reaches in $\hat T$.
    \State \textbf{For} any $i = 1, \ldots, n$ \textbf{do}
        \State \quad \textbf{If} $t = t_{(i)}$ for any $i$, where $x_{(i)} = (t_{(i)}, a_{(i)}, b_{(j)})$ \textbf{then}
            \State \quad \quad Update the route $\hat T$ by adding positions $a_{(i)}$ and $b_{(j)}$ after their corresponding 
            \Statex \quad \quad positions $\hat p_{(i)}$ and $\hat p_{(j)}$;
        \State \quad \textbf{EndIf}
        \State \quad \textbf{If} $p(t) = \hat p_{(i)}$ and $t < t_{(i)}$ \textbf{then}
            \State \quad \quad Wait at position $\hat p_{(i)}$ until time $t_{(i)}$.
            \Comment{Wait until the request arrives}
        \State \quad \textbf{EndIf}
    \State \textbf{EndFor}
    \end{algorithmic}
\end{algorithm}

We can extend Theorem~\ref{thm:naive} to the OLDARP by replacing the definition of $\errorpos = \sum^{n}_{i=1}d(\hat{p}_i, p_i)$ with $\errorpos = \sum_{i} [ d(\hat a_i, a_i) + d(\hat b_i, b_i)]$ and obtain the following result.

\begin{theorem}\label{lem:DARPnaive}
The competitive ratio of the \algdarT~algorithm is $1 + 2 \errortime + 4 \errorpos$. Thus, it is $1$-consistent and $(2 \errortime + 4 \errorpos)$-smooth but not robust.
\end{theorem}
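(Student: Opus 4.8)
The plan is to mirror the proof of Theorem~\ref{thm:naive} for the OLTSP, replacing each single request position by its pickup/delivery pair and tracking the resulting two detours rather than one. First I would fix the same four quantities adapted to the OLDARP: let $T_\X$ be an optimal (release-time-free) route for $\X$ with completion time $\costopt$, let $T_{\Xpred}$ be an optimal route for $\Xpred$ with completion time $Z'_{\Xpred}$, let $Z^*_{\Xpred}$ be the completion time of serving $\Xpred$ in the visiting order induced by $T_\X$, and let $\costalg$ be the completion time of \algdarT, which serves $\X$ in the order induced by $T_{\Xpred}$. The argument then reduces to two sandwich bounds whose sum yields $\costalg \leq \costopt + 2\errortime + 4\errorpos$, matching the claimed ratio $1 + 2\errortime + 4\errorpos$.

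For the first bound, $Z'_{\Xpred} \leq Z^*_{\Xpred} \leq \costopt + \errortime + 2\errorpos$, I would exhibit a concrete feasible route that follows $T_\X$ after an initial wait of $\errortime$ at the origin, and at each actual pickup $a_i$ makes a round trip to the predicted pickup $\hat a_i$ (picking up $\hat x_i$) of length $2d(\hat a_i, a_i)$, and at each actual delivery $b_i$ makes a round trip to $\hat b_i$ (delivering $\hat x_i$) of length $2d(\hat b_i, b_i)$. Summing these detours over all $i$ gives exactly $2\errorpos$ under the OLDARP definition $\errorpos = \sum_i[d(\hat a_i,a_i)+d(\hat b_i,b_i)]$, while the initial wait absorbs the worst-case time discrepancy $\errortime$; since this route serves $\Xpred$ and $T_{\Xpred}$ is optimal for $\Xpred$, the chain follows. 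The second bound, $\costalg \leq Z'_{\Xpred} + \errortime + 2\errorpos$, is obtained symmetrically: \algdarT follows $\hat T = T_{\Xpred}$ after waiting $\errortime$ and detours from each predicted position to its actual counterpart, which is precisely the insertion rule of the algorithm, so the same detour accounting contributes $2\errorpos$ and the wait contributes $\errortime$.

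Adding the two bounds gives $\costalg \leq \costopt + 2\errortime + 4\errorpos$. Setting both errors to zero (the case $\Xpred = \X$) yields $1$-consistency, and the additive term $2\errortime + 4\errorpos$ is a continuous function vanishing at zero error, giving smoothness. Non-robustness follows because this additive term is unbounded: an adversary can force $\errortime$ or $\errorpos$ arbitrarily large while $\costopt$ stays fixed, so no constant worst-case ratio exists, since the algorithm commits to the predicted order with no fallback.

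The step I expect to be the main obstacle is verifying feasibility of the two constructed routes under the non-preemption constraint, which has no analogue in the OLTSP proof. Because each request must be picked up before it is delivered and carried continuously in between, I must check that inserting the detour to $a_i$ immediately after reaching $\hat a_i$ and the detour to $b_i$ immediately after $\hat b_i$ preserves a valid schedule, namely that the pickup of every request still precedes its delivery after all insertions. This holds because $\hat a_i$ precedes $\hat b_i$ in $\hat T$ (and $a_i$ precedes $b_i$ in $T_\X$), so each inserted pickup detour precedes the corresponding inserted delivery detour; once this ordering invariant is confirmed, the remaining length bookkeeping is identical to the TSP case.
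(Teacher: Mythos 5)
Your proposal is correct and follows essentially the same route as the paper's proof: the same four completion-time quantities, the same two sandwich lemmas $Z'_{\Xpred} \leq Z^*_{\Xpred} \leq \costopt + \errortime + 2\errorpos$ and $\costalg \leq Z'_{\Xpred} + \errortime + 2\errorpos$, and the same round-trip detour accounting with the OLDARP error $\errorpos = \sum_i [d(\hat a_i, a_i) + d(\hat b_i, b_i)]$. Your explicit verification that the insertion rule preserves the pickup-before-delivery ordering under non-preemption is a point the paper's proof leaves implicit, but it confirms rather than alters the argument.
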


\begin{proof}
As mentioned above, a route can be regarded as the order to reach certain positions. As in Theorem~\ref{thm:naive}, let $T_\X$ be the actual optimal route for the requests in $\X$ and $T_{\Xpred}$ be the predicted optimal route to visit the requests in $\Xpred$. We consider the total completion time of the following routes: 
(1) let $\costopt$ be the completion time of $T_\X$,
(2) let $Z^*_{\Xpred}$ be the completion time if the server visits the requests in $\Xpred$ by following the order of the route $T_\X$, 
(3) let $Z'_{\Xpred}$ be the completion time of $T_{\Xpred}$, and
(4) let $\costalg$ be the completion time of \algdarT, which visits the requests in $\X$ by following the order of the route $T_{\Xpred}$.
To relate $\costalg$ and $\costopt $, the proof is split into the following two parts: 

\begin{lemma}\label{clm:DAR1}
$Z^{'}_{\Xpred} \leq Z^*_{\Xpred} \leq \costopt + \errortime + 2 \errorpos$. 
\end{lemma}
\begin{proof}
To relate $Z^{'}_{\Xpred}$ and $\costopt$, we consider the following route $Z^*_{\Xpred}$ of server: 
(1) the server waits at the origin $o$ for time $\errortime$ before following the route $T_\X$, and (2) after the server arrives $a_i$ (resp. $b_i$), it moves to $\hat a_i$ (resp. $\hat b_i$) and goes back to $a_i$ (resp. $b_i$) to continue the route $T_\X$. 
We argue that the server can visit all of requests in $\Xpred$ and $\X$ then return to the origin at time $\costopt + \errortime + 2 \errorpos$. 
For (1), it guarantees that the server does not have to pay any extra waiting cost compared to the optimal route. For (2), it takes $2 d(a_i, \hat a_i) + 2 d(b_i, \hat b_i)$ to serve pick up or deliver request $\hat x_i$ then go back to the route $T_\X$. Combing (1) and (2), we can derive that the server can visit all of requests in $\Xpred$ and $\X$ then return to the origin at time $Z^*_{\Xpred} \leq \costopt + \errortime + 2 \errorpos$. Finally, since $T_{\Xpred}$ is the optimal route for $\Xpred$, we must have $Z^{'}_{\Xpred} \leq Z^*_{\Xpred} $.
\end{proof}

\begin{lemma}\label{clm:DAR2}
$\costalg \leq Z'_{\Xpred} + \errortime + 2 \errorpos$. 
\end{lemma}
\begin{proof}
To relate $\costalg$ and $Z'_{\Xpred}$, we consider the following route of server: (1) the server waits at the origin $o$ for time $\errortime$ before following the route $\hat T$, and (2) after the server arrives $\hat a_i$ (resp. $\hat b_i$), it moves to $a_i$ (resp. $b_i$) and goes back to $\hat a_i$ (resp. $\hat b_i$) to continue the route $\hat T$. We argue that the server can visit all of requests in $\Xpred$ and $\X$ then return to the origin at time $Z'_{\Xpred} + \errortime + 2 \errorpos$. For (1), it ensures that $x_i$ has arrived $p_i$ when the server is at $\hat p_i$. For (2), it takes $2 d(\hat a_i, a_i) + 2 d(\hat b_i, b_i)$ to serve request $x_i$ then go back to $\hat x_i$, which is exactly the operation of the \algdarT~algorithm. Therefore, we can derive that the server can visit all of requests in $\Xpred$ and $\X$ then return to the origin at time $Z^{ALG} \leq Z'_{\Xpred} + \errortime + 2 \errorpos$. 
The implication is that the server might need to adjust the route $\hat T$ if some request arrives later than expected, or if the position error exists.
\end{proof}

Finally, combining the results of Lemma~\ref{clm:DAR1} and Lemma~\ref{clm:DAR2}, we thus can obtain

\begin{align*}
    \costalg \leq \costopt + 2 \errortime + 4 \errorpos
\end{align*}
\end{proof}

\subsection{Sequence Prediction without Identity}
The server gets a sequence prediction $\Xpred$ with a size of $m$, the predicted number of requests, and an optimal route $\hat T$ to visit the requests in $\Xpred$ in the beginning. The idea of the \algdarNI~algorithm is to delay the predicted optimal route $\hat T$ to gain robustness. 
Specifically, the server first see whether the current moment $t$ is earlier than $\lambda |\hat T|$ or not. Before $\lambda |\hat T|$, the server follows the adjusted version of the \redesign~algorithm, which has a gadget: when the route $T_{U_t}$ is too long (i.e., $t + |T_{U_t}| > \lambda |\hat T|$), the server adjust the route $T_{U_t}$ to ensure that itself can be at the origin at time $\lambda |\hat T|$.
When and after $\lambda |\hat T|$, the server gets to follow the route $\hat T$ once there exists some unserved request (i.e., $U_t \neq \emptyset$). In the end, the server uses \PAH~to visit the remaining requests.
Note that if the server has to go back to the origin $o$ before unloading some requests, it carries all of them to the origin and deliver them in the following routes.

\begin{algorithm}
    \caption{\textsc{Learning-Augmented Dial-a-Ride Without Identity (LADAR-NID)}}\label{alg:DARPLASwo}
    \begin{algorithmic}[1] 
    \Require The current time $t$, a sequence prediction $\Xpred$, the confidence level $\lambda \in (0, 1]$, and the set of current released unserved requests $U_t$.
    \State 
    First, compute an optimal route $\hat T$ to serve the requests in $\Xpred$ and return to the origin $o$;
    \State \textbf{While} $t < \lambda |\hat T|$ \textbf{do}
        \State \quad \textbf{If} the server is at the origin $o$ (i.e., $p(t) = o$.) \textbf{then}
            \State \quad \quad Compute an optimal route $T_{U_t}$ to serve all the unserved requests in $U_t$ and return
            \Statex \quad \quad to the origin $o$;
            \State \quad \quad \textbf{If} $t + |T_{U_t}| >  \lambda |\hat T|$ \textbf{then}
            \Comment{Add a gadget}
                \State \quad \quad \quad Find 
                the moment $t_{back}$ such that $t_{back} + d(p(t_{back}), o) =  \lambda |\hat T|$;
                \State \quad \quad \quad Redesign a route $T'_{U_t}$ by asking the the server to go back to the origin $o$ at time
                \Statex \quad \quad \quad $t_{back}$ along the shortest path;
                \State \quad \quad \quad Start to follow the route $T'_{U_t}$.
            \State \quad \quad \textbf{Else}
                \State \quad \quad \quad Start to follow the route $T_{U_t}$.
            \State \quad \quad \textbf{EndIf}
        \State \quad \textbf{ElseIf} the server is currently 
        moving along a route $T_{U_{t'}}$, for some $t' < t$ \textbf{then}
            \State \quad \quad \textbf{If} a new request $x_i = (t_i, a_i, b_i)$ arrives \textbf{then}
            \Comment{Similar to \redesign}
                \State \quad \quad \quad Go back to the origin $o$.
            \State \quad \quad \textbf{EndIf}
        \State \quad \textbf{EndIf}
    \State \textbf{EndWhile}
    
    \State \textbf{While} $t \geq \lambda |\hat T|$ \textbf{then}
        \State \quad Wait until $U_t \neq \emptyset$;
        \State \quad Follow the route $\hat T$ until the server is back to the origin $o$;
        \State \quad Follow \redesign~($U_t$).
        \Comment{Serve the remaining requests}
    \State \textbf{EndWhile}
    \end{algorithmic}
\end{algorithm}

\begin{lemma}\label{cor:DARPconsistency}
The \algdarNI~algorithm is $(1.5 + \lambda)$-consistent, where $\lambda \in (0,1]$.
\end{lemma}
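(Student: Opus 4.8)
The plan is to follow the proof of Lemma~\ref{lem:consistency} almost verbatim, since \algdarNI~differs from \algNI~only by replacing the \PAH~subroutine with \redesign, and the consistency argument never used which subroutine governs the warm-up phase $[0,\lambda|\hat T|)$ — it used only the gadget and the optimality of $\hat T$. First I would assume a perfect prediction, so $\Xpred=\X$ and the precomputed route $\hat T$ is an optimal DARP route with $|\hat T|=\costopt$. The opening step is to note that the gadget remains well defined: by the exact analogue of Lemma~\ref{lem:tback} (a continuity argument on $t_{back}+d(p(t_{back}),o)$, unchanged in the DARP metric), whenever $t<\lambda|\hat T|$ but $t+|T_{U_t}|>\lambda|\hat T|$ there is a moment $t_{back}$ with $t_{back}+d(p(t_{back}),o)=\lambda|\hat T|$. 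Hence the server is guaranteed to sit at the origin exactly at time $\lambda|\hat T|$, which is all the warm-up phase needs to contribute.

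I would then split into the same two cases. In Case~1, $U_{\lambda|\hat T|}\neq\emptyset$, the server departs along $\hat T$ at time $\lambda|\hat T|$; since $\hat T$ is optimal for $\Xpred=\X$, executing it to the end completes every pickup and delivery, so $\costalg=\lambda|\hat T|+|\hat T|=(1+\lambda)\costopt\le(1.5+\lambda)\costopt$. In Case~2, $U_{\lambda|\hat T|}=\emptyset$, the server idles until the next request $x_j$ arrives at time $t_j$ and then runs $\hat T$, giving $\costalg=t_j+|\hat T|$. Writing the idle window $t_j-\lambda|\hat T|$ as $t_{move}+t_{idle}$ according to whether the offline optimum is moving or waiting, the subcase $t_{idle}>0$ only helps, so the worst case is $t_{move}=t_j-\lambda|\hat T|$. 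The key estimate, identical to the TSP proof, bounds the head start the offline server accumulates by departing early: since the optimum must return home, the triangle inequality caps its distance from $o$, giving $t_{move}\le\frac{1}{2}\costopt$. Thus $\costalg=\lambda|\hat T|+t_{move}+|\hat T|\le(\lambda+\frac{1}{2}+1)\costopt=(1.5+\lambda)\costopt$. Note that the weaker $2.5$-competitiveness of \redesign~never enters, which is why consistency stays at $1.5+\lambda$ while only robustness degrades.

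The one place needing genuine DARP-specific care — the step I expect to be the main (if modest) obstacle — is verifying that the two features distinguishing DARP from TSP do not damage this argument. First, each request now carries a pickup and a delivery position, but this is harmless because the analysis uses only $|\hat T|=\costopt$, not the internal structure of $\hat T$, and following an optimal \emph{DARP} route from the origin serves all pickups and deliveries. Second, the rule that the server carries undelivered requests back to $o$ upon interruption could leave it loaded at time $\lambda|\hat T|$; here I would argue that the bound only requires an \emph{upper} estimate, and under a perfect prediction the $n$ actual requests coincide exactly with the predicted ones, so executing the full feasible route $\hat T$ from the origin is always a valid way to complete the instance at cost $|\hat T|=\costopt$, regardless of what the warm-up phase did. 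Finally, I would record that the $\frac{1}{2}\costopt$ head-start bound is precisely the DARP analogue of Lemma~\ref{lem:1/2}: the offline optimum starts and ends at $o$, so at every instant its distance from the origin is at most $\frac{1}{2}\costopt$, which is exactly the quantity controlling $t_{move}$.
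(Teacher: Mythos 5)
Your proposal is correct and follows essentially the same route as the paper's proof: the identical two-case split on $U_{\lambda|\hat T|}$, the $t_{move}/t_{idle}$ decomposition of the waiting window, and the $t_{move}\le\frac{1}{2}\costopt$ triangle-inequality bound, with the $2.5$-competitiveness of \redesign{} playing no role. Your extra paragraph verifying the DARP-specific issues (pickup/delivery pairs and carrying undelivered requests back to $o$) is a small bonus the paper leaves implicit, but it does not change the argument.
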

\begin{proof}
The server follows the adjusted version of the \redesign~algorithm at time $0$ and returns to the origin at time $\lambda |\hat T|$. At time $\lambda |\hat T|$, the server might follows the predicted route immediately if $U_{\lambda |\hat T|} \neq \emptyset$ or waits until the next request arrives. 
Assuming the prediction is perfect, i.e., $\Xpred = \X$ and $|\hat T| = \costopt$, the server completes serving requests after following $\hat T$.
Then, we consider the following cases:

\begin{itemize}
    \item {1. $U_{\lambda |\hat T|} \neq \emptyset$.} The server starts to follow route $\hat T$ at time $\lambda |\hat T|$, and the completion time is $\costalg = \lambda |\hat T| + |\hat T| \leq (1 + \lambda) \costopt$. 
    \item {2. $U_{\lambda |\hat T|} = \emptyset$.} After time $\lambda |\hat T|$, the server waits until the next request to come. Let $x_j = (t_j, p_j)$ denote this request. The completion time can be stated as $\costalg = \lambda |\hat T| + (t_j - \lambda |\hat T|) + |\hat T|$. Note that the online server waits at the origin $o$ for  $t_j - \lambda |\hat T|$. Then, we consider the duration $t_j - \lambda |\hat T|$ as the sum of the two time intervals according to the operations of the offline optimal server, i.e., OPT: $t_{move}$ denotes how much time the OPT is moving, and $t_{idle}$ denotes the length of time the OPT has nothing to do and can only wait. Since the server operated by the algorithm is also not visiting any request during $t_{idle}$, the case $t_{idle} \neq 0$ favors the algorithm. Thus, we can simply assume the worst case $t_{idle} = 0$. In addition, we know the distance the OPT saves during $t_{move}$ by moving earlier than the online server cannot be longer than $\frac{1}{2} \costopt$. Then, we have $t_{move} \leq \frac{1}{2} \costopt$, and the completion time is $\costalg \leq \lambda |\hat T| + t_{move} + |\hat T| \leq (1.5 + \lambda) \costopt$.
\end{itemize}
\end{proof}

\begin{lemma}\label{cor:DARProbustness}
The \algdarNI~algorithm is $(3.5 + 2.5 / \lambda)$-robust, where $\lambda \in (0,1]$.
\end{lemma}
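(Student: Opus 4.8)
The plan is to adapt the robustness analysis of \algNI~(Lemma~\ref{lem:robustness}) almost verbatim, replacing every appeal to the $2$-competitiveness of \PAH~by the $2.5$-competitiveness of \redesign~(Theorem~\ref{thm:replan}) and its delayed-start variant (Corollary~\ref{cor:replan}). As in that proof, I would first record the two elementary bounds that drive everything: $t_i \le \costopt$ for every request (the completion time is at least the last arrival time) and $|T_{U_t}| \le \costopt$ for every $t$, since an optimal route serving the released unserved subset $U_t$ without release-time constraints cannot exceed the offline optimum that must serve all of $\X$ under release times. For the OLDARP I would also note that whenever the server is forced back to the origin it carries all loaded requests home, so the re-planned routes are genuine \redesign~routes and these two bounds still apply to $U_t$.

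Next I would split on whether the server finishes before time $\lambda|\hat T|$. If it does, the gadget never triggers, the algorithm coincides with \redesign, and Theorem~\ref{thm:replan} gives $\costalg \le 2.5\,\costopt$. Otherwise I claim $\lambda|\hat T| \le 2.5\,\costopt$: were $\lambda|\hat T| > 2.5\,\costopt$, pure \redesign~would already serve every request (all arrive by $t_n \le \costopt$) and terminate before $\lambda|\hat T|$, so no planned route could meet the gadget condition $t + |T_{U_t}| > \lambda|\hat T|$ and the server would in fact finish early, a contradiction. This yields the crucial estimate $|\hat T| \le 2.5\,\costopt/\lambda$.

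I would then carry out the same four subcases as in Lemma~\ref{lem:robustness}, keyed on whether $U_{\lambda|\hat T|}$ is empty and on how late $t_n$ arrives. In the two subcases where the server reaches home (at time $(1+\lambda)|\hat T|$, or at $t_j + |\hat T|$ after waiting for the first post-$\lambda|\hat T|$ request $x_j$) strictly before $t_n$, the remaining execution is \redesign~started after a wait of at most $t_n$, so Corollary~\ref{cor:replan} gives $\costalg \le 2.5\,\costopt$; serving some requests early along $\hat T$ only shortens the remaining work, which legitimizes this application. The binding subcase is $U_{\lambda|\hat T|}\neq\emptyset$ with $t_n \le (1+\lambda)|\hat T|$, where
\begin{align*}
\costalg &\le \lambda|\hat T| + |\hat T| + |T_{U_{(1+\lambda)|\hat T|}}| \\
&\le 2.5\,\costopt + (2.5/\lambda)\,\costopt + \costopt = (3.5 + 2.5/\lambda)\,\costopt,
\end{align*}
while the remaining subcase gives only the weaker $(2 + 2.5/\lambda)\,\costopt$; taking the maximum over all cases proves the claimed bound.

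I expect the main obstacle to be the OLDARP-specific bookkeeping rather than the arithmetic: I must verify that interrupting a route and hauling all carried requests back to the origin preserves $|T_{U_t}| \le \costopt$ and lets me treat the post-$\lambda|\hat T|$ phase as a clean delayed \redesign, so that Corollary~\ref{cor:replan} applies even though the server has already partially served the instance along the predicted route $\hat T$.
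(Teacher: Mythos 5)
Your proposal is correct and takes essentially the same approach as the paper's proof: the identical case split (the server finishes before $\lambda|\hat T|$ versus not, followed by the same four subcases keyed on $U_{\lambda|\hat T|}$ and $t_n$), with Theorem~\ref{thm:replan} and Corollary~\ref{cor:replan} substituted for the \PAH~bounds exactly as the paper does, and with the binding subcase giving $\lambda|\hat T| + |\hat T| + |T_{U_{(1+\lambda)|\hat T|}}| \leq (3.5 + 2.5/\lambda)\,\costopt$. Your explicit contradiction argument for $\lambda|\hat T| \leq 2.5\,\costopt$ merely spells out what the paper leaves implicit, and your looser $(2 + 2.5/\lambda)$ in the last subcase (the paper gets $(2+1/\lambda)$ via $\lambda|\hat T| < t_j \leq \costopt$) is immaterial since both are dominated by $(3.5 + 2.5/\lambda)$.
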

\begin{proof}
We show the robustness is $3.5 + 2.5 / \lambda$ by the following cases.

\begin{itemize}
    \item {1. The server finishes before $\lambda |\hat T|$.} The server does what the original \redesign~algorithm would do. Thus, by Theorem~\ref{thm:replan}, the server obtains the bound $\costalg \leq 2.5 \costopt$.
    \item {2. The server cannot finish before $\lambda |\hat T|$.} After following $|\hat T|$, we cannot guarantee all requests in $\X$ have been served. Therefore, the server switches to the \redesign~algorithm to visit the remaining requests.
        \begin{itemize}
            \item{2-1. $U_{\lambda |\hat T|} \neq \emptyset$ and $t_n > (1 + \lambda)|\hat T|$.} We have $\costalg \leq 2.5 \costopt$ by Corollary~\ref{cor:replan}.
            \item{2-2. $U_{\lambda |\hat T|} \neq \emptyset$ and $t_n \leq (1 + \lambda)|\hat T|$.} The completion time is 
            \begin{align*}
                \costalg & \leq \lambda |\hat T| + |\hat T| + |T_{U_{(1+ \lambda)|\hat T|}}|
                \\& \leq 2.5 \costopt + (2.5 / \lambda) \costopt + \costopt
                \\& \leq (3.5 + 2.5 / \lambda) \costopt
            \end{align*}
            \item{2-3. $U_{\lambda |\hat T|} = \emptyset$  and $t_n > t_j + |\hat T|$.} Similarly, we have $\costalg \leq 2.5 \costopt$ by Corollary~\ref{cor:replan}.
            \item{2-4. $U_{\lambda |\hat T|} = \emptyset$  and $t_n \leq t_j + |\hat T|$.} The server completes visiting the requests at
            \begin{align*}
                \costalg & \leq t_j + |\hat T| + |T_{U_{t_j + |\hat T|}}|
                \\& \leq \costopt + (1 / \lambda) \costopt + \costopt
                \\& \leq (2 + 1 / \lambda) \costopt
            \end{align*}
        \end{itemize}
\end{itemize}
\end{proof}

\begin{theorem}\label{thm:DARPLASwo}
The~\algdarNI~algorithm is $(1.5 + \lambda)$-consistent and $(3.5 + 2.5 / \lambda)$-robust but not smooth, where $\lambda \in (0,1]$ is the confidence level.
\end{theorem}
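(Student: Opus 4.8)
The plan is to assemble the theorem from results already established for $\algdarNI$, since the statement is precisely the conjunction of the two preceding lemmas together with a non-smoothness observation. The consistency bound is exactly Lemma~\ref{cor:DARPconsistency}, which shows $\algdarNI$ is $(1.5+\lambda)$-consistent, and the robustness bound is exactly Lemma~\ref{cor:DARProbustness}, which shows it is $(3.5+2.5/\lambda)$-robust. So my first step is to observe that both lemmas analyze the \emph{same} algorithm run with a fixed confidence parameter $\lambda\in(0,1]$: the consistency lemma covers the perfect-prediction regime $\Xpred=\X$, while the robustness lemma covers the worst case over all inputs. Invoking both therefore yields the two positive guarantees simultaneously, exactly paralleling how Theorem~\ref{thm:LASwo} for the OLTSP combines Lemma~\ref{lem:consistency} and Lemma~\ref{lem:robustness}.

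For the remaining claim, that $\algdarNI$ is \emph{not} smooth, I would argue as in the corresponding OLTSP model. Recall that in the sequence-prediction-without-identity setting the prediction error is not captured by any continuous parameter; the only meaningful distinction is the binary event $\Xpred=\X$ versus $\Xpred\neq\X$, because $\Xpred$ and $\X$ may differ in cardinality. By Definition~\ref{def:smo}, smoothness would demand a continuous $f$ with $f(0)=0$ satisfying $\costalg\le(1.5+\lambda)\costopt+f(\varepsilon)$ on every input, where $f$ depends only on the error and not on $\costopt$. The key point I would exploit is that the additive slack the algorithm can incur once $\Xpred\neq\X$ scales with $\costopt$ rather than remaining bounded: the robustness analysis shows the gap $\costalg-(1.5+\lambda)\costopt$ can be as large as $(2+2.5/\lambda-\lambda)\costopt$, and all such instances share the single ``error level'' $\Xpred\neq\X$. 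To make this concrete I would take the real-line construction used in the $\algNI$ lower bound (Theorem~\ref{thm:lb1}), adapted to the OLDARP with pickup/delivery positions, and scale it by an arbitrary factor so that $\costopt\to\infty$ while the prediction merely mis-forecasts the number of requests; then $\costalg-(1.5+\lambda)\costopt\to\infty$, so no finite value $f(\varepsilon)$ can dominate the additive term, ruling out smoothness.

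The step requiring the most care is not any computation but the conceptual matter of interpreting Definition~\ref{def:smo} in a model whose error is not continuously parameterized. I would emphasize that smoothness is defined relative to a fixed numerical error quantity, and that here the only sensible such quantity is binary, so the non-smoothness claim reduces to exhibiting an instance family on which $\costalg/\costopt$ stays bounded away from the consistency ratio $1.5+\lambda$ while $\costopt$ grows without bound. I would also verify that the delay gadget in $\algdarNI$, which forces a return to the origin at time $\lambda|\hat T|$, cannot repair smoothness: the time it wastes is tied to $|\hat T|$ and hence to $\costopt$, not to any vanishing error quantity, so the discontinuous jump in the cost ratio persists.
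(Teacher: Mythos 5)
Your proof matches the paper's: the paper's entire argument for Theorem~\ref{thm:DARPLASwo} is the one-line combination of Lemma~\ref{cor:DARPconsistency} and Lemma~\ref{cor:DARProbustness}, exactly as in your first paragraph, and the non-smoothness claim is never argued there at all --- the paper treats it as immediate from the model, since without identity no quantitative error measure exists (only the binary event $\Xpred=\X$ versus $\Xpred\neq\X$), so Definition~\ref{def:smo} cannot be satisfied with a vanishing additive term. Your elaboration of the non-smoothness claim is therefore extra material beyond the paper, and its general shape (binary error, additive gap scaling with $\costopt$ under metric rescaling) is the right one; however, your concrete witness is flawed: the construction of Theorem~\ref{thm:lb1} is a lower bound against \emph{$1$-consistent} algorithms, which must follow the predicted route from time $0$, whereas \algdarNI~delays $\hat T$ until time $\lambda|\hat T|$ and on (a scaled copy of) that instance would simply serve the lone actual request during its \redesign~phase at ratio about $1.5$, incurring no divergent gap. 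If you want an honest divergent family you must tailor it to the algorithm --- e.g., a prediction forcing a long $\hat T$ with the actual requests arriving just after $\lambda|\hat T|$, so the server traverses all of $\hat T$ in vain --- and even then one should check the gap $\costalg-(1.5+\lambda)\costopt$ grows for the given $\lambda$, since the constraint $\costopt\gtrsim\lambda|\hat T|$ arising from late arrivals caps the achievable ratio and the simple instance of this type only yields ratio about $1+1/\lambda$, which exceeds $1.5+\lambda$ only for $\lambda$ below roughly $0.78$. None of this contradicts the paper, which asserts non-smoothness without proof, but your sketch as written would not compile into a complete argument for all $\lambda\in(0,1]$.
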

\begin{proof}
We completes the proof by combining Lemma~\ref{cor:DARPconsistency} and Lemma~\ref{cor:DARProbustness}.
\end{proof}

\subsection{Sequence Prediction with Identity}
The server gets a sequence prediction $\Xpred$, which has the same size $n$ as the actual sequence $\X$. Since the arrival time of the last request provides a lower bound of the optimal completion time, the server can follow the route $\hat T$ earlier than in the previous model and still achieve robustness. Note that the route $\hat T$ can be described as a sequence of positions including $\hat a_i$ and $\hat b_i$ for any $i$. Before time $t_n$, the server follows the route $\hat T$ while making necessary adjustments: (1) if the server has reached the predicted position $\hat p_i$ but the request $x_i$ has not arrived, the server waits at $\hat p_i$ until $x_i$ comes, (2) when the request $x_i = (t_i, a_i, b_i)$ arrives, the server inserts position $a_i$ after $\hat a_i$ and position $b_i$ after $\hat b_i$ in the route $\hat T$. When the request $x_n$ arrives , the server knows it is the last one since the number of requests $n$ is given. Thus, the server weighs its two options: one is to continue on the route $\hat T$, whose distance is denoted as $r_1$, and the other is to go back to the origin and design a new route, denoted as $r_2$. Then, the server chooses the shorter one. 

\begin{algorithm}
    \caption{\textsc{Learning-Augmented Dial-a-Ride With Identity (LADAR-ID)}}\label{alg:DARPLASw}
    \begin{algorithmic}[1] 
    \Require The current time $t$, the number of requests $n$, a sequence prediction $\Xpred$, and the set of current released unserved requests $U_t$.
    \State $F = 0$;
    \Comment{Initialize $F=0$ to indicate that we trust the prediction at first}
    \State 
    First, compute an optimal route $\hat T = (\hat p_{(1)}, \ldots, \hat p_{(2n)})$ to serve the requests in $\Xpred$ and return to the origin $o$, where $\hat p_{(i)} \in \{\hat a_1, \ldots, \hat a_n, \hat b_1, \ldots, \hat b_n\}$ is the $i^{th}$ position the server reaches in $\hat T$.
    \State Start to follow the route $\hat T$;
    \State \textbf{While} $F = 0$ \textbf{do}
    \Comment{Trust the prediction}
        \State \quad \textbf{If} $t = t_{(i)}$ for any $i$, where $x_{(i)} = (t_{(i)}, a_{(i)}, b_{(j)})$ \textbf{then}
            \State \quad \quad Update the route $\hat T$ by adding positions $a_{(i)}$ and $b_{(j)}$ after their corresponding
            \Statex \quad \quad positions $\hat p_{(i)}$ and $\hat p_{(j)}$;
            \Comment{Update the route}
            \State \quad \quad \textbf{If} $t = t_n$ \textbf{then}
            \Comment{Find the shorter route}
                \State \quad \quad \quad $r_1 \leftarrow$ the remaining distance of following $\hat T$;
                \State \quad \quad \quad Compute a route $T_{U_{t_n}}$ to start and finish at the origin $o$ and serve the requests in
                \Statex \quad \quad \quad $U_{t_n}$;
                \State \quad \quad \quad $r_2 \leftarrow d(p(t), o) + |T_{U_{t_n}}|$;
                \State \quad \quad \quad \textbf{If} $r_1 > r_2$ \textbf{then}
                    \State \quad \quad \quad \quad Go back to the origin $o$;
                    \Comment{Give up the predicted route}
                    \State \quad \quad \quad \quad $F = 1$.
                \State \quad \quad \quad \textbf{Else}
                    \State \quad \quad \quad \quad 
                    Move ahead on the current route $\hat T$.
                \State \quad \quad \quad \textbf{EndIf}
            \State \quad \quad \textbf{Else}
                \State \quad \quad \quad 
                Move ahead on the current route $\hat T$.
            \State \quad \quad \textbf{EndIf}
        \State \quad \textbf{EndIf}
        
        \State \quad \textbf{If} $p(t) = \hat p_{(i)}$ and $t < t_{(i)}$ \textbf{then}
            \State \quad \quad Wait at position $\hat p_{(i)}$ until time $t_{(i)}$.
            \Comment{Wait until the request arrives}
        \State \quad \textbf{EndIf}
    \State \textbf{EndWhile}
    \State \textbf{While} $F = 1$ \textbf{do}
    \Comment{Do not rust the prediction}
        \State \quad Start to follow the route $T_{U_{t_n}}$ to serve the requests in $U_{t_n}$.
    \State \textbf{EndWhile}
    \end{algorithmic}
\end{algorithm}

\begin{theorem}\label{thm:DARPLASw}
The competitive ratio of the~\algdarI~algorithm
is $\min \{3, 1 + (2 \errortime + 4 \errorpos) / \costopt\}$. Thus, the algorithm is $1$-consistent, $3$-robust, and $(2 \errortime + 4 \errorpos)$-smooth.
\end{theorem}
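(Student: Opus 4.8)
The plan is to mirror the proof of Theorem~\ref{thm:LASw} almost verbatim, since \algdarI~is the dial-a-ride analogue of \algI. The server commits to the optimal predicted route $\hat T$, now read as an ordering of the $2n$ pickup and delivery positions, inserts each actual pickup/delivery pair after its predicted counterpart as requests are released, and at time $t_n$ compares the remaining length $r_1$ of the adjusted $\hat T$ against $r_2 = d(p(t_n),o) + |T_{U_{t_n}}|$, committing to whichever is shorter. The competitive ratio then follows by bounding each branch separately and taking the minimum. Throughout I would use the two natural bounds $t_n \le \costopt$ and $|T_{U_{t_n}}| \le \costopt$, the latter because an optimal ride serving a released subset while ignoring release times cannot exceed the full offline optimum.

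For the branch $r_1 \le r_2$ the server never abandons $\hat T$, so it behaves exactly like \algdarT; I would invoke Theorem~\ref{lem:DARPnaive} directly to get $\costalg \le \costopt + 2\errortime + 4\errorpos$. Specializing to $\errortime = \errorpos = 0$ (a perfect prediction makes $\hat T$ optimal and forces $r_1 \le r_2$) gives $\costalg = \costopt$, i.e. $1$-consistency, while the general inequality supplies the $(2\errortime + 4\errorpos)$ smoothness term. For the branch $r_1 > r_2$ the server returns to the origin, arriving at time $t_n + d(p(t_n),o)$, and then runs the offline-optimal route $T_{U_{t_n}}$, so $\costalg = t_n + d(p(t_n),o) + |T_{U_{t_n}}|$. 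Combining $t_n \le \costopt$, the unit-speed estimate $d(p(t_n),o) \le t_n \le \costopt$, and $|T_{U_{t_n}}| \le \costopt$ yields $\costalg \le 3\costopt$, giving $3$-robustness. Taking the minimum of the two branches produces the claimed ratio $\min\{3,\, 1 + (2\errortime + 4\errorpos)/\costopt\}$.

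The only genuinely dial-a-ride-specific point, and the one I expect to need the most care, is the non-preemptive constraint: when the server gives up $\hat T$ and heads home it may be carrying already-picked-up-but-undelivered requests, and $T_{U_{t_n}}$ must still deliver them. I would argue this is harmless because capacity is unbounded, so every carried load travels home together at no cost beyond the $d(p(t_n),o)$ already accounted for, and each such request remains in $U_{t_n}$ (its delivery is unfinished), so $T_{U_{t_n}}$ plans that delivery and the bound $|T_{U_{t_n}}| \le \costopt$ is unaffected. The remaining thing to verify is that the two-position insertion bookkeeping underlying Theorem~\ref{lem:DARPnaive} (pickup inserted after $\hat a_i$, delivery after $\hat b_i$) cleanly delivers the smoothness branch, but since that theorem is already established with the redefined $\errorpos = \sum_i [\,d(\hat a_i,a_i) + d(\hat b_i,b_i)\,]$, it can be cited as a black box and the argument closes exactly as in the OLTSP case.
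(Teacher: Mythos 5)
Your proposal is correct and follows essentially the same route as the paper's proof: the identical two-branch analysis at time $t_n$, citing Theorem~\ref{lem:DARPnaive} for the $r_1$ branch and combining $t_n \leq \costopt$, $d(p(t_n),o) \leq t_n$, and $|T_{U_{t_n}}| \leq \costopt$ for the $r_2$ branch. Your explicit treatment of the non-preemptive carried-load issue (shortcutting the pickups of already-carried requests so that $|T_{U_{t_n}}| \leq \costopt$ survives) is a point the paper handles only implicitly, and it is a welcome addition rather than a deviation.
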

\begin{proof}
The server follows an optimal route $\hat T$ before time $t_n$ while making adjustments. Then, at time $t_n$, it consider two available options. Note that $t_n \leq \costopt$ and $|T_{U_t}| \leq \costopt$ for any $t$.
\begin{itemize}
    \item{1. $r_1$.} The server continues on the route $\hat T$. This implies that the server follows the route $\hat T$ from beginning to end. By Theorem~\ref{lem:DARPnaive}, the completion time of the algorithm can be bounded by $\costalg \leq \costopt + 2 \errortime + 4 \errorpos$.
    \item{2. $r_2$.} The server goes back to the origin $o$ at time $t_n + d(p(t_n), o)$ and starts to follow a new route $T$, which visits all unserved requests in $U$. Since no more request arrives after $t_n$, the server completes at time $\costalg = t_n + d(p(t_n), o) + |T_{U_{t_n}}|$. In addition, the distance between the server and the origin is bounded by $d(p(t_n), o) \leq t_n$. Combined with the two inequalities mentioned above, we obtain the result $\costalg \leq 3 \costopt$.
\end{itemize}
As the server choose the shorter route among $r_1$ and $r_2$, the competitive ratio can be stated as $\min \{3, 1+(2 \errortime + 4 \errorpos) / \costopt\}$.

\end{proof}

\subsection{Prediction of the Last Arrival Time}
The idea is to be at the origin $o$ and starts the final route when the last request $x_n$ arrives. Thus, the server is return to the origin $o$ at the predicted last arrival time $\hatt_n$ and follow the \redesign~algorithm for the rest of the time. Whenever the server is at the origin $o$, it finds a route $T_{U_t}$ to visit the requests in $U_t$. If the route $T_{U_t}$ is too long, the server designs a shorter route $T'_{U_t}$ which would allow itself to be at the origin at time $\hatt_n$; otherwise, the server follows the route $T_{U_t}$ directly. If any request $x_i = (t_i, a_i, b_i)$ arrives, the server goes back to the origin $o$ and finds a new route.

\begin{algorithm}
    \caption{\textsc{Learning-Augmented Dial-a-Ride With Last Arrival Time (LADAR-Last)}}\label{alg:DARPLAM}
    \begin{algorithmic}[1] 
    \Require The current time $t$, the predicted last arrival time $\hatt_n$, and the set of current released unserved requests $U_t$.
    \State \textbf{While} $U_t \neq \emptyset$ \textbf{do}
    \State \quad \textbf{If} the server is at the origin $o$ (i.e., $p(t) = o$.) \textbf{then}
        \State \quad \quad Compute an optimal route $T_{U_t}$ to serve the requests in $U_t$ and return to the origin $o$;
        \State \quad \quad \textbf{If} $t < \hatt_n$ and $t + |T_{U_t}| > \hatt_n$ \textbf{then}
        \Comment{Add a gadget}
            \State \quad \quad \quad Find 
            the moment $t_{back}$ such that $t_{back} + d(p(t_{back}), o) = \hatt_n$;
            \State \quad \quad \quad Redesign a route $T'_{U_t}$ by asking the the server to go back to the origin $o$ at time
            \Statex \quad \quad \quad $t_{back}$ along the shortest path;
            \State \quad \quad \quad Start to follow the route $T'_{U_t}$.
        \State \quad \quad \textbf{Else}
            \State \quad \quad \quad Start to follow the route $T_{U_t}$.
        \State \quad \quad \textbf{EndIf}
    \State \quad \textbf{ElseIf} the server is currently 
    moving along a route $T_{U_{t'}}$, for some $t' < t$ \textbf{then}
        \State \quad \quad \textbf{If} a new request $x_i = (t_i, a_i, b_i)$ arrives \textbf{then}
        \Comment{Similar to \redesign}
            \State \quad \quad \quad Go back to the origin $o$.
        \State \quad \quad \textbf{EndIf}
    \State \quad \textbf{EndIf}
    \State \textbf{EndWhile}
    \end{algorithmic}
\end{algorithm}

\begin{theorem}\label{thm:DARPLAM}
The~\algdarL~algorithm
is a $\min \{ 3.5, 2 + |\errorlast| / \costopt \}$-competitive algorithm, where $\errorlast = \hatt_n - t_n$. Therefore, this algorithm
is $2$-consistent, $3.5$-robust, and $|\errorlast|$-smooth.
\end{theorem}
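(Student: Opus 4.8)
The plan is to mirror the proof of the OLTSP counterpart, Theorem~\ref{thm:LAM}, making the two substitutions that distinguish the dial-a-ride setting: the final route $T_{U_t}$ is now an \emph{optimal} offline DARP route (no Christofides' heuristic), so $|T_{U_t}| \le \costopt$ for every $t$ rather than $|T_{U_t}| \le 1.5\costopt$; and the underlying pure online strategy \redesign~is $2.5$-competitive by Theorem~\ref{thm:replan} (with Corollary~\ref{cor:replan} covering a postponed start) rather than $3$-competitive. Before the case analysis I would record the three standing bounds $t_n \le \costopt$, $|T_{U_t}| \le \costopt$, and the distance bound $d(p(t),o) \le \frac{1}{2}\costopt$. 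The last is the DARP analogue of Lemma~\ref{lem:1/2}; its proof is purely geometric, since whenever the server travels between two consecutive route positions $a,b$ one has $2d(p(t),o) \le d(a,o)+d(a,b)+d(b,o) \le |T^*| \le \costopt$, where $T^*$ is the optimal offline DARP route for the currently served requests, so it carries over verbatim. The only extra remark is that when the server must return to $o$ while carrying requests it simply ferries them home under the unlimited capacity assumption and redelivers them on the last route, so the completion-time formulas are unchanged.

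I would then split on the sign of $\errorlast = \hat t_n - t_n$, exactly as in Theorem~\ref{thm:LAM}. In Case~1, $\hat t_n \le t_n$, the server is at the origin at time $\hat t_n$ and thereafter behaves as \redesign, so $\costalg = t_n + d(p(t_n),o) + |T_{U_{t_n}}|$. Using $d(p(t_n),o)\le\frac{1}{2}\costopt$ gives the robust bound $\costalg \le \costopt + \frac{1}{2}\costopt + \costopt = 2.5\costopt \le 3.5\costopt$, while using instead $d(p(t_n),o)\le t_n-\hat t_n = -\errorlast$ gives the smooth bound $\costalg \le 2\costopt + |\errorlast|$, where $\errorlast \le 0$ here.

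Case~2, $\hat t_n > t_n$, is the substantive part and the one I expect to be the main obstacle, since the gadget can force the server back to $o$ even though no request remains to justify the return. Here $\costalg \le \hat t_n + |T_{U_{\hat t_n}}|$ because no request arrives after $\hat t_n$ and the server is at $o$ at that time. Letting $t_L$ be the last planning time before $\hat t_n$, $T^L$ the route planned there, and $T'^L$ its (possibly truncated) replacement with $t_L + |T'^L| = \hat t_n$ and $|T'^L| \le |T^L| \le \costopt$, I would treat three subcases. If no route ever needed truncation the algorithm is literally \redesign, so $\costalg \le 2.5\costopt$. If $T^L$ is truncated but every request has arrived by $t_L$, the only overhead over \redesign~is the wasted return, of length at most $2d(p(t_{back}),o) \le \costopt$, yielding the tight $\costalg \le 2.5\costopt + \costopt = 3.5\costopt$. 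If $T^L$ is truncated with a request still unknown at $t_L$, then $\costalg \le t_L + |T'^L| + |T_{U_{\hat t_n}}| \le \costopt + \costopt + \costopt = 3\costopt$. Finally, substituting $\hat t_n = t_n + \errorlast$ gives the smooth bound $\costalg \le t_n + \errorlast + |T_{U_{\hat t_n}}| \le 2\costopt + \errorlast = 2\costopt + |\errorlast|$ for this case. Collecting the robust bounds ($2.5\costopt$ in Case~1, $3.5\costopt$ in Case~2) and the smooth bounds ($2\costopt + |\errorlast|$ in both cases) proves $\costalg \le \min\{3.5\,\costopt,\; 2\,\costopt + |\errorlast|\}$, i.e. the algorithm is $2$-consistent, $3.5$-robust, and $|\errorlast|$-smooth.
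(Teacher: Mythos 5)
Your proposal is correct and takes essentially the same route as the paper's proof: the same split on the sign of $\errorlast$, the same Case~1 bounds $\costalg \le 2.5\,\costopt$ and $\costalg \le 2\,\costopt + |\errorlast|$, and the same Case~2 subcase analysis via the last planning time $t_L$ and the truncated route $T'^L$, yielding $2.5\,\costopt$, $3.5\,\costopt$, $3\,\costopt$, and the smooth bound $2\,\costopt + \errorlast$. Your only addition is spelling out why the distance bound of Lemma~\ref{lem:1/2} extends to the DARP setting, which the paper merely asserts; that extra detail is harmless and correct.
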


\begin{proof}
We consider the server's position when the last request $x_n$ arrives. We distinguish between two main cases depending on the predicted and actual last arrival time, $\hatt_n$ and $t_n$. 
Note that we get $t_n \leq \costopt$ trivially.

\begin{itemize}
    \item {1. $\hatt_n \leq t_n$.} The server is at the origin at time $\hatt_n$. After then, it receives some request and returns to the origin only when a request arrives or a route is completed. Therefore, the completion time is $\costalg = t_n + d(p(t_n), o) + |T_{U_{t_n}}|$, which leads to $\costalg \leq 2 \costopt + d(p(t_n), o)$. Note that we can extend Lemma~\ref{lem:1/2} to the~\redesign~algorithm
    on the DARP.
        \begin{itemize}
            \item {1-1.} By Lemma~\ref{lem:1/2}, we get $d(p(t_n), o) \leq 0.5 \costopt$ and our first bound $\costalg \leq 2.5 \costopt$. 
            \item {1-2.} The farthest position the server can be at time $t_n$ is $d(p(t_n), o) < t_n - \hatt_n = |\errorlast|$, since the server it at the origin at time $\hatt_n$. Our second bound for case 1 is $\costalg \leq 2 \costopt + |\errorlast|$.
        \end{itemize}
    \item {2. $\hatt_n > t_n$.} 
    Since the last request arrives before time $\hatt_n$ and the server could start its final route at $\hatt_n$ (if needed), we have $\costalg \leq \hatt_n + T_{U_{\hatt_n}}$.
    Let $t_L$ be the last time before $\hatt_n$ that a route is planned and $T^L$ be the original route planned at time $t_L$. Let $T'^L$ be the new route if $T^L$ is adjusted and equal to $T^L$ if not. Note that $t_L + |T'^L| \leq \hatt_n$ due to the choice of $t_{back}$.
        \begin{itemize}
            \item {2-1. $t_L + |T^L| < \hatt_n$.} The server finishes serving all requests in $\X$ before time $\hatt_n$ and thus $\costalg \leq \hatt_n$. In this case, since the algorithm runs in the same way as the \redesign~algorithm,
            we know that $\costalg \leq 2.5 \costopt$ by Theorem~\ref{thm:replan} still holds.
            \item {2-2. $t_L + |T^L| \geq \hatt_n$ and $t_n \leq t_L$.} The route $T^L$ is too long, and the algorithm find a new route $T'^L$ to let the server be at the origin $o$ at time $\hatt_n$. Since every request in $\X$ arrives before $t_L$, the server could have finished visiting the requests by following the original route $T^L$. However, the server chooses the adjusted route $T'^L$ instead. The additional cost of moving back to the origin is at most $2 d(p(t_{back}), o)$. Thus, by Lemma~\ref{lem:1/2} and Lemma~\ref{lem:replan}, the completion time is $\costalg \leq 2.5 \costopt + 2 d(p(t_{back}), o) \leq 3.5 \costopt$.
            \item {2-3. $t_L + |T^L| \geq \hatt_n$ and $t_n > t_L$.} The route $T^L$ is too long, and at least one request is unknown at time $t_L$.
            The server has to start a new route at $\hatt_n$, which is the last route since all requests have arrived by then.  Noting the route $T'^L$ also satisfies $|T'^L| \leq \costopt$, the completion time is 
            \begin{align*}
                \costalg &\leq \hatt_n + |T_{U_{\hatt_n}}|
                \\& = t_L + |T'^L| + |T_{U_{\hatt_n}}|
                \\& \leq t_n + |T'^L| + \costopt
                \\& \leq 3 \costopt
            \end{align*}
            Combining with the results in 2-1 and 2-2, we have our first bound $\costalg \leq 3.5  \costopt$.
            \item {2-4.} The server might have not reached the origin at time $t_n$. However, if the error $\errorlast$ is small, it must be close to the origin $o$ because of $t_{back}$. 
            Formally, we have $\costalg \leq t_n + \errorlast + |T_{U_{\hatt_n}}|$ by replacing $\hatt_n$ with $t_n + \errorlast$, which leads to our second bound $\costalg \leq 2 \costopt + \errorlast$.
        \end{itemize}
\end{itemize}

Both cases have shown $\costalg \leq \min \{3.5 \costopt, 2 \costopt + |\errorlast|\}$.

\end{proof}

\end{document}